\documentclass[letterpaper, 11pt]{article}
\usepackage{booktabs} %
\usepackage[ruled]{algorithm2e} %

\SetAlFnt{\small}
\SetAlCapFnt{\small}
\SetAlCapNameFnt{\small}
\SetAlCapHSkip{0pt}
\IncMargin{-\parindent}

\usepackage[leqno]{amsmath}
\usepackage[pagebackref=true]{hyperref}
\hypersetup{
    colorlinks=true,
    linkcolor=red,
    citecolor=blue,
    urlcolor=teal
}
\usepackage{cleveref}
\usepackage[dvipsnames]{xcolor}
\numberwithin{equation}{section}
\usepackage{amsthm}
\usepackage{amssymb}
\usepackage{mathtools}
\usepackage{thmtools}
\usepackage{bbm}
\usepackage{nameref}
\usepackage{commands}
\usepackage{natbib}
\allowdisplaybreaks
\usepackage[margin=1in]{geometry}
\usepackage{xurl}
\usepackage{authblk}
\usepackage{makecell}

\theoremstyle{plain}
\newtheorem*{theorem*}{Theorem}
\newtheorem*{lemma*}{Lemma}
\newtheorem*{proposition*}{Proposition}
\newtheorem*{claim*}{Claim}
\newtheorem{theorem}{Theorem}[section]
\newtheorem{lemma}[theorem]{Lemma}
\newtheorem{proposition}[theorem]{Proposition}
\newtheorem{corollary}[theorem]{Corollary}
\newtheorem{claim}[theorem]{Claim}
\crefname{claim}{claim}{claims}
\Crefname{claim}{Claim}{Claims}

\theoremstyle{definition}
\newtheorem{definition}[theorem]{Definition}

\theoremstyle{remark}
\newtheorem{remark}[theorem]{Remark}

\setcitestyle{authoryear}
\title{The Computable but Not Learnable Information-Value-Free Equilibria and Regulation of Algorithmic Collusion}
\author{Jason D. Hartline\thanks{hartline@northwestern.edu} \quad Chang Wang\thanks{wc@u.northwestern.edu} \quad Chenhao Zhang\thanks{chenhao.zhang.rea@u.northwestern.edu}}
\affil{Northwestern University}
\date{}

\newcommand\blfootnote[1]{%
    \begingroup
    \renewcommand\thefootnote{}\footnote{#1}%
    \addtocounter{footnote}{-1}%
    \endgroup
}

\begin{document}

\maketitle
\blfootnote{The authors are supported in part by NSF-funded Institute for Data, Econometrics, Algorithms and Learning (IDEAL) through the grant NSF ECCS-2216970.}
\begin{abstract}
A correlated equilibrium is \emph{information-value-free} if every player has an action that yields the same payoff as their equilibrium strategy when all other players follow their respective equilibrium strategies. An equilibrium is learnable if the empirical history of play generated by certain learning algorithms using only full feedback on each player's own payoff converges to it.

Our main result is that although an information-value-free equilibrium can be computed efficiently offline, it is not learnable by a broad class of learning algorithms. This separation stands in sharp contrast to canonical equilibrium concepts, where offline computation and online learning typically have comparable difficulty.

In information-economics terms, our results imply that learning correlated equilibria in games cannot avoid generating valuable information, which has implications for current debates on the regulation of algorithmic collusion: Valuable and implicit information exchange is unavoidable under rationalizable learning, and traditional antitrust regulation against such exchange is therefore incompatible with rationalizable learning. Our results also imply an informational impossibility result for time-average convergence to a Nash equilibrium by a broad class of learning algorithms.

\end{abstract}

\section{Introduction}\label{sec:intro}
We study a relaxation of Nash equilibrium and refinement of correlated equilibrium \citep{aumann1974subjectivity}, which we call \emph{information-value-free equilibrium (IVFE)}. Informally, a correlated equilibrium is \emph{information-value-free} if every player has an action that gives the same payoff as her equilibrium strategy when all other players play their respective equilibrium strategies.

The two requirements of an information-value-free equilibrium, correlated equilibrium and information-value-freeness, have parallel equivalent descriptions that appear throughout the paper. These parallel descriptions are listed in \Cref{table:terms} and are useful for understanding the intuition behind the information-value-free equilibrium and the technical results.

\begin{table}[htbp]
\centering
\begin{tabular}{c|c|c|c}
 & \textbf{Equilibrium Concept} & \makecell{\textbf{Property of} \\ \textbf{Learning}} & \makecell{\textbf{Property of} \\ \textbf{Information}} \\ \hline
\textbf{Requirement 1} & Correlated equilibrium & No swap regret & \makecell{Uses all revealed information \\ (i.e., rationalizable)} \\
\textbf{Requirement 2} & \makecell{Information-value- \\ free equilibrium} & \makecell{Non-negative best-\\in-hindsight regret} & Has no valuable information
\end{tabular}
\caption{Parallel equivalent descriptions of the two requirements of an information-value-free equilibrium}
\label{table:terms}
\end{table}

The first column restates the two equilibrium requirements: the first requirement is correlated equilibrium, and the second is information-value-freeness. The learning column can be understood by considering, informally, a setting where the players repeatedly play a game and we would like the empirical history of play to converge to an information-value-free equilibrium. Then, to converge to a correlated equilibrium, the players must have \emph{no swap regret}. Indeed, when the empirical history of play is a correlated equilibrium, no player can benefit by systematically replacing each recommended action with another action, which corresponds to no swap regret. Moreover, for the correlated equilibrium to be information-value-free, no player can get a higher payoff than she could have obtained by unilaterally deviating to a single fixed action, which corresponds to \emph{non-negative best-in-hindsight regret}\footnote{Best-in-hindsight regret is also known as \emph{external} regret in the literature.}.

In a standard information economics framework, satisfying these requirements implies that the learning outcome is \emph{rationalizable} and does not generate valuable information (the third column of \Cref{table:terms}), hence the name ``information-value-free''. Here we use the term \emph{rationalizable learning} to mean learning that fully exploits the payoff-relevant information generated by the history of play, which is the learning version of the Bayesian rationalization of correlated equilibrium in \citet{aumann1987correlated}. Formal definitions of information can be found in \Cref{sec:ff}.

The main research question we ask in this paper is then:
\begin{quote}
\emph{Is valuable information inherent in rationalizable learning in games, or can rationalizable learning avoid generating valuable information?}
\end{quote}
Here we use the terms from the third column, as they are most relevant to the practical context we are interested in. Briefly speaking, by answering the question we challenge certain regulations of algorithmic pricing, some already enacted and some proposed, that require algorithms to be independent and not to create or exchange valuable information. We conclude that these regulations are algorithmically counterproductive. See the first paragraph of \Cref{subsec:imp} for more background and discussion.

For the resolution of the question, there are intuitions supporting both positive and negative answers to it. On the one hand, \emph{in hindsight}, being information-value-free is a benchmark that is no stronger than established attainable benchmarks in online learning, as one could simply play a best-in-hindsight action. On the other hand, learning is usually understood to be a process of exploring and exploiting information. Even when learning algorithms are uncoupled in the canonical sense, i.e., they respond only to full feedback on their own payoffs, they may, in principle, still create endogenous correlations through the dynamics of play. Players can condition on this correlation, extract valuable information from it, and benefit from doing so. 

Ultimately, we obtain the following negative result, where we use the terms from the second column as they are most convenient for the technical analysis.

\begin{theorem*}[Informal]\label{thm:informal}

Fix any no-swap-regret learning algorithm $\mathcal L$ and any opponent algorithm $\mathcal A$ from a broad class of learning algorithms. There is a sequence of games $\{G_T\}_{T=1}^\infty$, such that when the players play in $G_T$ for $T$ rounds, every empirical history of play remains an asymptotically constant distance from an IVFE. The constant depends only on the property of $\mathcal A$.
\end{theorem*}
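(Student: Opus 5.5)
The plan is to build a sequence of two-player games $G_T$ that force the following tension. A no-swap-regret learner $\mathcal L$ drives the empirical distribution toward the set of correlated equilibria. However, any history generated against an opponent $\mathcal A$ from the broad class must contain correlation that $\mathcal L$ strictly profits from relative to every fixed action. That is, $\mathcal L$'s best-in-hindsight regret is bounded below zero by a constant, which is exactly a failure of information-value-freeness. The guiding intuition is that information-value-freeness at a correlated equilibrium means the recommendation carries no value. Meanwhile, an uncoupled opponent's play must react to the history, and this reaction leaks information that a rationalizable (no-swap-regret) learner exploits.

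The key steps, in order, are as follows.

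\emph{Step 1: define the class.} First I will formalize the broad class of opponent algorithms $\mathcal A$. These are uncoupled, full-feedback algorithms whose action distribution at round $t$ depends on the cumulative payoff vector. They are also \emph{responsive}: there is a constant $c>0$, depending only on $\mathcal A$, such that in a suitable two-action comparison $\mathcal A$ shifts probability by at least $c$ between histories that favor one action and histories that favor the other. Mean-based algorithms such as Hedge or FTRL with $T$-dependent step sizes are the model case. The constant in the theorem will be a function of $c$.

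\emph{Step 2: construct $G_T$.} I will give the opponent two actions whose relative payoff drifts with the learner's play. The learner's payoffs make its best action depend on which action the opponent currently favors. The scale of the payoffs is tuned to $T$ so that, over $T$ rounds, $\mathcal A$ oscillates or switches between phases. Concretely, the learner's actions push the opponent's cumulative payoff difference in opposite directions, so the opponent's behavior in phase $k$ is predictable from the past.

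\emph{Step 3: derive the contradiction.} Because $\mathcal L$ has $o(T)$ swap regret, within each phase it must best-respond to the opponent's phase-dependent distribution, up to $o(1)$ on average. This makes the learner's action correlated with the opponent's action across phases. I will then compute the learner's average payoff under the empirical distribution and compare it with the payoff of each fixed action against the opponent's marginal. Responsiveness guarantees the phases differ by at least $c$, so conditioning gains at least $f(c)>0$ over every fixed action. An IVFE requires this gain to be $0$, since the equilibrium payoff must equal some unilateral payoff. By continuity of these linear functionals, the empirical distribution is therefore at distance at least $f(c)/\|u\|_\infty$ from the IVFE set.

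\emph{Main obstacle.} The hardest step is Step 3. The learner's behavior is only controlled asymptotically and on average, so it could in principle disrupt the phase structure that $\mathcal A$'s dynamics produce. This creates a feedback loop. I would first handle it by a potential argument on the opponent's cumulative payoff difference. I will show that whichever way $\mathcal L$ plays, either the difference crosses a threshold, which forces a phase change with probability shift at least $c$, or the learner is stuck playing against a nearly fixed opponent distribution. In the second case I will design the payoffs so that the resulting history is not a correlated equilibrium, which contradicts no swap regret. The final step is a case analysis showing that every history is bounded away from the IVFE set, whichever of the two cases occurs.
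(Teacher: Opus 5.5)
The gap is in Step 3. Having $o(T)$ swap regret does not force $\mathcal L$ to best-respond within each phase. Swap regret is a single whole-horizon aggregate, so an action played badly in one phase can be offset by the same action played well in another. Worse, no argument that uses only the realized run plus full-horizon no-swap-regret can prove the theorem, because in hindsight the two properties are trivially compatible. Always playing the best fixed action in hindsight has zero swap regret and zero best-in-hindsight regret. In the paper's game, where $\overline w$ is the opponent's average probability of $U$ and $A=\sum_t(w_t-\overline w)_+=\Omega(T)$, the column player's fixed-action totals are $-A/4,0,0$ for $L,M,R$. So always playing $R$ is exactly such an information-value-free, no-swap-regret history.

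The missing idea is that $\mathcal L$, being online, must also have no swap regret against \emph{other} opponents that share the current prefix. The paper builds such an opponent by keeping the actual opponent's strategies for $t\le\tau$ and switching to the constant $\overline w+\beta$ afterwards. This makes $\mathbb E[G_L^{(t)}]=0$ and $\mathbb E[G_M^{(t)}]=-\beta$ for $t>\tau$. The paper then verifies that this spliced sequence is still a smooth learner (\Cref{claim:counterfactual_smooth}). Since $\mathcal L$'s first $\tau$ strategies are identical in both runs, the constant swap to $L$ forces the actual period-one payoff to be at least $\frac34A-o(T)$.

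A separate whole-horizon argument on the actual run supplies the other side. It combines near information-value-freeness (with slack $\eta=\beta^2$), the swaps $L\to R$ and $R\to M$, and the identity $G_M^{(t)}=-G_L^{(t)}-\beta$. Together these bound the number of $L$ plays by $\frac{2\eta}{\beta}T+o(T)$. Since $M$ loses in period one, period-one payoff is at most $\frac{5\eta}{2\beta}T+o(T)$, which gives the contradiction.

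The feedback loop you flag as the main obstacle is avoided, not overcome, in the paper, and your proposed dichotomy does not close as stated. In the \emph{stuck} case, a nearly fixed opponent facing a best-responding learner gives an approximately product history. If the opponent is itself no-swap-regret (Blum--Mansour reductions are in the class), that history is near a Nash equilibrium and hence near an IVFE. So you get neither a contradiction with $\mathcal L$'s no swap regret nor a CE violation.

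The paper instead gives the smooth learner payoffs $0$ and $\varepsilon$ that ignore the column player entirely. Its strategies $w_t$ are then exactly the exogenous, mostly monotone two-arm sequence of \Cref{def:smooth}. The column payoffs are tuned to that fixed sequence through $\overline w$ and $\beta=A/(8T)$, and the phase boundary $\tau$ is a deterministic threshold time. This exogeneity is also what makes the counterfactual opponent well defined and checkably inside the class. With a history-dependent phase boundary, as in your Step 2, the spliced opponent would have to be an adaptive algorithm whose membership in the class you would need to establish separately.
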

Here, the broad class of learning algorithms includes algorithms that are \emph{smooth learners} (\Cref{def:smooth}). This concept captures a smooth response to a very small payoff gap, as learning algorithms with statistical guarantees gradually, not abruptly, abandon a slightly worse arm.

\subsection{Implications}\label{subsec:imp}

Our results have several implications, discussed below.

\paragraph{Generation of valuable information in rationalizable learning and regulation of algorithmic collusion}

Most importantly, for regulation of algorithmic collusion, our results point out that the currently proposed regulation of algorithmic collusion that attempts to suppress all (valuable) information exchange is incompatible with learning and thus unproductive.

Algorithmic collusion refers to the concern that the widespread adoption of pricing algorithms could lead to harmful supra-competitive outcomes among competing sellers of goods and services. These concerns are further exacerbated by the fact that the supra-competitive outcomes can be facilitated without overt human communication, thereby falling outside the scope of conventional regulatory frameworks and current antitrust legal doctrines. However, recently enacted \citep{CA_AB325_2025,NYS7882_2025} and currently proposed \citep{USCongress2025S232,dojrealpage2024} remedies remain within traditional legal frameworks and attempt to suppress implicit communication. In contrast, our results show that valuable implicit communication is inherent in rationalizable learning algorithms, thus providing technical evidence of limitations of current antitrust doctrines. To elaborate, implicit information exchange is neither limited to clear-cut reward-punishment schemes \citep{calvano2020artificial} nor an artifact of poorly designed or pathological algorithms \citep{banchio2022artificial}, but a general and unavoidable feature of learning competitive outcomes in strategic environments. Rationalizable learning itself requires allowing the generation of valuable information. Consequently, seeking to suppress all possible channels of communication between algorithms or requiring algorithmic independence \citep{USCongress2025S232,dojrealpage2024,NYS7882_2025,CA_AB325_2025} may be misaligned with the goal of obtaining competitive prices: \emph{regulation against such information is regulation against rationalizable learning}. 

Furthermore, in the study of algorithmic collusion, Nash equilibrium is commonly used as the standard competitive outcome. However, as we discuss in the last paragraph of this subsection, our results provide informational evidence against using this standard in general algorithmic markets.

We discuss the antitrust policy implications further in \Cref{subsec:ac}.

\paragraph{An equilibrium concept with separation between offline computation and online learning} Beyond algorithmic collusion, information-value-free equilibrium is interesting in its own right as an equilibrium concept. It is not hard to show that it can be computed offline in time polynomial in the size of the game's normal-form representation. The impossibility result, however, shows that such an equilibrium cannot be learned by a broad class of learning algorithms. The result stands in sharp contrast to other canonical equilibrium concepts, where offline computation from the normal-form representation and online learning via uncoupled dynamics have comparable difficulty. For example,
computing a correlated equilibrium of a normal-form game offline is a tractable problem. At the same time, a correlated equilibrium is also easily learnable using uncoupled dynamics \citep{foster1997calibrated, hart2000simple}.  %
On the other hand, computing a Nash equilibrium offline is intractable \citep{daskalakis2009complexity, chen2009settling}, and consequently there are no efficient learning algorithms that guarantee convergence to a Nash equilibrium in general games.

\paragraph{Time-average equilibrium convergence}
As mentioned above, a central goal of learning in games is to understand what kinds of long-run behavior or equilibria can arise when players adapt based solely on their own observations. It is well-known that a broad class of uncoupled learning algorithms, most notably those achieving no best-in-hindsight regret or no swap regret, enjoy strong convergence guarantees \citep{foster1997calibrated,hart2000simple}. When all players follow such algorithms, the empirical distribution of historical play converges to a coarse correlated equilibrium or correlated equilibrium, respectively. These guarantees are achieved by computationally efficient algorithms.

At the same time, under the previously mentioned standard assumptions in computational complexity \citep{daskalakis2009complexity, chen2009settling}, there are no learning algorithms such that, when all players use them, the empirical history of play converges quickly to a Nash equilibrium in general games. As a result, there is a computational-complexity gap between the positive convergence results for (coarse) correlated equilibrium and the apparent inaccessibility of a Nash equilibrium.

Our results establish an informational lower bound for Nash equilibrium convergence for a broad class of learning algorithms. This lower bound follows from our new result and the fact that all Nash equilibria are IVFEs.

\subsection{Related Work}\label{subsec:liter}
In this subsection, we discuss additional related literature on the economics of learning algorithms and algorithmic collusion. See also \citet{hartline2026economics} for a comprehensive review.

\paragraph{Learning with non-negative best-in-hindsight regret}
Learning with non-negative best-in-hindsight regret has appeared in previous work in different ways. When studying mechanism design without a prior, \citet{camara2020mechanisms,collina2024efficient} model a learning agent with no information as having non-negative best-in-hindsight regret and assume the agent can satisfy it.\footnote{To be more precise, they require the agent to have no \emph{contextual} swap regret and non-negative \emph{contextual} best-in-hindsight regret where the context includes the principal's mechanism (i.e., action). In our setting, the players cannot observe the other players' actions.} In contrast, our results show that it is not possible to avoid the generation of valuable information.
\citet{blum2018preserving} 
propose an approach that combines expert advice with group fairness guarantees. The approach relies on learning algorithms that guarantee non-negative best-in-hindsight regret (and no best-in-hindsight regret), and, as we show, it would not be successful if the optimization goal were strengthened to no swap regret and non-negative best-in-hindsight regret. \citet{gofer2016lower} use non-negative best-in-hindsight regret as a property in an intermediate step to prove sequence-dependent regret lower bounds for a family of online learning algorithms.

\paragraph{Convergence results for learning dynamics}
A large body of work has been devoted to positive results on convergence for learning algorithms beyond the celebrated results for (coarse) correlated equilibrium \citep{foster1997calibrated,hart2000simple}.

In certain families of games of interest for algorithmic collusion, learning algorithms provably converge to Nash equilibrium. While the papers discussed below establish Nash-convergence guarantees for specific classes of games, our results identify games in which analogous guarantees cannot hold for the family of algorithms considered in their papers.

\citet{deng2022nash} completely characterize the learning dynamics of mean-based algorithms \citep{braverman2018selling} in repeated first-price auctions with fixed-value bidders. Using iterative elimination of dominated strategies, they show that the dynamics converge to a Nash equilibrium when at least two bidders tie for the highest value. \citet{bichler2024online} generalize their results and techniques to other games, such as Bertrand competitions. Beyond mean-based algorithms, \citet{ahunbay2025semicoarse} prove convergence results similar to those of \citet{deng2022nash} for projected gradient ascent in first-price auctions. \citet{bichler2024online} and \citet{ahunbay2025semicoarse} prove their results by defining and studying an induced equilibrium concept called \emph{semicoarse correlated equilibrium} that arises when all players use certain learning algorithms, and then showing that this refinement is a Nash equilibrium in specific games. More recently, \citet{cai2026proximal} introduce \emph{proximal regret} and \emph{proximal correlated equilibria}, which provide a general framework for refined equilibrium guarantees of gradient-based learning that subsumes semicoarse correlated equilibrium as a special case.

In general settings, it is also possible that certain discrete-time learning dynamics find a Nash equilibrium in exponential time. \citet{foster2003learning,foster2006regret,hart2006stochastic,germano2007global} show that learning dynamics based on \emph{guess-and-verify} can
reach a Nash equilibrium or an $\varepsilon$-Nash equilibrium in finite time, almost surely or with high probability. Here, guess-and-verify means the algorithms guess distributions to play and verify whether the distributions form a Nash equilibrium. Recall that our results apply to smooth learners, and consequently do not apply to the guess-and-verify algorithms because their strategy distributions can change abruptly. 

\paragraph{Impossibility results for last-iterate convergence to a Nash equilibrium}
Since our results also rule out the possibility of time-average convergence to Nash for a broad class of learning algorithms, our paper is related to some previous work that establishes impossibility results for convergence to a Nash equilibrium for dynamics. However, this line of work focuses mainly on \emph{continuous-time} dynamics and \emph{last-iterate} convergence (which is a harder goal than time-average convergence). \citet{hart2003uncoupled} show that no uncoupled dynamics guarantees last-iterate convergence to a Nash equilibrium. \citet{toonsi2023higher} generalize the results to higher-order dynamics that are allowed to maintain richer states. More recently, \citet{milionis2023impossibility} show that there exist games in which no deterministic learning dynamics can eventually converge to a Nash equilibrium. For the discrete-time setting, \citet{babichenko2012completely} considers games that have a pure Nash equilibrium and shows that convergence to a pure Nash equilibrium is impossible. Our results relax their assumption of pure Nash equilibrium but restrict attention to smooth learners.

\paragraph{Manipulation of learning algorithms}
Our results are also technically connected to the literature on manipulation of learning algorithms, as they can be viewed as constructing ``bad'' environments for certain learning algorithms. \citet{braverman2018selling} pioneered the study of manipulating an agent who uses no-best-in-hindsight-regret learning algorithms in auctions and introduced the notion of mean-based algorithms. \citet{deng2019strategizing} further generalize these results by studying how to manipulate a no-best-in-hindsight-regret learner in general bimatrix games to achieve payoffs beyond the \emph{Stackelberg value}, and show that no-swap-regret algorithms are resistant to manipulation. \citet{arunachaleswaran2025algorithmic} show that in Bertrand pricing games, adopting simple static strategies against a no-best-in-hindsight-regret learner can also lead to supra-Nash prices. Moreover, \citet{kumar2024strategically} show that gradient-ascent-based algorithms (which are not no-swap-regret algorithms in general) cannot be manipulated in repeated first-price auctions. \citet{cai2026online,zhao2026no} further generalize and extend the techniques in \citet{kumar2024strategically} and reduce strategic robustness to online linear optimization.

\paragraph{Algorithmic collusion}
Our results have implications for the study of algorithmic collusion. \citet{calvano2020artificial} show that Q-learning algorithms with sufficiently rich state spaces to capture historical interactions can learn reward-punishment schemes that sustain supra-competitive prices. On the other hand, 
\citet{banchio2022artificialb,banchio2022artificial,banchio2023adaptive} show that Q-learning with a minimal state space that is incapable of learning a reward-punishment scheme can still achieve supra-competitive states via statistical errors of Q-learning's misspecified model.
Q-learning has also been studied in other settings by \citet{klein2021autonomous, asker2022artificial}, and \citet{asker2023impact}. Other papers have also explored algorithmic collusion beyond Q-learning, such as the UCB algorithm \citep{hansen2021frontiers} and large language models (LLMs) \citep{fish2024algorithmic,keppo2026fragility}. \citet{luo2026algorithmic} empirically show that Q-learning, UCB, and LLMs can converge to supra-competitive prices with a much shorter time horizon than in the previous experiments if they start with pre-trained initial policies. On the other hand, many algorithms guarantee no swap regret and therefore do not collude for a reasonable definition of rationalizable outcomes \citep{hartline2024regulation,hartline2025regulation}. We consider the same definition of rationalizable outcomes in this paper.

One of the main challenges in the regulation of algorithmic collusion is that it arises through the implicit communication channel between the algorithms, whereas overt communication plays a fundamental role in contemporary legal doctrine of (non-algorithmic) collusion \citep{harrington2018developing,chassang2023regulating}. To address this challenge, \citet{hartline2024regulation,hartline2025regulation} propose to audit non-collusion by designing a statistical test that uses only the data collected during deployment. On the other hand, practitioners  have responded to the challenge by proposing bans on (implicit) communication channels or even on the use of algorithms altogether \citep{USCongress2025S232,dojrealpage2024, NYS7882_2025, CA_AB325_2025}. These approaches have been criticized \citep{harrington2025critique,harrington2025challenges}. Our results affirm these critiques by showing that an implicit information channel must be allowed to exist between rationalizable learning algorithms. The proposed policies neglect such inherent features and can preclude algorithms that competitively optimize. 

\subsection{Organization}
We define the information-value-free equilibrium (IVFE) in \Cref{subsec:staticivfe}. We consider the problem of learning it in \Cref{subsec:learningivfe}. We explain how the generation of valuable information relates to violations of the information-value-free property in \Cref{subsec:learningivfe}. These concepts are used to state and interpret the results in \Cref{sec:impos}. In \Cref{sec:impos} we state our main results and illustrate the general idea of the proof. In \Cref{sec:conclusion} we discuss the implications of our results for algorithmic collusion. All formal proofs are deferred to the appendices.

\section{Equilibrium Concept and Learning}\label{sec:ff}

A \emph{normal-form game} specifies a one-shot interaction between $n$ players. It consists of an action set $A_i$ and a payoff function $u_i$ for each player $i$. Each player simultaneously chooses an action $a_i\in A_i$, and the resulting action profile $a=(a_1,a_2,\dotsc,a_n)$ determines the payoffs $u_i(a_1,a_2,\dotsc,a_n)$. We refer to this one-shot game as the \emph{stage game} when it is used as the primitive interaction in a repeated setting.

\subsection{The Information-Value-Free Equilibrium}\label{subsec:staticivfe}
\begin{definition}\label{def:ivfe}
    Consider a two-player game $G$ in which each player $i$ has a finite set of pure strategies $A_i$ and a payoff function $u_i:A_1 \times A_2 \to \mathbb R$.

    An \emph{information-value-free equilibrium} (IVFE) is a distribution $\tau$ over $A_1 \times A_2$ that satisfies:
    \begin{enumerate}
        \item $\tau$ is a correlated equilibrium; that is,
        \[ \ee{a \sim \tau}{u_i(a) \mid a_i} \geq \ee{a \sim \tau}{u_i(a_i', a_{-i}) \mid a_i} \]
        for every $i$, action $a_i \in A_i$ with $\Pr_\tau[a_i]>0$, and deviation $a_i' \in A_i$. (For actions with zero-probability marginals, the constraints are vacuously satisfied.)
        \item $\tau$ is \emph{information-value-free}, i.e., for every player $i$, the equilibrium payoff is not strictly higher than the payoff from some fixed action: there exists $a_i^\ast \in A_i$ such that
        \[ \ee{a \sim \tau}{u_i(a)} \leq \ee{a \sim \tau}{u_i(a_i^\ast, a_{-i})}. \]
    \end{enumerate}
\end{definition}

It is easy to observe that Nash equilibrium satisfies (1) and (2).
\begin{proposition}\label{prop:ne_ivfe_ce}
Let NE, IVFE, CE be the sets of Nash equilibria, information-value-free equilibria, and correlated equilibria. Then $\text{NE} \subsetneq \text{IVFE} \subsetneq \text{CE}$.
\end{proposition}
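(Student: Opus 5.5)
The plan is to prove the two inclusions directly from \Cref{def:ivfe} and then exhibit one small game for each strict separation. Throughout, a Nash equilibrium is identified with the product distribution $\tau=\sigma_1\times\sigma_2$ that it induces on $A_1\times A_2$.

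\textbf{The inclusions.} The inclusion $\text{IVFE}\subseteq\text{CE}$ holds by definition, since condition (1) is exactly the correlated-equilibrium condition. For $\text{NE}\subseteq\text{IVFE}$, let $\tau=\sigma_1\times\sigma_2$ be a Nash equilibrium. It is a correlated equilibrium by the standard argument. Because $\tau$ is a product distribution, conditioning on $a_i$ leaves the distribution of $a_{-i}$ equal to $\sigma_{-i}$. Every $a_i$ in the support of $\sigma_i$ is a best response to $\sigma_{-i}$, so the conditional constraints hold. For condition (2), I will pick any $a_i^\ast$ in the support of $\sigma_i$. By the indifference principle, every support action earns the maximal payoff against $\sigma_{-i}$, which gives
\[
\ee{a\sim\tau}{u_i(a_i^\ast,a_{-i})}=\max_{a_i'}\ee{a_{-i}\sim\sigma_{-i}}{u_i(a_i',a_{-i})}=\ee{a\sim\tau}{u_i(a)}.
\]
Hence $\tau$ is information-value-free.

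\textbf{Strictness of $\text{NE}\subsetneq\text{IVFE}$.} I will use the degenerate game with $A_1=A_2=\{x,y\}$ and $u_1\equiv u_2\equiv 0$. Take $\tau$ to put probability $1/2$ on each of $(x,x)$ and $(y,y)$. Every inequality in \Cref{def:ivfe} then holds with equality, so $\tau$ is an IVFE. However, $\tau$ is not a product distribution, so it is not a Nash equilibrium. If a non-degenerate example is preferred, one can instead mix two pure Nash equilibria of a coordination game with equal payoffs, but the trivial game already suffices.

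\textbf{Strictness of $\text{IVFE}\subsetneq\text{CE}$.} This is the only step that requires a real choice. I need a correlated equilibrium whose correlation gives a player strictly more than any fixed action earns against her opponent's marginal. I will use the game of Chicken with payoffs
\[
u(D,D)=(0,0),\quad u(D,C)=(7,2),\quad u(C,D)=(2,7),\quad u(C,C)=(6,6),
\]
and take $\tau$ uniform on $\{(C,C),(C,D),(D,C)\}$.

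First I check that $\tau$ is a correlated equilibrium.
\begin{itemize}
\item When player~1 is recommended $D$, the opponent plays $C$, and $7>6$, so obeying is optimal.
\item When player~1 is recommended $C$, the opponent is $C$ or $D$ with probability $1/2$ each. Obeying gives $(6+2)/2=4$, while deviating to $D$ gives $(7+0)/2=3.5$, so obeying is optimal.
\item Player~2's constraints follow by symmetry.
\end{itemize}

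Next I check that condition (2) fails. The equilibrium payoff of player~1 is $(6+2+7)/3=5$. Her opponent's marginal is $C$ with probability $2/3$ and $D$ with probability $1/3$. Against this marginal, the fixed action $C$ earns $\tfrac{2}{3}\cdot 6+\tfrac{1}{3}\cdot 2=14/3<5$, and the fixed action $D$ earns $\tfrac{2}{3}\cdot 7=14/3<5$. So no fixed action matches the equilibrium payoff, and $\tau\notin\text{IVFE}$.
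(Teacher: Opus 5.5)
Your proof is correct. The inclusion $\text{NE}\subseteq\text{IVFE}$ is argued as in the paper: a pure best response to $\sigma_{-i}$ attains the equilibrium payoff. You pick a support action via indifference, while the paper picks an extreme-point maximizer.

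The separating examples differ.
\begin{itemize}
\item For $\text{NE}\neq\text{IVFE}$, you use the all-zero game, where every distribution is an IVFE and any non-product one is not Nash. The paper instead uses the $2\times 2$ game with payoffs $(5,1),(0,0),(4,4),(1,5)$ and $\tau=\frac13(a_1,b_1)+\frac13(a_2,b_1)+\frac13(a_2,b_2)$. There the fixed actions $a_1$ and $b_2$ exactly match the equilibrium payoff $10/3$. Your witness needs no computation but is degenerate: it is just a public lottery over pure Nash equilibria. The paper's example is a genuinely correlated IVFE outside the convex hull of the Nash equilibria, in a game whose pure equilibria are strict.
\item For $\text{IVFE}\neq\text{CE}$, the paper's proof gives no example at all. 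It only shows ``every Nash equilibrium is an IVFE but not vice versa'' and leaves $\text{IVFE}\subseteq\text{CE}$ implicit. Your Chicken computation, with equilibrium payoff $5$ against $14/3$ for either fixed action, supplies this second strict inclusion, which the paper omits.
\end{itemize}

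One correction to your parenthetical alternative: mixing two strict pure equilibria of a coordination game does \emph{not} produce an IVFE. Take $u(x,x)=u(y,y)=(1,1)$ with $0$ off the diagonal. The half-half mixture pays each player $1$, while either fixed action earns only $1/2$ against the opponent's marginal. More generally, a mixture of pure equilibria $(a,b)$ and $(a',b')$ is information-value-free for player~1 only if a single action of hers is a best response to both $b$ and $b'$. That coordination mixture is in fact a textbook element of $\text{CE}\setminus\text{IVFE}$, and would be an even simpler witness for your last step. So drop the remark and keep the zero game.
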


The next result shows that in a stage game, an IVFE can always be computed offline in polynomial time. Together with the results in \Cref{sec:impos}, we obtain a separation: an IVFE is efficiently computable offline, but cannot be learned online by a broad class of learning algorithms.
\begin{theorem}\label{prop:static_lp}
    An IVFE always exists and can be computed offline in polynomial time.
\end{theorem}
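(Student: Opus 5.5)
Existence follows from \Cref{prop:ne_ivfe_ce}: every finite game has a Nash equilibrium by Nash's theorem, and every Nash equilibrium is an IVFE, so the set of IVFEs is nonempty. For the computational claim, existence alone is not enough, because we cannot hope to find a Nash equilibrium in polynomial time. The plan is instead to write the IVFE conditions as a polynomial family of linear programs and use the existence argument only to guarantee that one of them is feasible.

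The correlated-equilibrium constraints in \Cref{def:ivfe} are linear in the probabilities $\tau(a_1,a_2)$ once written in unconditional form:
\[
\sum_{a_{-i}} \tau(a_i,a_{-i})\bigl(u_i(a_i,a_{-i})-u_i(a_i',a_{-i})\bigr)\ge 0
\qquad\text{for all } i,\ a_i,\ a_i'.
\]
This form is equivalent to the conditional version and is automatically satisfied when $\Pr_\tau[a_i]=0$. The information-value-free condition contains an existential quantifier over $a_i^\ast$, which is not linear. To remove it, I will guess the pair $(a_1^\ast,a_2^\ast)\in A_1\times A_2$. For each fixed pair the condition becomes the linear constraints
\[
\sum_a \tau(a)\bigl(u_i(a_i^\ast,a_{-i})-u_i(a)\bigr)\ge 0, \qquad i=1,2.
\]
Adding nonnegativity and $\sum_a\tau(a)=1$ gives one LP feasibility problem per guessed pair.

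There are $|A_1|\cdot|A_2|$ such LPs. Each has $|A_1||A_2|$ variables and $O(|A_1|^2+|A_2|^2)$ constraints with rational coefficients taken from the payoff tables. Hence each LP is solvable in polynomial time by the ellipsoid or an interior-point method, and the whole enumeration is polynomial in the size of the normal-form representation. The algorithm returns a feasible point of any feasible LP. Every such point is an IVFE by construction. Conversely, any IVFE, for instance a Nash equilibrium, is feasible for the LP indexed by its own witnesses $(a_1^\ast,a_2^\ast)$, so at least one LP is feasible and the algorithm always succeeds.

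The main step needing care is the disjunction in requirement 2, which makes the IVFE set a union of polytopes that is possibly nonconvex. Enumerating the witness actions handles this, because the witness for each player can be chosen independently and there are only polynomially many pairs. The remaining technical point is that exact LP solvers output a rational vertex of bit-length polynomial in the input, so the computed $\tau$ is exact.
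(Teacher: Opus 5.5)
Your proposal is correct and matches the paper's proof: existence via a mixed Nash equilibrium being an IVFE, and computation by enumerating the $|A_1|\cdot|A_2|$ witness pairs $(a_1^\ast,a_2^\ast)$ and solving, for each pair, the correlated-equilibrium LP plus the two linear information-value-free constraints, with existence guaranteeing that at least one of these LPs is feasible. Your remarks on the unconditional form of the CE constraints and on exact rational LP solutions are accurate details that the paper leaves implicit.
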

\begin{proof}[Proof sketch]
    Note that if the best fixed action $a_i^\ast$ for each player $i$ is known, we can construct a linear program similar to that of a correlated equilibrium and solve it. Then, since there are only $|A_1| \cdot |A_2|$ possible pairs of best fixed actions, we can simultaneously solve the linear program that corresponds to each pair efficiently. At least one of the programs has a solution since an IVFE always exists.
\end{proof}

\subsection{Learning IVFE}\label{subsec:learningivfe}

In this subsection, the players repeatedly play the stage game and we formalize the problem of learning an IVFE.

A \emph{repeated game} consists of repeated play of the same stage game over a sequence of rounds. At each round, each player simultaneously chooses an action and receives the corresponding stage-game payoff. Note that the player's choice of action in each round may depend on the history generated so far.

Analogous to the relationship between no-swap-regret learning and correlated equilibrium, and between no-best-in-hindsight-regret learning and coarse correlated equilibrium, learning an IVFE means achieving no swap regret and non-negative best-in-hindsight regret. We then connect this notion to the generation of valuable information during learning, using a standard framework from information economics.

\paragraph{Setup} Consider a player with a known action set $A_1$ of size $n$ and a time horizon of $T$ rounds. A stage game is repeated in each round.

In round $t = 1,2,\dotsc,T$:
\begin{enumerate}
    \item The player selects a probability distribution $p^t$ over the $n$ actions.
    \item The reward vector $r^t \in [-c,c]^n$ is induced by the behavior of the other players\footnote{For analytical clarity, we assume the entries are expectations with respect to the opponent's strategy. This assumption is without loss of generality: if needed, standard concentration arguments can remove this assumption at the cost of additional technical complexity. The same assumption is also standard in classical results, such as the derivation of correlated equilibrium guarantees from no-swap-regret algorithms.}; for simplicity, we refer to the other players collectively as the opponent. Here, $c>0$ is an absolute constant.
    \item An action $a \in A_1$ is then sampled from $p^t$, and the player receives reward $r^t_a$.
    \item The player observes the entire reward vector $r^t$ (which is known as \emph{full feedback}).
\end{enumerate}
The player's learning algorithm maps the history of past probability distributions and reward vectors to the probability distribution for the next round.

As discussed above, learning an IVFE amounts to achieving no swap regret and non-negative best-in-hindsight regret for both players. One way to achieve an IVFE is for each player to unilaterally guarantee the required properties on their side. We begin by recalling the two regret notions.

\begin{definition}\label{def:swap}
    The player's (per-round) \emph{swap regret} is defined as
    \[ \text{SwapRegret}_T := \max_{\sigma:[n] \to [n]} \frac 1T \sum_{t=1}^T \ee{a \sim p^t}{r^t_{\sigma(a)}} - \frac 1T \sum_{t=1}^T \ee{a \sim p^t}{r^t_a}. \]
    The player's algorithm has \emph{no swap regret} if $\text{SwapRegret}_T=o(1)$ as $T \to \infty$.

\end{definition}
\begin{definition}\label{def:best-in-hindsight}
    The player's (per-round) \emph{best-in-hindsight regret} is defined as
    \[ \text{BIH-Regret}_T := \max_{a' \in [n]} \frac 1T \sum_{t=1}^T r^t_{a'} - \frac 1T \sum_{t=1}^T \ee{a \sim p^t}{r^t_a}. \]
    The player's algorithm has \emph{non-negative best-in-hindsight regret} if $\text{BIH-Regret}_T \geq -o(1)$ as $T \to \infty$.
\end{definition}
Combining no swap regret with non-negative best-in-hindsight regret yields the following definition.
\begin{definition}\label{def:zer}
    A no-swap-regret player is \emph{information-value-free} if she achieves non-negative best-in-hindsight regret.
\end{definition}
\begin{remark}
    \emph{In hindsight}, the property is trivial to satisfy: always playing the best-in-hindsight action yields no swap regret and non-negative best-in-hindsight regret. The impossibility is to achieve these guarantees through (rationalizable) learning.
\end{remark}

If both players have no swap regret and non-negative best-in-hindsight regret, then the empirical distribution of play \emph{converges} to an information-value-free equilibrium. To formalize convergence, we first define a measure of how far a distribution is from satisfying the IVFE conditions. This definition can be seen as the regret notion for the static setting.
\begin{definition}
    Consider the stage-game setup in \Cref{def:ivfe}. For a game $G$ and any distribution $\tau$ over $A_1 \times A_2$,
    let
    \[ d_{G, i}^{CE}(\tau) = \ee{a_i \sim \tau_{i}}{\max_{a_i'\in A_i}\ee{a \sim \tau}{u_i(a_i', a_{-i}) \mid a_i} -
    \ee{a \sim \tau}{u_i(a) \mid a_i}}
    \]
    where $\tau_i$ is the $i$-th marginal of $\tau$ , 
    and
    \[
    d_{G,i}^{IVF}(\tau) = \left(\ee{a \sim \tau}{u_i(a)} - \max_{a_i^\ast \in A_i}\ee{a \sim \tau}{u_i(a_i^\ast, a_{-i})}\right)_+,
    \]
    be the distance of player $i$ from satisfying the two conditions in the definition respectively. Define $\tau$'s \emph{distance to an IVFE} as $d_G(\tau) = \max_i\left\{\max\left\{d_{G,i}^{IVF}(\tau), d_{G,i}^{CE}(\tau)\right\}\right\}$.
\end{definition}

\begin{proposition}\label{prop:ivfe-convergence}
    Let $p^t$ and $(p')^t$ denote the player's and opponent's strategies in round $t$, respectively. Let $\tau^t = p^t \times (p')^t$. If both players have no swap regret and are information-value-free, then the empirical history of play $\tau_T=\frac{1}{T} \sum_{t=1}^T \tau^t$ converges to an information-value-free equilibrium:
    \[ \lim_{T \to \infty} d_G(\tau_T)=0. \]
\end{proposition}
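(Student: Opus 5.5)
The plan is to bound the two components of $d_G(\tau_T)$ separately for each player $i$, and to translate each into the corresponding per-round regret of player $i$'s algorithm. Fix player $i=1$; player 2 is symmetric. The reward vector in round $t$ is $r^t_{a} = \ee{a_2\sim (p')^t}{u_1(a,a_2)}$, and this is exactly the full-feedback vector in the setup. The central observation is that $\tau^t = p^t\times(p')^t$ is a product distribution. Consequently, for every $a, a' \in A_1$,
\[ \sum_{t=1}^T p^t_a\, r^t_{a'} = \sum_{t=1}^T \ee{a_2\sim(p')^t}{p^t_a\, u_1(a',a_2)} = T\cdot \Pr_{\tau_T}[a_1=a]\,\ee{\tau_T}{u_1(a',a_2)\mid a_1=a}, \]
and the same identity holds with $a'$ replaced by $a$. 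This linearity identity is the only real computation required.

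\textbf{CE component.} Using the identity above, we get
\[ d^{CE}_{G,1}(\tau_T) = \sum_{a}\Pr_{\tau_T}[a]\Big(\max_{a'}\ee{\tau_T}{u_1(a',a_2)\mid a}-\ee{\tau_T}{u_1(a,a_2)\mid a}\Big). \]
Choose the maximizing $a'=\sigma(a)$ separately for each $a$; this is allowed because the maximum in \Cref{def:swap} ranges over all maps $\sigma:[n]\to[n]$. With this choice the expression equals
\[ \frac1T\sum_t\sum_a p^t_a\,(r^t_{\sigma(a)}-r^t_a) \le \text{SwapRegret}_T. \]
Actions with zero marginal under $\tau_T$ contribute nothing to the sum. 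Since $\text{SwapRegret}_T=o(1)$ and $d^{CE}_{G,1}\ge 0$, this term tends to $0$.

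\textbf{IVF component.} Summing the identity over $a$ gives
\[ \ee{\tau_T}{u_1(a)} = \frac1T\sum_t\ee{a\sim p^t}{r^t_a} \quad\text{and}\quad \ee{\tau_T}{u_1(a^\ast,a_2)} = \frac1T\sum_t r^t_{a^\ast}. \]
Therefore
\[ d^{IVF}_{G,1}(\tau_T) = \big(-\text{BIH-Regret}_T\big)_+ . \]
The information-value-free hypothesis (\Cref{def:zer}) gives $\text{BIH-Regret}_T\ge -o(1)$, so this term is at most $o(1)$ and also tends to $0$.

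Taking the maximum over the finitely many players and over the two components, $d_G(\tau_T)\to 0$.

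The main obstacle is mild: we must make sure that conditioning on $a_1=a$ under the mixture $\tau_T$ correctly reweights each round by $p^t_a$. This is exactly where the product structure of each $\tau^t$ (independent play within a round, with the opponent's mixed strategy inducing $r^t$) is used. If $\tau^t$ were correlated within a round, the identity would fail. The no-swap-regret bound also requires the per-action choice of $\sigma$, which is available in the definition. Finally, the convergence statement is in terms of the distance $d_G$, so nothing further needs checking: we do not need to show that $\tau_T$ converges to a specific IVFE.
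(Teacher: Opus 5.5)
Your proposal is correct and takes essentially the same approach as the paper. You write $r^t_a$ as the expectation of $u_1(a,\cdot)$ under the opponent's mixed strategy, and then use linearity and the product structure of $\tau^t$ to bound $d^{CE}_{G,i}(\tau_T)$ by the swap regret and to identify $d^{IVF}_{G,i}(\tau_T)$ with $(-\text{BIH-Regret}_T)_+$, which only needs the one-sided bound and is if anything more explicit than the paper's ``rearranging and by linearity'' step.
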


\paragraph{Interpretations from information economics} In the remainder of this subsection, we follow the ideas from \citet{aumann1987correlated} to relate \Cref{def:zer} to the generation of valuable information after the learning process (cf. \Cref{table:terms}).

We first define information using the standard framework of \emph{information structure} from information economics.
In this framework, a decision maker faces uncertainty about the payoff-relevant state. Information is modeled as a signal, correlated with the uncertain state, that the decision maker can receive and that conveys payoff-relevant information. The decision maker is rationalizable if she fully exploits her information given by the signal. The information is \emph{valuable} if the decision maker strictly improves her payoff by making decisions conditional on the signal realization she receives.
\begin{definition}[Information Structure]
    An \emph{information structure} consists of two components:
    \begin{itemize}
        \item a set of \emph{states} $\Omega$,
        \item a joint distribution of the \emph{signal} and the state in $\Delta(\Sigma \times \Omega)$.
    \end{itemize}
\end{definition}
We instantiate the definition in our setting as follows when the player is playing a stage game with the opponent.
Consider the player as a decision maker. Given a joint distribution $\tau \in \Delta(A_1\times A_2)$ over the player's actions and the opponent's actions, 
$(A_2,\tau)$ is an information structure with the states $\Omega=A_2$ and signal values $\Sigma = A_1$\footnote{As in the standard modeling of information in games, for any stage game, the detailed payoff structure is constant and involves no uncertainty despite being unknown to the learning algorithms, and is therefore not treated as part of the state.}.

We now explain the connection between the player's information and the properties of her learning algorithm.
\begin{lemma}\label{lemma:ivfconnection}
    If the player has no swap regret and non-negative best-in-hindsight regret, then her learning is rationalizable and does not generate valuable information, where information is defined under the above framework.
\end{lemma}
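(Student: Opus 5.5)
The plan is to instantiate the information structure on the empirical joint distribution of play and then read the two regret conditions as statements about how the player uses the signal. Fix the horizon $T$, write $(p')^t$ for the opponent's mixed strategy, and set $\tau_T = \frac1T\sum_{t=1}^T p^t\times (p')^t$. This is a joint distribution over $A_1\times A_2$. In the structure $(A_2,\tau_T)$ the signal is the player's own recommended action $a_1\sim \tau_{T,1}$ and the state is $a_2$. Since the reward vectors are expectations over the opponent's play, $r^t_{a'} = \ee{a_2\sim (p')^t}{u_1(a',a_2)}$. Then, for any map $\sigma:A_1\to A_1$, we have the identity
\[ \frac1T\sum_{t=1}^T \ee{a\sim p^t}{r^t_{\sigma(a)}} = \ee{(a_1,a_2)\sim\tau_T}{u_1(\sigma(a_1),a_2)}, \]
and similarly the payoff of a fixed action $a'$ equals $\ee{\tau_T}{u_1(a',a_2)}$. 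Proving this identity by linearity of expectation is the first step. It converts the regret quantities of Definitions~\ref{def:swap} and~\ref{def:best-in-hindsight} exactly into decision-theoretic quantities of the information structure.

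For \textbf{rationalizability}, I interpret it as the decision maker best-responding to her posterior over the state given each signal realization. The maximum over $\sigma$ decomposes signal by signal, so the swap regret equals
\[ \ee{a_1\sim\tau_{T,1}}{\max_{a'}\ee{\tau_T}{u_1(a',a_2)\mid a_1}-\ee{\tau_T}{u_1(a_1,a_2)\mid a_1}}, \]
which is $d^{CE}_{G,1}(\tau_T)$. No swap regret therefore says this quantity is $o(1)$. In other words, following the signal is asymptotically an optimal signal-contingent decision rule, so the player fully exploits the information carried by her own signal.

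For \textbf{no valuable information}, I compare two benchmarks. The value of the information structure is the best signal-contingent payoff, $V_{\text{inf}}=\ee{a_1}{\max_{a'}\ee{}{u_1(a',a_2)\mid a_1}}$. The uninformed benchmark is the best fixed action, $V_0=\max_{a'}\ee{\tau_T}{u_1(a',a_2)}$. Information is valuable if $V_{\text{inf}}>V_0$. Rationalizability gives $V_{\text{inf}} = \ee{\tau_T}{u_1(a)}+o(1)$. Non-negative best-in-hindsight regret gives $V_0 \ge \ee{\tau_T}{u_1(a)} - o(1)$. Combining the two yields $V_{\text{inf}}-V_0\le o(1)$, so asymptotically the signal carries no value. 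Since $V_{\text{inf}}\ge V_0$ always holds, the two are in fact asymptotically equal.

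The computations themselves are routine. I expect the main obstacle to be \textbf{definitional}: stating precisely what ``rationalizable'' and ``valuable'' mean for a finite-$T$ empirical distribution, given that the guarantees are only asymptotic. I would handle this by phrasing the conclusion in the limit, either along convergent subsequences of $\tau_T$ (which exist by compactness of $\Delta(A_1\times A_2)$), or as statements up to $o(1)$ error. I would also remark that the conditioning in the posterior is on the player's own recommended action, matching Aumann's Bayesian rationality interpretation.
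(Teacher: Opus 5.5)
Your proposal is correct and follows essentially the same route as the paper: instantiate the information structure on the empirical joint distribution with state $a_2$ and signal $a_1$, read swap regret as signal-by-signal optimality (rationalizability), and read best-in-hindsight regret as a comparison with the best uninformed fixed action. The only difference is cosmetic. You bound the gap $V_{\text{inf}}-V_0$ using both regret conditions. The paper instead uses best-in-hindsight regret alone to show that the realized payoff is at most the best fixed-action payoff plus $o(1)$. The two are equivalent once rationalizability identifies $V_{\text{inf}}$ with the realized payoff up to $o(1)$.
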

The formal proof of the lemma is given in \Cref{subsec:connectionproof}. Its key intuition is as follows: First, a player with no swap regret is rationalizable. Second, if the player has non-negative best-in-hindsight regret, she does not benefit from the information.

Below, we elaborate the intuition with  a discussion of information structures. Readers familiar with information structures can skip to the next section.

First, recall that by the definition of no swap regret, the player behaves optimally conditional on the actions she takes. 
The actions of the player can be interpreted as the signal values.
Therefore, when the player's algorithm has no swap regret, given the empirical joint distribution $\tau^T$ over the player's and opponent's actions, she behaves optimally conditional on the signal values. Therefore, the player is rationalizable, i.e., fully exploiting the information revealed by her actions.

Second, if the player has non-negative best-in-hindsight regret, she does not benefit from the information. To see why, recall that by the definition of non-negative best-in-hindsight regret, she is not better off than playing a fixed action.
In the framework of information structure, when choosing a fixed action, the signal can be safely ignored. So when the player's algorithm has non-negative best-in-hindsight regret, given the empirical joint distribution $\tau^T$ over the player's and opponent's actions, her payoff is not higher than what she could have attained without the signal. Therefore, the player does not benefit from the information revealed by her actions.

\section{The Impossibility Result}\label{sec:impos}
In this section, we prove our main impossibility result. We consider two possibilities under which an IVFE may be learned. The stronger possibility is that one may design a no-swap-regret learning algorithm that is information-value-free regardless of what the opponent is doing, for any stage game structure. The weaker possibility is that one may design such a learning algorithm, but information-value-freedom holds only when all players jointly use the algorithm.

We rule out the first possibility in \Cref{coro:impos-ivfe}, and the second for a broad class of \emph{smooth learners} in \Cref{thm:impos-ivfe}. As the name suggests, a smooth learner changes its action probabilities gradually, which is formally defined as follows.
\begin{definition}\label{def:smooth}
    Consider two arms with rewards of zero and $\eps$ and a learning algorithm. Let $w_t$ be the probability of playing the zero-reward arm in round $t$.\footnote{Without loss of generality, assume $w_1 = \Omega(1)$ because (a) either $w_1$ or $1-w_1$ is $\Omega(1)$ and (b) the algorithm cannot distinguish the two arms prior to the first round.} The algorithm is a \emph{smooth learner} if for any time horizon $T$ there exists $\eps = \eps(T)$ such that the following hold:
    
    There exists a function $h(T)=o(1)$ and a weakly decreasing subsequence of $\{w_t\}_{t=1}^T$ of length $T(1-h(T))$, such that the sequence also satisfies
    \[ \sum_{t=1}^{T/2}w_t - \sum_{t=T/2+1}^Tw_t = \Omega(T). \]
\end{definition}

\Cref{def:smooth} says that when faced with two arms with a small difference, the learning algorithm gradually shifts toward the higher-reward arm, but the learning process can be made arbitrarily slow as long as the difference between the two arms $\eps$ is sufficiently small.

The definition captures a smooth response to a very small payoff gap. Learning algorithms with statistical guarantees generally do not instantly abandon the slightly worse arm but exhibit gradual adjustment. Their probabilities move gradually because updates are driven by accumulated payoff differences or small gradients. The definition rules out non-smooth behavior (e.g., an immediate jump to the better action when the gap is chosen small enough). For example, it rules out algorithms in \citet{foster2003learning,foster2006regret,hart2006stochastic,germano2007global}. These \emph{guess-and-verify-based} algorithms can even \emph{inefficiently} learn a Nash equilibrium but are not smooth learners.

We show that many learning algorithms satisfy the smooth-learner property. The first two are mean-based \citep[Appendix D]{braverman2018selling}, whereas the third is not \citep[Appendix B]{ahunbay2025semicoarse}.

\begin{proposition}\label{prop:smooth_example}
    The standard versions of the following algorithms are smooth learners:
    \begin{enumerate}
        \item Multiplicative Weights Update,
        \item Follow-the-Perturbed-Leader,
        \item Online Gradient Ascent (also known as Projected Gradient Ascent \citep{zinkevich2003online}).
    \end{enumerate}
    
    The corresponding Blum--Mansour reductions\footnote{Recall that the Blum--Mansour reduction is a standard method of identifying no-swap-regret algorithms, which provides a general way of reducing no-best-in-hindsight-regret algorithms to no-swap-regret algorithms. In the proof of \Cref{prop:smooth_example} in \Cref{subsec:pf_smooth_example}, we give a brief technical review of the Blum–Mansour reduction.} \citep{blum2007external} of these algorithms are also smooth learners.
\end{proposition}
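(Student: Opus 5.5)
The plan is to verify \Cref{def:smooth} directly for each algorithm on the two-arm instance with rewards $0$ and $\eps$. On this instance the only state variable is the cumulative reward gap after $t-1$ rounds, $D_{t}=(t-1)\eps$, which is deterministic and increasing. Each of the three algorithms (MWU, FTPL, OGA) maps $D_t$, or its own internal state, to a probability $w_t$ of the zero arm that is weakly decreasing in $t$. The whole sequence is then monotone, so we may take the subsequence to be the entire sequence and $h(T)=0$.

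What remains is the $\Omega(T)$ gap between the two halves. We choose $\eps(T)$ so that $w_t$ moves by a constant amount over the horizon without collapsing to $0$ in the first half.

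\textbf{MWU.} With learning rate $\eta$ we have $w_t = 1/(1+e^{\eta \eps (t-1)})$. Set $\eps = 1/(\eta T)$. Then $w_t$ decreases from $1/2$ to about $1/(1+e)$, and the difference of the two half-sums is $\Theta(T)$.

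\textbf{FTPL.} With perturbation scale $\eta$ we have $w_t = \Pr[\text{noise difference} > (t-1)\eps]$. This is a decreasing function of $(t-1)\eps$. Choosing $\eps$ so that $T\eps$ equals a constant multiple of the noise scale gives a constant drop. We need the noise distribution to have continuous density bounded below near $0$, which holds for the standard exponential and uniform perturbations.

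\textbf{OGA.} The projected gradient step on the simplex moves $w$ down by $\eta\eps$ per round until it hits $0$. So $w_t = \max(w_1 - \eta\eps(t-1), 0)$. Take $\eps = w_1/(2\eta T)$; then $w_t$ decreases linearly by $w_1/2$ over the horizon. The halves then differ by $\Theta(w_1 T) = \Omega(T)$, using $w_1=\Omega(1)$. In each case, the learning rate may itself depend on $T$, for example $\eta\sim 1/\sqrt T$; this is harmless, since $\eps(T)$ is allowed to depend on $T$ and absorbs it.

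\textbf{Blum--Mansour reductions.} The reduction runs one copy of the base algorithm per action $i$. Copy $i$ receives reward vectors scaled by $p^t_i$. The played distribution $p^t$ is the stationary distribution of the row-stochastic matrix $Q^t$ whose rows are the copies' recommendations. With two arms, stationarity gives
\[ w_t = \frac{q_{21}}{q_{12}+q_{21}}, \]
where $q_{12}$ is copy 1's probability on arm 2 and $q_{21}$ is copy 2's probability on arm 1. Both copies see cumulative gaps $\eps\sum_{s<t} p^s_i$, which are nondecreasing in $t$. Hence each copy's probability on the zero arm is weakly decreasing, so $q_{21}$ decreases, $q_{12}$ increases, and $w_t$ is weakly decreasing.

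The main obstacle is the $\Omega(T)$ gap for the reduction, because the copies' effective step sizes are scaled by $p^t_i$. We need $p^t_i$ bounded below by a constant throughout, so that each cumulative gap grows linearly at rate $\Theta(\eps)$. I would handle this with a bootstrap argument, using the formula for $w_t$ above:
\begin{itemize}
\item Choose $\eps$ small enough that the copies' probabilities on the zero arm stay in $[c,1-c]$ for all $t\le T$.
\item Then $w_t\in[c',1-c']$, so $p^t_i\ge c'$.
\item Hence each cumulative gap is at least $c'\eps (t-1)$ and at most $\eps(t-1)$.
\item Pick $\eps$ so that the gaps change the copies' probabilities by a constant amount but do not push them outside $[c,1-c]$.
\end{itemize}
The drop in $w_t$ then follows from the smooth, strictly monotone dependence of $q_{21}/(q_{12}+q_{21})$ on the copies' probabilities.
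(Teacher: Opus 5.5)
Your proposal is correct and follows essentially the paper's route. It uses full monotonicity of $w_t$ (so $h(T)=0$) and a choice of $\varepsilon(T)$ that makes $w_t$ drop by a constant over the horizon; for Blum--Mansour it uses the stationary formula $w_t=y_t/(1-x_t+y_t)$ plus a constant lower bound on the second copy's feedback weight $1-w_t$, so that a constant drop in $y_t$ transfers to $w_t$ via a derivative bound (the paper gets that lower bound from $x_t,y_t\le\tfrac12$). The differences are minor: exponential-base MWU, general perturbations for FTPL, a harmless factor-$2$ slip in the OGA projection step (the paper has $w_{t+1}=(w_t-\varepsilon\eta_t/2)_+$), and constant rather than $\Theta(1/\sqrt t)$ OGA step sizes, which the paper handles by pairing $t$ with $t+T/2$ and using $\sum_{s=t}^{t+T/2-1}\eta_s\ge\sum_{s=T/2}^{T-1}\eta_s$.
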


We are now ready to state the main result.
\begin{theorem}\label{thm:impos-ivfe}
    Fix any learning algorithm $\mathcal L$ with no swap regret against any smooth learner $\mathcal A$. There exists a constant $c>0$ (depending on the smooth-learning property of $\mathcal A$) and a sequence of games $\{G_T\}_{T=1}^\infty$ such that, when the players play $G_T$ for $T$ rounds, the empirical history of play $\tau_T$ remains an asymptotically constant distance from an IVFE:
    \[ \liminf_{T \to \infty} d_{G_T}(\tau_T) \geq c. \]
\end{theorem}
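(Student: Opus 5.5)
The plan is to build, for each horizon $T$, a small game $G_T$ with two opponent arms whose payoff gap is the $\eps(T)$ of \Cref{def:smooth}, and to show that any no-swap-regret $\mathcal L$ ends up with best-in-hindsight regret at most $-c$. Then $d^{IVF}_{G_T,1}(\tau_T)\ge c$ and the theorem follows. The constant $c$ will come from the $\Omega(T)$ gap $\sum_{t\le T/2}w_t-\sum_{t>T/2}w_t$ in \Cref{def:smooth}.

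\textbf{A first construction that fails.} Let the opponent's arm rewards be $0$ and $\eps$ regardless of player 1. Then the opponent's mixed strategy follows exactly the smooth trajectory $w_t$ of \Cref{def:smooth}. Give player 1 two actions $X,Y$ whose reward difference in round $t$ is $w_t-\theta$, where $\theta=(W_1+W_2)/2$ and $W_1,W_2$ are the first-half and second-half averages of $w_t$.
\begin{itemize}
\item Because $w_t$ is weakly decreasing except on a vanishing fraction $h(T)$ of rounds, $w_t-\theta$ changes sign essentially once.
\item The positive mass $P$ and the negative mass $N$ of the sequence $(w_t-\theta)_t$ are each at least $\delta T/2$, where $W_1-W_2\ge 2\delta$.
\item The switching benchmark (play $X$ before the crossing, $Y$ after) beats the best fixed action by $\min(P,N)\ge\delta T/2$.
\end{itemize}
This is insufficient. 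Swap regret does not compare against time-switching benchmarks: a player who mixes with a constant distribution has swap regret equal to her best-in-hindsight regret, which is nonnegative. More generally, against an opponent whose reward sequence does not depend on player 1, constant play never violates information-value-freeness.

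\textbf{The intended construction: feedback from player 1 to the opponent.} The opponent's arm gap is $+\eps$ or $-\eps$ depending on player 1's action. Player 1's payoffs are chosen so that her unique best response to the opponent's current mixed strategy flips when $w_t$ crosses a threshold. The argument then has three steps.
\begin{enumerate}
\item Show that any strategy of player 1 that keeps swap regret $o(1)$ must put most of its mass on the currently-best action during long stretches. This is where swap regret bites: constant mixing is no longer consistent with the induced opponent dynamics, because the opponent drifts away from the mixture's best response.
\item Use the smoothness of \Cref{def:smooth} to show that, once player 1 commits, the opponent drifts monotonically and slowly. The choice of $\eps(T)$ makes the drift take order-$T$ rounds. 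Hence player 1's action is correlated with time, and therefore with the opponent's action distribution, over order-$T$ rounds.
\item Conclude, as in the switching computation above, that player 1's conditional play beats every fixed action by a constant. That constant is governed by the half-sum gap of $w_t$. This gives $\text{BIH-Regret}_T\le -c$.
\end{enumerate}
If instead player 1 avoids tracking, I would show that her swap regret is bounded below by a constant. That would contradict the no-swap-regret assumption on $\mathcal L$.

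\textbf{Main obstacle.} The hard part is step 1 combined with the feedback loop. It must be shown that no no-swap-regret strategy of player 1 can keep the opponent nearly stationary (for example by balancing the $\pm\eps$ gap) while remaining calibrated. My first attempt would give player 1 a strictly dominant incentive on one side of the threshold that also pushes the opponent across it. Any stationary behavior would then leave constant swap regret, forcing the dynamics through a crossing in which the correlation argument of step 3 applies.
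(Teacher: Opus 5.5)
\textbf{There is a genuine gap. It begins with your reason for abandoning the oblivious construction.} The fact that constant play has swap regret equal to its best-in-hindsight regret only shows that, on a fixed reward sequence, some hindsight-chosen strategy is both swap-regret-free and information-value-free. It does not constrain an online $\mathcal L$, which must have no swap regret against \emph{every} smooth learner without knowing which one it faces. The paper's proof uses exactly the oblivious setting you discard: the row player's payoff is $0$ or $\eps$ regardless of the column player, so her trajectory is the smooth $w_t$.

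The idea you are missing is an indistinguishability argument. Let $\tau$ be the last round of the monotone set $H$ with $w_t>\overline w+\beta$. Compare the actual run with a counterfactual row player who follows $w_t$ up to $\tau$ and then freezes at $\overline w+\beta$. This counterfactual is still a smooth sequence (Claim~\ref{claim:counterfactual_smooth}), and $\mathcal L$ behaves identically in both runs through round $\tau$. Give the column player three actions with expected payoffs $w_t-\overline w-\beta$ ($L$), $\overline w-w_t$ ($M$) and $0$ ($R$). In the counterfactual, every action earns at most $0$ after $\tau$, while $L$ earns $\frac34A-o(T)=\Omega(T)$ before $\tau$. So the swap-everything-to-$L$ bound forces $\mathcal L$ to collect $\Omega(T)$ in period one, in both runs.

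In the actual run, three ingredients combine: the assumed near-information-value-freeness, the swaps $L\to R$ and $R\to M$, and the pointwise identity $G_M^{(t)}=-G_L^{(t)}-\beta$. Together they force $L$ to be played at most $\frac{2\eta}{\beta}T+o(T)$ times in expectation. Meanwhile $M$ loses at least $\beta$ per round on all but $o(T)$ rounds of period one. Hence the period-one payoff is at most $\frac{5\eta}{2\beta}T+o(T)$, which contradicts the lower bound once $\eta=\beta^2$.

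Your two-action version is weak because of its payoffs, not because it is oblivious. The counterfactual would force $X$ before the crossing, but nothing then prevents $\mathcal L$ from continuing to play $X$. That keeps both swap regret and best-in-hindsight regret at zero on the actual sequence. The third action and the coupling $G_M^{(t)}=-G_L^{(t)}-\beta$ are what close that escape.

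\textbf{The feedback route you substitute does not repair this.}
\begin{itemize}
\item \Cref{def:smooth} only describes a smooth learner's trajectory in the two-arm experiment with a \emph{fixed} reward vector $(0,\eps)$. Once the opponent's gap is $\pm\eps$ according to player 1's (possibly mixed) action, the opponent faces a time-varying expected gap such as $(2q_t-1)\eps$. The definition says nothing about its monotonicity, its speed, or how it reverses when player 1 switches, so Step 2 has no support.
\item Step 1 is not a consequence of no swap regret. No swap regret is an aggregate condition over the horizon and does not by itself force best-responding on long stretches. The only lever in the hypotheses that can force such tracking is the quantifier over all smooth learners, which your plan never invokes.
\item No concrete game is specified. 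Steps 1--3, and the alternative ``if player 1 avoids tracking'', are stated as intentions rather than arguments.
\end{itemize}
As written, the proposal is not a proof.
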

From the information-economics perspective, the result informally says: rationalizable learning generates valuable information.

From the proof of \Cref{thm:impos-ivfe}, we also obtain the following:
\begin{corollary}\label{coro:impos-ivfe}
    There exists a stage game such that for any time horizon $T$, no adversarially robust no-swap-regret learning algorithm can be information-value-free unilaterally (i.e., regardless of the opponent's behavior).
\end{corollary}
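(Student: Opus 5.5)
We would prove \Cref{coro:impos-ivfe} directly with an \emph{adaptive} opponent, rather than by running a smooth learner against $\mathcal L$ as in \Cref{thm:impos-ivfe}. The corollary quantifies over all opponent behaviours, so the opponent may observe $p^t$ and choose its mixed strategy, and hence $r^t$, accordingly. The stage game is fixed, and the learner's reward vector is determined by the opponent's column: the learner has two ``directional'' actions $1,2$ and a ``safe'' action $0$. The opponent has columns realising reward vectors $(r_0,r_1,r_2)=(s,1,0)$ and $(s,-b,1)$ for constants $s\in(0,1)$ and $b>0$ to be tuned. We also use that adversarial robustness gives $\text{SwapRegret}_T\le\delta(T)=o(1)$ uniformly over opponent sequences, so external regret is also at most $\delta(T)$.

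\textbf{Step 1 (forcing commitment in the first half).} For the first $T/2$ rounds the opponent plays the column $(s,1,0)$. Let $\alpha$ be the mass $\frac1T\sum_{t\le T/2}p^t_1$. The learner's play in rounds $t\le T/2$ does not depend on what happens later. So the opponent may branch on the realised $\alpha$:
\begin{itemize}
\item If $\alpha<\frac12-\eta$, the opponent keeps the first column for all $T$ rounds. Action $1$ then earns $1$ per round, while the learner loses at least $\eta(1-s)$ per round on average. This gives external regret at least $\eta(1-s)$, contradicting the no-swap-regret guarantee once $\eta\gg\delta(T)$.
\item Otherwise, with $\eta=\eta(T)\to0$ chosen appropriately, we may assume $\alpha\ge\frac12-o(1)$.
\end{itemize}

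\textbf{Step 2 (switching the environment).} In the second half the opponent plays $(s,-b,1)$. Let $\beta$, $\gamma$ and $\zeta$ be the second-half masses on actions $1$, $2$ and $0$, so $\beta+\gamma+\zeta=\frac12$.
\begin{itemize}
\item The best fixed action earns per-round value $\max\{s,\;\tfrac{1-b}{2},\;\tfrac12\}$. So we choose $s$ and $b$ to make this equal to $\max\{s,\frac12\}$.
\item The learner's payoff is $\alpha-b\beta+\gamma+s\zeta$.
\item Swap regret at most $\delta$ gives linear constraints. Conditioned on action $1$, the total reward is $\alpha-b\beta$. This must be at least the conditional total of action $0$, namely $s(\alpha+\beta)$, and of action $2$, namely $\beta$, up to $o(1)$. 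There are analogous constraints conditioned on actions $0$ and $2$.
\end{itemize}

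We then minimise the learner's payoff over all $(\beta,\gamma,\zeta)$ satisfying these linear constraints, with $\alpha\approx\frac12$. The aim is to show the minimum exceeds the best-fixed value by a constant $c(s,b)>0$. That gives $\text{BIH-Regret}_T\le -c+o(1)$, i.e., the learner is not information-value-free. The intuition is that rationality conditioned on the learner's own action forces it to ``remember'' the first-half regime. It therefore earns close to $\frac12$ there, yet it cannot keep playing action $1$ heavily once the regime switches, because the conditional average of action $1$ would fall below $s$.

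\textbf{Main obstacle.} The difficulty is this linear program. With the naive choice $b=1$ there are extreme feasible points: the learner mixes actions $1$ and $0$ with $\beta=\frac{(1-s)\alpha}{1+s}$, and makes the conditional constraint for action $1$ tight. At such points the payoff equals exactly the best-fixed value, so the gap vanishes. We would first try increasing $b$, e.g.\ $b=2$, and/or asymmetric phase lengths, and then re-solve the LP, checking that every vertex leaves a strict gap. If no fixed three-action game works, the fallback is to import the construction of \Cref{thm:impos-ivfe}. There the opponent's mixture drifts slowly, exactly as a smooth learner on an $\eps$-gap instance would. Since a particular smooth learner is one admissible opponent, the theorem's lower bound $d_{G_T}(\tau_T)\ge c$ transfers. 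Swap regret at most $o(1)$ then forces $d^{CE}_{G_T,1}=o(1)$, so the violation must be of $d^{IVF}$. We would still need to localise this violation to the learner's side rather than the opponent's. We would also need to rescale the $\eps(T)$-dependent games $G_T$ into a single stage game, by letting the opponent's mixture probabilities, rather than the payoff entries, encode the small gap.
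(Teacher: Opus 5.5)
Your route is essentially the paper's own argument for this corollary, namely the proof sketch following \Cref{thm:impos-ivfe}. That sketch uses a fixed game with a safe action, a regime-one action and a regime-two action (payoffs $(\tfrac12,-1,0)$ and $(-1,1,0)$), switches Nature at $T/2$, forces commitment via the counterfactual ``keep regime one'' continuation, and then applies swap constraints in the second half. However, you never complete the decisive step, and your diagnosis of it is wrong. The point you call an extreme feasible point for $b=1$, where actions $1$ and $0$ are mixed in the second half, is infeasible, because you dropped the constraint conditioned on action $0$. Step 1 leaves first-half mass $\zeta_1\le\frac12-\alpha=o(1)$ on action $0$. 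So action $0$ is played essentially only in regime two, where swapping $0\to 2$ gains $1-s$ per unit of mass. The swap constraint $s(\zeta_1+\zeta)\ge \zeta-\delta$ then forces $\zeta\le \frac{s\zeta_1+\delta}{1-s}=o(1)$. This is the paper's third claim, and it makes the LP trivial. The payoff becomes $\alpha+\tfrac12-(1+b)\beta+s\zeta_1-(1-s)\zeta\ge 1-(1+b)\beta-o(1)$. The swaps $1\to 0$ and $1\to 2$ give $\beta\le\min\{\frac{(1-s)\alpha}{b+s},\frac{\alpha}{1+b}\}+o(1)$.

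What actually needs care is the choice of $s$, which you left free. If $s\le\frac{1}{2+b}$, the $1\to 2$ constraint binds first. Then $\beta=\frac{1}{2(1+b)}$, $\gamma=\frac12-\beta$, $\zeta=0$ has zero swap regret and payoff exactly $\frac12$, which equals the best fixed value, so in that range the gap genuinely vanishes. You need the safe-action swap to be the binding constraint, i.e.\ $s>\frac{1}{2+b}$. In the paper's numbers, the safe swap gives $t\le T/4$ while the other swap gives only $t\le 3T/8$. For instance, $b=1$ and $s=\frac12$ give $\beta\le\frac16+o(1)$ and payoff at least $\frac23-o(1)$ against a best fixed value of $\frac12$. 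Hence $\text{BIH-Regret}_T\le-\frac16+o(1)$, which completes the proof without any LP search. The fallback through \Cref{thm:impos-ivfe} is unnecessary. As stated it is also not a proof: the games $G_T$ there depend on $T$ through $\overline w$ and $\beta$, so they do not directly yield the single stage game the corollary requires.
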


By an argument similar to that of \Cref{prop:ne_ivfe_ce}, any $c$-Nash equilibrium is also within $c$ distance to an IVFE. Therefore, we also obtain the following corollary:
\begin{corollary}\label{coro:nashimpos}

    Fix any learning algorithm $\mathcal L$ with no swap regret against any smooth learner $\mathcal A$. There exists a constant $c>0$ (depending on the smooth-learning property of $\mathcal A$) and a sequence of games $\{G_T\}_{T=1}^\infty$ such that when the players play $G_T$ for $T$ rounds, the empirical history of play $\tau_T$ is not a $c$-Nash equilibrium of $G_T$ for every sufficiently large $T$.
\end{corollary}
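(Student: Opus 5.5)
Corollary \ref{coro:nashimpos} follows by combining Theorem \ref{thm:impos-ivfe} with a quantitative version of the inclusion $\text{NE}\subseteq\text{IVFE}$ from Proposition \ref{prop:ne_ivfe_ce}. I will show that if $\tau$ is a $c'$-Nash equilibrium of $G$, then $d_G(\tau)\le c'$. Then I take the games $\{G_T\}$ and the constant $c$ from Theorem \ref{thm:impos-ivfe} and argue by contradiction. Since $\liminf_{T} d_{G_T}(\tau_T)\ge c$, for any fixed $c'<c$ (say $c'=c/2$) we have $d_{G_T}(\tau_T)>c'$ for all sufficiently large $T$. Hence $\tau_T$ is not a $c'$-Nash equilibrium for those $T$, and the corollary holds with constant $c/2$.

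For the key inequality, I first fix the notion of $c'$-Nash equilibrium for the distribution $\tau_T=\frac1T\sum_t p^t\times (p')^t$. This distribution is generally correlated, so I read ``$\tau_T$ is a $c'$-Nash equilibrium'' as: $\tau_T$ is a product distribution $\tau_1\times\tau_2$, and no player gains more than $c'$ from any unilateral deviation. If $\tau_T$ is not a product distribution, it is not a Nash equilibrium at all and there is nothing to prove. With this convention, the bound on $d^{IVF}_{G,i}$ is immediate. For a product distribution, $\mathbb E_\tau[u_i(a)]=\sum_{a_i}\tau_i(a_i)\,\mathbb E_{a_{-i}\sim\tau_{-i}}[u_i(a_i,a_{-i})]$. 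This is an average of the fixed-action payoffs, so it is at most $\max_{a_i^\ast}\mathbb E_\tau[u_i(a_i^\ast,a_{-i})]$. Hence $d^{IVF}_{G,i}(\tau)=0$, and this bound does not even use the $c'$-Nash property.

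For the correlated-equilibrium part, independence means that conditioning on $a_i$ leaves the law of $a_{-i}$ equal to $\tau_{-i}$. Therefore $d^{CE}_{G,i}(\tau)=\mathbb E_{a_i\sim\tau_i}\big[\max_{a_i'}u_i(a_i',\tau_{-i})-u_i(a_i,\tau_{-i})\big]=\max_{a_i'}u_i(a_i',\tau_{-i})-u_i(\tau_i,\tau_{-i})$. This is exactly player $i$'s best unilateral deviation gain, which is at most $c'$. So $d_G(\tau)\le c'$.

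The main obstacle is purely definitional. The paper says ``by an argument similar to Proposition \ref{prop:ne_ivfe_ce},'' so I must commit to a definition of $c$-Nash for a possibly correlated empirical distribution. If the intended definition instead compares $\tau_T$ with some product distribution within a small distance, I would add a continuity step. The map $\tau\mapsto d_G(\tau)$ is Lipschitz in total variation with a constant depending only on the payoff bound $c$ on $[-c,c]$, except that the conditional terms in $d^{CE}$ need care. I would handle them by writing $d^{CE}_{G,i}(\tau)=\sum_{a_i}\max_{a_i'}\sum_{a_{-i}}\tau(a)\,(u_i(a_i',a_{-i})-u_i(a))$, which has no conditioning and is clearly Lipschitz. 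The constant then shrinks by a fixed factor, which still gives the corollary.
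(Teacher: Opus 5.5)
Your proposal is correct and matches the paper's one-line argument. Viewing a $c'$-Nash equilibrium as a product distribution gives $d^{IVF}_{G,i}=0$ and $d^{CE}_{G,i}$ equal to the best unilateral deviation gain, so $d_G\le c'$, and \Cref{thm:impos-ivfe} then finishes the proof. Your switch to the constant $c/2$ is a worthwhile refinement: $\liminf_{T} d_{G_T}(\tau_T)\ge c$ alone does not guarantee $d_{G_T}(\tau_T)>c$ for all large $T$, a point the paper's statement glosses over.
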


The full proof of \Cref{thm:impos-ivfe} can be found in \Cref{subsec:proof-impos-ivfe}. Below, we illustrate the main idea in a simplified setting.
\begin{proof}[Proof sketch]
\phantom{filler}

\par

\textbf{Construction.}
    Fix any time horizon $T$ and a player using any no-swap-regret learning algorithm. Suppose the player faces an adversarial Nature instead of another player using a smooth learner.
    
    Consider the setup shown in the table below: Nature can choose between two states and the player has three actions 1, 2, and 3. In one state, the three actions have payoffs $\frac 12, -1, 0$ and in the other state the payoffs are $-1, 1, 0$, respectively.

\begin{table}[htbp]
        \centering
        \begin{tabular}{|c|c|c|c|}
            \hline
                  & Action 1 & Action 2 & Action 3 \\ \hline
            State 1 & $1/2$  & $-1$ & $0$ \\ \hline
            State 2 & $-1$  & $1$ & $0$ \\ \hline
        \end{tabular}
    \end{table}

\textbf{Analysis.}
    For simplicity, assume the player's algorithm is adversarially robust. Furthermore, assume her actions are deterministic, so we can directly count the rounds in which each action is played, and her swap regret is (exactly) non-positive and her best-in-hindsight regret is (exactly) non-negative to avoid any accounting for asymptotic slack.
    
    Nature now sets the first $T/2$ rounds to be in the first state, and the remaining $T/2$ rounds to be in the second state. We show that the player's total payoff is at least $T/4$. Note that the best-in-hindsight payoff is zero, so the best-in-hindsight regret is at most $-T/4$. To do so, we pin down the player's behavior with the following three claims:
    \begin{enumerate}
    \item The player must play action 1 in state 1.
    \item The player can play action 1 for at most $T/4$ rounds in state 2.
    \item The player cannot play action 3 in state 2, so she plays action 2 for at least $T/4$ rounds.
  \end{enumerate}
    
    For the first claim that the player must play action 1 in state 1, note that the player satisfies no swap regret even with an adversary, so during the first $T/2$ rounds, she must have played $T/2$ rounds of action 1. Otherwise, Nature can continue to select the first state and the player would have positive swap regret. 
    
    For the second claim that the player can play action 1 for at most $T/4$ rounds in state 2, let $t$ be the number of rounds she plays action 1 in the remaining $T/2$ rounds. Consider the potential swap from action 1 to action 3. Had she made the swap, the payoff would be zero. And her total payoff when playing action 1 is $\frac 12 \cdot T/2-t$. Therefore, by the no-swap-regret guarantee, $\frac 12 \cdot T/2-t \geq 0$ and $t \leq T/4$.
    
    For the third claim that the player cannot play action 3 in state 2 and therefore plays action 2 for at least $T/4$ rounds, suppose for contradiction that the player plays action 3. Then the swap from action 3 to action 2 contributes to positive swap regret, which is a contradiction. Therefore, the remaining $T/2-t$ rounds must be in action 2, and action 2 is played for at least $T/4$ rounds since $t \leq T/4$ by the previous paragraph.
    
    Combining the three claims above, the player's total payoff is $\frac{1}{2} \cdot T/2-t+(T/2-t) = \frac 34T-2t \geq T/4>0$, where zero is the payoff of the best-in-hindsight action. Therefore, the distance to an IVFE is at least $1/4$.

    In the full proof, we construct a stage game and show that, as long as the other player uses a smooth learner (\Cref{def:smooth}), she behaves similarly to the adversarial Nature described above, and we carefully analyze their behavior to obtain the final impossibility result.
\end{proof}

\section{Discussion}\label{sec:conclusion}

\subsection{Algorithmic Collusion}\label{subsec:ac}

One of the most important implications of our results concerns algorithmic collusion. \citet{calvano2020artificial} first demonstrate that Q-learning algorithms with rich enough state spaces, when run against each other, learn reward-punishment schemes and produce supra-competitive prices. Similar supra-competitive behaviors have been observed for other algorithms and environments. A concerning feature of these behaviors is that the algorithms maintain supra-competitive prices without being explicitly programmed to learn such
strategies or to \emph{communicate explicitly}. This feature creates significant difficulties for regulators because, in the context of antitrust, many jurisdictions outlaw only explicit agreements to coordinate prices in the form of overt communication. However, with algorithms, harmful coordination on prices can happen implicitly through market interactions. For example, in \citet{calvano2020artificial}, the reward-punishment scheme learned by the Q-learning algorithm is never supplied as input to the algorithm or hard-coded into it. Of course, such schemes would be illegal if announced explicitly by the firms.

A tempting way to protect against supra-competitive outcomes in the presence of algorithms is to also rule out any implicit information exchange. Indeed, many regulators in practice suggest remedies by completely banning algorithms \citep{CA_AB325_2025,NYS7882_2025,dojrealpage2024} or banning all communication channels between algorithms \citep{USCongress2025S232}.

However, the currently proposed remedies have been criticized \citep{harrington2025critique,harrington2025challenges}. Our results add to this critique from a technical and structural perspective. In particular, our results imply that implicit information exchange is an inevitable feature of rationalizable learning that optimizes and adapts to the environment. While one could argue that our worst-case construction might not reflect representative scenarios, our results imply that regulators cannot implement a blanket ban on implicit informational exchange, since it would inevitably and adversely affect environments where optimization is necessary for algorithms to find competitive outcomes. Policies that neglect such inherent features can be counterproductive, as they could preclude optimization algorithms that find competitive outcomes. 

Several implications for antitrust policy are in order. First, the legal focus on overt communication is inherently counterproductive in markets mediated by learning algorithms, as it overlooks the fundamental mechanism through which rationalizable learning occurs. On this view, the policy lesson is that not all information exchange among learning algorithms should be banned. Rather, regulation should distinguish information that is necessary for proper optimization from information that predictably enables supra-competitive coordination. This raises the question of how to distinguish permissible learning and adaptation from harmful collusion. We argue that a possible avenue is to move beyond communication-based doctrines toward outcome- or mechanism-based approaches, including statistical tests using deployment data, safe harbors for algorithms that pass such tests, as well as stronger scrutiny of shared vendors, common pricing recommendations, and data flows that aggregate competitor-sensitive information \citep{chassang2023regulating, hartline2024regulation, harrington2025challenges}.

Furthermore, \Cref{coro:nashimpos} on time-average convergence to Nash equilibrium also provides insights into how equilibrium behavior should be interpreted in the study of algorithmic collusion. To elaborate, the usual gold standard for competitive outcomes is Nash equilibrium. However, our results provide evidence against this baseline choice from an information perspective.

\subsection{Open Questions}
It remains open whether there exists a strictly weaker condition than \Cref{def:zer} (information-value-free) such that (a) it is unilaterally achievable, and (b) learning dynamics in which all algorithms simultaneously possess such a property guarantee convergence to the (stronger) IVFE. We conjecture that the answer is negative because our results imply that any smooth learner that has no swap regret against smooth learners cannot satisfy the desiderata. However, we believe that it is unlikely to find non-smooth learners with good learning-theoretic guarantees.

Moreover, our impossibility result relies on a non-computational-complexity assumption on the learning algorithms. It remains open whether IVFE can be learned if one relaxes the assumption and instead restricts attention to learning algorithms whose total number of iterations $T$ is polynomial in the size of the normal-form representation of the underlying game.

\bibliographystyle{ACM-Reference-Format}
\bibliography{refs}

\clearpage
\newpage
\section*{Appendices}
\appendix
\section{\texorpdfstring{Omitted Proofs in \Cref{sec:ff}}{Omitted Proofs in Section~\ref*{sec:ff}}}
\subsection{\texorpdfstring{Proof of \Cref{prop:ne_ivfe_ce}}{{Proof of Proposition~\ref*{prop:ne_ivfe_ce}}}}
\begin{proposition*}
Let NE, IVFE, CE be the sets of Nash equilibria, information-value-free equilibria, and correlated equilibria. Then $\text{NE} \subsetneq \text{IVFE} \subsetneq \text{CE}$.
\end{proposition*}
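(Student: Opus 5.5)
We will prove the two inclusions first and then give a small example separating each pair of sets.

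\textbf{NE $\subseteq$ IVFE.} Let $\tau = \tau_1 \times \tau_2$ be a (mixed) Nash equilibrium. A product distribution satisfies $\ee{a\sim\tau}{u_i(a_i',a_{-i}) \mid a_i} = \ee{a_{-i}\sim\tau_{-i}}{u_i(a_i',a_{-i})}$ for every $a_i$ in the support, because conditioning on $a_i$ does not change the distribution of $a_{-i}$.

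For the correlated equilibrium condition, every $a_i$ in the support of $\tau_i$ is a best response to $\tau_{-i}$ by the Nash property, so condition (1) of \Cref{def:ivfe} holds.

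For information-value-freeness, pick any $a_i^\ast$ in the support of $\tau_i$. By indifference across the support, $\ee{\tau}{u_i(a)} = \ee{\tau_{-i}}{u_i(a_i^\ast,a_{-i})} = \ee{\tau}{u_i(a_i^\ast,a_{-i})}$, which gives condition (2).

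\textbf{IVFE $\subseteq$ CE.} This is immediate, since condition (1) of \Cref{def:ivfe} is exactly the correlated equilibrium condition.

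\textbf{Strictness of NE $\subsetneq$ IVFE.} We want a non-product IVFE, and the simplest candidate is a correlated mixture of pure Nash equilibria. Take a coordination game with $u_1=u_2=1$ on $(A,A)$ and $(B,B)$ and $0$ otherwise. Let $\tau$ put mass $1/2$ on each of $(A,A)$ and $(B,B)$.
\begin{itemize}
\item $\tau$ is a CE, because conditioning on one's own action reveals the opponent's action exactly.
\item $\tau$ is not a product distribution, so it is not a Nash equilibrium.
\item $\tau$ is not information-value-free: the equilibrium payoff is $1$, while each fixed action earns only $1/2$ against the opponent's marginal.
\end{itemize}
So this example fails, and we need asymmetric payoffs so that the value of information disappears. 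Modify the game so the players are indifferent between the two coordinated outcomes but a fixed action still matches the equilibrium payoff. Concretely, let player 1 have two actions with identical payoff rows, say $u_1 \equiv 0$, and let player 2 play matching, with $u_2=1$ when the actions match and $0$ otherwise.
\begin{itemize}
\item The CE constraints hold for $\tau$. Player 1 is trivially fine, and player 2 observes her action, which reveals player 1's action, so matching is optimal.
\item Information-value-freeness fails for player 2, who gets $1$ versus $1/2$ from any fixed action.
\end{itemize}
This also fails, so reverse the roles: make the informed player the indifferent one. Let $u_1 \equiv 0$ and $u_2 \equiv 0$ on all four profiles, and take $\tau$ uniform on $\{(A,A),(B,B)\}$.
\begin{itemize}
\item $\tau$ is a CE because every deviation is payoff-neutral.
\item It is IVF because every fixed action gives $0$, matching the equilibrium payoff.
\item It is not a product distribution.
\end{itemize}
Here there is a subtlety. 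A correlated non-product distribution is not a Nash equilibrium in the product sense, but one could argue that its payoff is still a Nash payoff. We use the standard definition that NE is the set of product distributions, under which this all-zero game suffices. If a less degenerate example is desired, a game with a strictly positive gap is also possible, but the trivial game is enough.

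\textbf{Strictness of IVFE $\subsetneq$ CE.} The pure-coordination example above does this job. Its $\tau$ is a CE, but player 1's payoff is $1$, strictly greater than the $1/2$ achievable by any fixed action. Hence $\tau \notin$ IVFE.

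The main obstacle is only in choosing examples. The one subtle point is the first strictness, where we must make sure the correlated distribution is genuinely IVF. The degenerate constant game handles this cleanly.
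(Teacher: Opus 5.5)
Your proof is correct. The inclusion $\text{NE}\subseteq\text{IVFE}$ matches the paper's argument: the paper observes that the Nash payoff is a maximum of a linear function over mixed strategies, so it is attained at some pure action $a_i^\ast$, and you reach the same $a_i^\ast$ by taking any support action and using indifference. $\text{IVFE}\subseteq\text{CE}$ holds by definition in both.

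The difference is in the separating examples. For $\text{NE}\neq\text{IVFE}$, your all-zero game is logically sufficient. Strictness can only mean ``in some game,'' since e.g.\ a game with a strictly dominant profile has $\text{NE}=\text{IVFE}=\text{CE}$. NE is identified with product distributions, so any non-product IVFE is a witness.

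The paper instead uses the chicken-type game with payoffs $(5,1),(0,0),(4,4),(1,5)$ and the distribution $\tfrac13(a_1,b_1)+\tfrac13(a_2,b_1)+\tfrac13(a_2,b_2)$. There each player's best fixed action exactly matches the equilibrium payoff $10/3$. That example shows more: the payoff $(10/3,10/3)$ lies outside the convex hull of the Nash payoffs $(5,1),(1,5),(5/2,5/2)$. So IVFE is substantively larger than NE, not just a set of payoff-equivalent correlated versions of Nash play. This resolves the ``subtlety'' you flag but leave open.

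In the other direction, your pure-coordination example gives an explicit witness for $\text{IVFE}\neq\text{CE}$ (payoff $1$ versus $\tfrac12$ from any fixed action). The paper's proof never actually supplies this part of the strict chain. In a final write-up, drop the two abandoned attempts and present only the witnesses you use.
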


We show every Nash equilibrium is an IVFE but not vice versa.

First, every Nash equilibrium is a correlated equilibrium. Second, for every player $i$, let $\tau_i$ and $\tau_{-i}$ be the strategies of the player and the opponent, respectively. By definition of Nash equilibrium
\[
\ee{a \sim \tau}{u_i(a)} = \ee{a_i\sim \tau_i}{\ee{a_{-i} \sim \tau_{-i}}{u_i(a_i,a_{-i})}} = \max_{\tau_i'} \ee{a_i\sim \tau_i'}{\ee{a_{-i} \sim \tau_{-i}}{u_i(a_i,a_{-i})}}.
\]
By linearity, the maximizer is attained at an extreme point which implies that there exists $a_i^\ast \in A_i$ such that
\[
\max_{\tau_i'} \ee{a_i\sim \tau_i'}{\ee{a_{-i} \sim \tau_{-i}}{u_i(a_i,a_{-i})}} = 1\cdot \ee{a\sim \tau}{u_i(a_i^\ast,a_{-i})},
\]
and hence
\[
\ee{a \sim \tau}{u_i(a)} \leq \ee{a\sim \tau}{u_i(a_i^\ast,a_{-i})}.
\]

To see that the converse is not true,
consider the following two-player game.
\begin{table}[htbp]
    \centering
    \begin{tabular}{|c|c|c|}
        \hline
              & $b_1$ & $b_2$ \\ \hline
        $a_1$ & 5, 1  & 0, 0  \\ \hline
        $a_2$ & 4, 4  & 1, 5  \\ \hline
    \end{tabular}
\end{table}
The game has three Nash equilibria: $(a_1, b_1), (a_2, b_2)$, and the mixed equilibrium $(\frac 12a_1 + \frac 12a_2, \frac 12b_1 + \frac 12b_2)$. An IVFE that is not a Nash equilibrium is $\frac 13(a_1, b_1) + \frac 13(a_2, b_1) + \frac 13(a_2, b_2)$.

\subsection{\texorpdfstring{Proof of \Cref{prop:static_lp}}{Proof of Theorem~\ref*{prop:static_lp}}}
\begin{theorem*}
    An IVFE always exists and can be computed offline in polynomial time.
\end{theorem*}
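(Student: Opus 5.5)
Existence and efficient computation will be handled separately. Existence follows immediately from \Cref{prop:ne_ivfe_ce} together with Nash's theorem. Every finite two-player game has a (mixed) Nash equilibrium, and by \Cref{prop:ne_ivfe_ce} every Nash equilibrium is an IVFE. Hence the set of IVFEs is nonempty.

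For computation, I will enumerate the candidate pairs of fixed deviation actions $(a_1^\ast,a_2^\ast)\in A_1\times A_2$. There are $|A_1|\cdot|A_2|$ such pairs, which is polynomial in the size of the normal-form representation. For each pair I will write a feasibility linear program in the variables $\tau(a)$, $a\in A_1\times A_2$, with the following constraints:
\begin{itemize}
\item $\tau(a)\ge 0$ and $\sum_a \tau(a)=1$;
\item the correlated-equilibrium constraints in the unconditioned (multiplied-out) form $\sum_{a_{-i}}\tau(a_i,a_{-i})\bigl(u_i(a_i,a_{-i})-u_i(a_i',a_{-i})\bigr)\ge 0$ for all $i$, $a_i$, $a_i'$;
\item the information-value-free constraints $\sum_a \tau(a)\bigl(u_i(a)-u_i(a_i^\ast,a_{-i})\bigr)\le 0$ for $i=1,2$.
\end{itemize}

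The one point that needs care is that the multiplied-out CE constraints are equivalent to the conditional constraints in \Cref{def:ivfe}. When $\Pr_\tau[a_i]>0$, divide by it. When $\Pr_\tau[a_i]=0$, the constraint reads $0\ge 0$, which matches the vacuous case in the definition. All constraints are therefore linear in $\tau$. Each LP has $|A_1||A_2|$ variables and $O(|A_1|^2+|A_2|^2)$ constraints, with coefficients given by payoff differences, so it can be solved (or infeasibility detected) in polynomial time by the ellipsoid or interior-point method.

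Finally, I will argue that some LP in the enumeration is feasible and that any feasible solution is an IVFE. Take any IVFE $\tau^\circ$, which exists by the first step. It has witnesses $a_1^\ast,a_2^\ast$ for condition 2, so $\tau^\circ$ is feasible for the LP indexed by that pair. Conversely, any feasible point of any LP in the family satisfies both conditions of \Cref{def:ivfe} by construction, so it is an IVFE. The algorithm is therefore to solve all the LPs and output the first feasible solution.

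The total running time is polynomial: a polynomial number of polynomial-size LPs. I expect no real obstacle. The only subtle step is the equivalence between the conditional CE constraints and their linear multiplied-out form on zero-probability actions. Existence is deliberately derived from Nash's theorem rather than from the LPs themselves, which avoids any fixed-point argument inside the algorithm.
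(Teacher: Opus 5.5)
Your proposal is correct and follows the paper's proof: existence comes from every mixed Nash equilibrium being an IVFE, and computation comes from enumerating the $|A_1||A_2|$ candidate witness pairs $(s,t)$ and solving one CE-plus-IVF feasibility LP for each, with the same two-way argument that some LP is feasible and that any feasible point is an IVFE. Your explicit check that the multiplied-out CE constraints are equivalent to the conditional ones on zero-probability actions, and your stated LP size bounds, are details the paper leaves implicit.
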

A mixed Nash equilibrium is a special case of an IVFE, so it always exists.

\paragraph{Algorithm} Let $r_{ij}:=u_1(a_i, a_j)$, $c_{ij}:=u_2(a_i, a_j)$, $n=|A_1|$, and $m=|A_2|$. For every $a_s \in A_1, a_t \in A_2$, we solve the following linear program $(\text{LP}(s, t))$ in variables $p_{ij}\,(1 \leq i \leq n, 1 \leq j \leq m)$
\begin{align*}
    \sum_{j=1}^m p_{ij}(r_{ij}-r_{i'j}) & \geq 0, \quad \forall i, i' \in [n], i \neq i', && \text{(no swap regret for the row player)} \\
    \sum_{i=1}^n p_{ij}(c_{ij}-c_{ij'}) & \geq 0, \quad \forall j, j' \in [m], j \neq j', && \text{(no swap regret for the column player)} \\
    \sum_{i, j} p_{ij}(r_{sj}-r_{ij}) & \geq 0, &&  \hspace*{-6em} \text{(the row player does not outperform action $s$)} \\
    \sum_{i, j} p_{ij}(c_{it}-c_{ij}) & \geq 0, && \hspace*{-6em} \text{(the column player does not outperform action $t$)} \\
    p_{ij} \geq 0 \quad \forall i, j, &\quad \sum_{i, j}p_{ij}=1.
\end{align*}
Output the joint distribution $p_{ij}$ from any linear program that has a feasible solution. There are $nm$ such linear programs, so it runs in polynomial time.

To see that this algorithm is correct, note that the original program (P) of IVFE is
\begin{align*}
    \sum_{j=1}^m p_{ij}(r_{ij}-r_{i'j}) & \geq 0, \quad \forall i, i' \in [n], i \neq i', && \text{(no swap regret for the row player)} \\
    \sum_{i=1}^n p_{ij}(c_{ij}-c_{ij'}) & \geq 0, \quad \forall j, j' \in [m], j \neq j', && \text{(no swap regret for the column player)} \\
    \max_{1 \leq s \leq n} \sum_{i, j} p_{ij}(r_{sj}-r_{ij}) & \geq 0, && \hspace*{-6em} \text{(non-negative best-in-hindsight regret for the row player)} \\
    \max_{1 \leq t \leq m} \sum_{i, j} p_{ij}(c_{it}-c_{ij}) & \geq 0, && \hspace*{-6em} \text{(non-negative best-in-hindsight regret for the column player)} \\
    p_{ij} \geq 0 \quad \forall i, j, &\quad \sum_{i, j}p_{ij}=1.
\end{align*}

If there is a solution to (P), then there exist $s, t$ that correspond to $\argmax_{1 \leq s \leq n} \sum_{i, j} p_{ij}(r_{sj}-r_{ij})$ and $\argmax_{1 \leq t \leq m} \sum_{i, j} p_{ij}(c_{it}-c_{ij})$ and thus it is a feasible solution to $(\text{LP}(s, t))$. If there is a solution to $(\text{LP}(s, t))$, then it is also a feasible solution to (P) as $\max_{1 \leq s \leq n} \sum_{i, j} p_{ij}(r_{sj}-r_{ij}) \geq \sum_{i, j} p_{ij}(r_{sj}-r_{ij}) \geq 0$.

\subsection{\texorpdfstring{Proof of \Cref{prop:ivfe-convergence}}{Proof of Proposition \ref*{prop:ivfe-convergence}}}
\begin{proposition*}
    Let $p^t$ and $(p')^t$ denote the player's and opponent's strategies in round $t$, respectively. Let $\tau^t = p^t \times (p')^t$. If both players have no swap regret and are information-value-free, then $\frac{1}{T} \sum_{t=1}^T \tau^t$ converges to an information-value-free equilibrium:
    \[ \lim_{T \to \infty} d_G \left(\frac{1}{T} \sum_{t=1}^T \tau^t \right)=0. \]
\end{proposition*}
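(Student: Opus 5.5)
The plan is to bound each of the four quantities entering $d_G(\tau_T)$, namely $d^{CE}_{G,i}$ and $d^{IVF}_{G,i}$ for $i=1,2$, by the corresponding per-round regret of player $i$. The key observation is that $\tau_T=\frac1T\sum_t p^t\times(p')^t$ is a mixture of product distributions. Hence every expectation under $\tau_T$ of a function of $(a_1,a_2)$ is a time average of expectations under the product $\tau^t$. Moreover, since $r^t$ is the expected reward vector induced by the opponent's mixed strategy, we have $r^t_{a}=\ee{a_2\sim (p')^t}{u_1(a,a_2)}$, and symmetrically for player 2.

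For the correlated-equilibrium term, expand $d^{CE}_{G,1}(\tau_T)$ by multiplying conditional expectations by $\Pr_{\tau_T}[a_1]$. This gives
\[ d^{CE}_{G,1}(\tau_T)=\sum_{a_1}\max_{a_1'}\frac1T\sum_{t=1}^T p^t_{a_1}\bigl(r^t_{a_1'}-r^t_{a_1}\bigr). \]
Choosing $\sigma(a_1)$ to be the maximizing $a_1'$ for each $a_1$ separately, this is exactly $\frac1T\sum_t\ee{a\sim p^t}{r^t_{\sigma(a)}}-\frac1T\sum_t\ee{a\sim p^t}{r^t_a}$, so it is at most $\text{SwapRegret}_T$ (Definition~\ref{def:swap}). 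Actions with zero marginal contribute zero, which matches the vacuous constraints in Definition~\ref{def:ivfe}. The same computation with the roles exchanged handles player 2.

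For the information-value-free term, the same linearity gives
\[ \ee{a\sim\tau_T}{u_1(a)}-\max_{a^\ast}\ee{a\sim\tau_T}{u_1(a^\ast,a_2)}=-\text{BIH-Regret}_T. \]
Therefore $d^{IVF}_{G,1}(\tau_T)=(-\text{BIH-Regret}_T)_+$. By the information-value-free assumption (Definition~\ref{def:zer}), $\text{BIH-Regret}_T\ge -o(1)$, so this term is $o(1)$. The same holds for player 2.

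Taking the maximum of four $o(1)$ quantities gives $d_G(\tau_T)\to0$. There is no genuine obstacle here. The only care needed is the bookkeeping identity converting conditional expectations under $\tau_T$ into weighted time sums, together with correctly identifying $r^t$ with the opponent-averaged payoffs, which is where the product structure $\tau^t=p^t\times(p')^t$ is used.
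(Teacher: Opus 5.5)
Your proof is correct and follows essentially the same route as the paper's. Both arguments substitute $r^t_a=\mathbb{E}_{a'\sim(p')^t}[u_1(a,a')]$ and use linearity over the mixture $\frac1T\sum_t p^t\times(p')^t$ to identify $d^{CE}_{G,i}$ with player $i$'s swap regret and $d^{IVF}_{G,i}$ with $(-\text{BIH-Regret}_T)_+$; your version spells out the per-action decomposition of the maximum over $\sigma$, which the paper compresses into ``rearranging and by linearity.''
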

Let the player's and the opponent's action be indexed by $[n]$ and $[m]$ respectively, by definition of no swap regret, we have
\[
\max_{\sigma:[n] \to [n]} \frac 1T \sum_{t=1}^T \ee{a \sim p^t}{r^t_{\sigma(a)}} - \frac 1T \sum_{t=1}^T \ee{a \sim p^t}{r^t_a} = o(1)
\]
and
\[
\max_{\sigma:[m] \to [m]} \frac 1T \sum_{t=1}^T \ee{a' \sim (p')^t}{(r')^t_{\sigma(a')}} - \frac 1T \sum_{t=1}^T \ee{a' \sim (p')^t}{(r')^t_{a'}} = o(1)
\]
as $T \to \infty$, 
where 
\[
r^t_a = \ee{a'\sim (p')^t}{u_1(a,a')}, (r')^t_{a'} = \ee{a\sim p^t}{u_2(a,a')}.
\]
Therefore,
we have 
\[
\max_{\sigma:[n] \to [n]} \frac 1T \sum_{t=1}^T \ee{a \sim p^t}{\ee{a'\sim (p')^t}{u_1(\sigma(a),a')}} - \frac 1T \sum_{t=1}^T \ee{a \sim p^t}{\ee{a'\sim (p')^t}{u_1(a,a')}} = o(1)
\]
and
\[
\max_{\sigma:[m] \to [m]} \frac 1T \sum_{t=1}^T \ee{a' \sim (p')^t}{\ee{a\sim p^t}{u_2(a,\sigma(a'))}} - \frac 1T \sum_{t=1}^T \ee{a' \sim (p')^t}{\ee{a\sim p^t}{u_2(a,a')}} = o(1).
\]
By definition we have $\tau^t = p^t\times (p')^t$ and let $\tau = \frac 1T \sum_{t=1}^T\tau^t$. Therefore, rearranging and by linearity,
we have as $T \to \infty$,
\[ 
d_{G, 1}^{CE}(\tau) = \ee{a \sim \tau_1}{\max_{a^\ast \in [n]}\ee{(a,a') \sim \tau}{u_1(a^\ast, a') \mid a} -
    \ee{(a,a') \sim \tau}{u_1(a,a') \mid a}} = o(1)
\]
and
\[ 
d_{G, 2}^{CE}(\tau) = \ee{a' \sim \tau_2}{\max_{a^\ast \in [m]}\ee{(a,a') \sim \tau}{u_2(a, a^\ast) \mid a'} -
    \ee{(a,a') \sim \tau}{u_2(a,a') \mid a'}} = o(1).
\]

By definition of information-value-free for both players, we have 
\[
\max_{a^\ast \in [n]} \frac 1T \sum_{t=1}^T r^t_{a^\ast} - \frac 1T \sum_{t=1}^T \ee{a \sim p^t}{r^t_a} \geq  -o(1)
\]
and
\[
\max_{a^\ast \in [m]} \frac 1T \sum_{t=1}^T (r')^t_{a^\ast} - \frac 1T \sum_{t=1}^T \ee{a' \sim (p')^t}{(r')^t_{a'}} \geq -o(1)
\]
as $T \to \infty$.
Therefore, we have
\[
\max_{a^\ast \in [n]} \frac 1T \sum_{t=1}^T \ee{a'\sim (p')^t}{u_1(a^\ast,a')} - \frac 1T \sum_{t=1}^T \ee{a \sim p^t}{\ee{a'\sim (p')^t}{u_1(a,a')}} \geq  -o(1)
\]
and
\[
\max_{a^\ast \in [m]} \frac 1T \sum_{t=1}^T \ee{a\sim p^t}{u_2(a,a^\ast)} - \frac 1T \sum_{t=1}^T \ee{a' \sim (p')^t}{\ee{a\sim p^t}{u_2(a,a')}} \geq -o(1).
\]
Rearranging and by linearity, we have as $T \to \infty$,
\[ 
\ee{(a,a') \sim \tau}{u_1(a,a')} - \max_{a^\ast \in [n]}\left\{\ee{(a,a') \sim \tau}{u_1(a^\ast, a')}\right\} = o(1)
\]
and
\[ 
\ee{(a,a') \sim \tau}{u_2(a,a')} - \max_{a^\ast \in [m]}\left\{\ee{(a,a') \sim \tau}{u_2(a, a^\ast)}\right\} = o(1).
\]
Hence, as $T\to \infty$
\[
    d_{G,1}^{IVF}(\tau) = \max\left\{o(1),0\right\} = o(1),\,\, d_{G,2}^{IVF}(\tau) = \max\left\{o(1),0\right\} = o(1). 
    \]
Therefore, as $T \to \infty$,
$d_G(\tau) = \max_i\left\{\max\left\{d_{G,i}^{CE}(\tau),d_{G,i}^{IVF}(\tau)\right\}\right\} = o(1)$

\subsection{\texorpdfstring{Proof of \Cref{lemma:ivfconnection}}{Proof of Lemma \ref*{lemma:ivfconnection}}}\label{subsec:connectionproof}
\begin{lemma*}
    If the player has no swap regret and non-negative best-in-hindsight regret, then her learning is rationalizable and does not generate valuable information, where information is defined under the above framework.
\end{lemma*}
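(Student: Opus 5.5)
The plan is to work directly with the empirical joint distribution $\tau_T=\frac1T\sum_{t=1}^T p^t\times (p')^t$ and the information structure $(A_2,\tau_T)$, with states $\Omega=A_2$ and signals $\Sigma=A_1$. The lemma makes two claims: \emph{rationalizability} means the player best-responds to every signal realization, and \emph{no valuable information} means that conditioning on the signal gives no strict gain over the best signal-independent decision. We then show that each claim is, up to $o(1)$, a restatement of one of the two regret conditions, by the same linear rewriting used in the proof of \Cref{prop:ivfe-convergence}.

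\textbf{Step 1 (formalizing value of information).} For a decision maker with payoff $u_1$ and information structure $(A_2,\tau)$, define two quantities.
\begin{itemize}
\item The value with the signal, $V^{\mathrm{sig}}(\tau)=\mathbb E_{a\sim\tau_1}\bigl[\max_{a'}\mathbb E_\tau[u_1(a',a_2)\mid a]\bigr]$.
\item The value without the signal, $V^{\mathrm{no}}(\tau)=\max_{a'}\mathbb E_\tau[u_1(a',a_2)]$.
\end{itemize}
The player's realized decision rule is "play the recommended action," which yields $U(\tau)=\mathbb E_\tau[u_1(a)]$. Learning is rationalizable if $U(\tau_T)\ge V^{\mathrm{sig}}(\tau_T)-o(1)$. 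The information is not valuable if the rational use of the signal does not strictly beat ignoring it, that is, $V^{\mathrm{sig}}(\tau_T)\le V^{\mathrm{no}}(\tau_T)+o(1)$.

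\textbf{Step 2 (rationalizability from no swap regret).} Substitute $r^t_a=\mathbb E_{a'\sim(p')^t}[u_1(a,a')]$ and use linearity. The swap-regret expression then becomes
\[
\max_\sigma \mathbb E_{\tau_T}[u_1(\sigma(a_1),a_2)]-\mathbb E_{\tau_T}[u_1(a)].
\]
Choosing $\sigma$ pointwise as the conditional best response shows that the first term equals $V^{\mathrm{sig}}(\tau_T)$. Hence $V^{\mathrm{sig}}(\tau_T)-U(\tau_T)=\mathrm{SwapRegret}_T=o(1)$; this is exactly $d^{CE}_{G,1}(\tau_T)=o(1)$.

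\textbf{Step 3 (no value from non-negative BIH regret).} The same substitution gives
\[
\mathrm{BIH\text{-}Regret}_T=V^{\mathrm{no}}(\tau_T)-U(\tau_T)\ge -o(1).
\]
Combining this with Step 2,
\[
V^{\mathrm{sig}}(\tau_T)-V^{\mathrm{no}}(\tau_T)=\bigl(V^{\mathrm{sig}}-U\bigr)-\bigl(V^{\mathrm{no}}-U\bigr)\le o(1)+o(1)=o(1).
\]
Since $V^{\mathrm{sig}}\ge V^{\mathrm{no}}$ always holds (a constant decision rule is one admissible signal-dependent rule), the value of information is $o(1)$ and therefore not valuable in the limit.

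\textbf{Main obstacle.} The main difficulty is Step 1. The paper only defines information informally, so the delicate part is fixing definitions of "rationalizable" and "valuable" that match the intended reading. In particular, the value of information should be measured by the rational decision maker's payoff $V^{\mathrm{sig}}$, not by $U$, and the $o(1)$ slack must be stated explicitly. Once these definitions are fixed, Steps 2 and 3 are the routine linear identities already used in \Cref{prop:ivfe-convergence}.
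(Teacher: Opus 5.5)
Your proof is correct and takes essentially the same route as the paper: both rewrite swap regret and best-in-hindsight regret as linear functionals of the empirical joint distribution, and read off approximate conditional optimality (rationalizability) from the first and no gain over a fixed, signal-independent action from the second. The only difference is in how you measure the value of information. You use the optimal signal-contingent payoff $V^{\mathrm{sig}}$, so you combine both bounds to get $V^{\mathrm{sig}}\le V^{\mathrm{no}}+o(1)$, whereas the paper compares the realized payoff $U$ with the fixed-action benchmark using the best-in-hindsight bound alone; under no swap regret the two formulations agree up to $o(1)$.
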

\paragraph{No swap regret implies fully exploiting information}
By definition of no swap regret, the player satisfies
\[
\max_{\sigma:[n] \to [n]} \frac 1T \sum_{t=1}^T \ee{a \sim p^t}{r^t_{\sigma(a)}} - \frac 1T \sum_{t=1}^T \ee{a \sim p^t}{r^t_a} = o(1),
\]
i.e.,
\[
\max_{\sigma:[n] \to [n]} \frac 1T \sum_{t=1}^T \ee{a \sim p^t}{r^t_{\sigma(a)}-r^t_a} = o(1),
\]
where $p^t$ is the player's strategy in round $t$.
When the player is in learning dynamics in which her rewards are determined by her opponent's action, let $u_1(a,a')$ denote her payoff $r^t_{a}$ of playing action $a$ when the opponent plays action $a_2^t=a'$.

Let $\tau^T$ be the empirical joint distribution of the player and her opponent's actions after $T$ rounds. No swap regret implies
\[
\max_{\sigma:[n] \to [n]}  \frac{1}{T}\sum_{t=1}^T\sum_{a'\in A_2}\ee{a \sim \tau^T(\cdot \mid a')}{u(\sigma(a),a')-u(a,a')} \ind{a_2^t=a'} =  o(1).
\]
Note that
\begin{align*}
    &\frac{1}{T}\sum_{t=1}^T\sum_{a'\in A_2}\ee{a \sim \tau^T(\cdot \mid a')}{u(\sigma(a),a')-u(a,a')} \ind{a_2^t=a'}\\
    = & \sum_{a'\in A_2}\ee{a \sim \tau^T(\cdot \mid a')}{u(\sigma(a),a')-u(a,a')} \sum_{t=1}^T\frac{\ind{a_2^t=a'}}{T}.
\end{align*}
Let $\tau^T_2$ be the distribution $\tau^T$ marginalized to the opponent's action. We have
\[
\sum_{t=1}^T\frac{\ind{a_2^t=a'}}{T}= \Pr_{a_2\sim \tau^T_2}[a_2 = a'].
\]
Therefore, no swap regret implies
\[
    \max_{\sigma:[n] \to [n]} \sum_{a'\in A_2}\ee{a \sim \tau^T(\cdot \mid a')}{u(\sigma(a),a')-u(a,a')} \Pr_{a_2\sim \tau_2^T}[a_2=a'] = o(1),
\]
and by definition of expectation over $\tau_2^T$,
\[
\max_{\sigma:[n] \to [n]} \ee{a' \sim \tau_2^T}{\ee{a \sim \tau^T(\cdot \mid a')}{u(\sigma(a),a')-u(a,a')}} = o(1),
\]
which is equivalent to, for all $\sigma$,
\[
\ee{a_2 \sim \tau_2^T}{\ee{a_1 \sim \tau^T(\cdot \mid a_2)}{u(a_1, a_2)} } \geq \ee{a_2 \sim \tau_2^T}{\ee{a_1 \sim \tau^T(\cdot \mid a_2)}{u(\sigma(a_1), a_2)} }-o(1).\]
Therefore, the player is approximately optimal conditional on the signal $\pi : \Omega \to \Delta(\Sigma)$ given by $\tau^T$ where $\pi(a_2) = \tau^T(\cdot\mid a_2)$.

\paragraph{Non-negative best-in-hindsight regret implies no valuable information}
By definition of non-negative best-in-hindsight regret, the player satisfies
\[
\max_{a^\ast \in [n]}\frac 1T \sum_{t=1}^T r^t_{a^\ast} - \frac 1T \sum_{t=1}^T \ee{a \sim p^t}{r^t_a} \geq -o(1),
\]
i.e.,
\[
\max_{a^\ast \in [n]} \frac 1T \sum_{t=1}^T r^t_{a^\ast}- \ee{a \sim p^t}{r^t_a} \geq -o(1),
\]
where $p^t$ is the strategy of the player in round $t$.
When the player is in learning dynamics in which her rewards are determined by her opponent's action, let $u_1(a,a')$ denote her payoff $r^t_{a}$ of playing action $a$ when the opponent plays action $a_2^t=a'$.

Let $\tau^T$ be the empirical joint distribution of the player and her opponent's actions after $T$ rounds. Non-negative best-in-hindsight regret implies
\[
\max_{a^\ast\in[n]}  \frac{1}{T}\sum_{t=1}^T\sum_{a'\in A_2}\ee{a \sim \tau^T(\cdot \mid a')}{u(a^\ast,a')-u(a,a')} \ind{a_2^t=a'} \geq -o(1).
\]
Note that
\begin{align*}
    &\frac{1}{T}\sum_{t=1}^T\sum_{a'\in A_2}\ee{a \sim \tau^T(\cdot \mid a')}{u(a^\ast,a')-u(a,a')} \ind{a_2^t=a'}\\
    = & \sum_{a'\in A_2}\ee{a \sim \tau^T(\cdot \mid a')}{u(a^\ast,a')-u(a,a')} \sum_{t=1}^T\frac{\ind{a_2^t=a'}}{T}.
\end{align*}
Let $\tau^T_2$ be the distribution $\tau^T$ marginalized to the opponent's action. We have
\[
\sum_{t=1}^T\frac{\ind{a_2^t=a'}}{T}= \Pr_{a_2\sim \tau^T_2}[a_2 = a'].
\]
Therefore, non-negative best-in-hindsight regret implies
\[
    \max_{a^\ast\in[n]} \sum_{a'\in A_2}\ee{a \sim \tau^T(\cdot \mid a')}{u(a^\ast,a')-u(a,a')} \Pr_{a_2 \sim \tau_2^T}[a_2=a'] \geq -o(1),
\]
and by definition of expectation over $\tau_2^T$,
\[
\max_{a^\ast \in [n]} \ee{a' \sim \tau_2^T}{\ee{a \sim \tau^T(\cdot \mid a')}{u(a^\ast,a')-u(a,a')}} \geq -o(1),
\]
which is equivalent to, for the maximizer $a^\ast$,
\[
\ee{a_2 \sim \tau_2^T}{\ee{a_1 \sim \tau^T(\cdot \mid a_2)}{u(a_1, a_2)} } \leq \ee{a_2 \sim \tau_2^T}{\ee{a_1 \sim \tau^T(\cdot \mid a_2)}{u(a^\ast, a_2)} }+o(1).\]
Since $a^\ast$ does not depend on the signal $\pi : \Omega \to \Delta(\Sigma)$ where $\pi(a_2) = \tau^T(\cdot\mid a_2)$, it still holds that
\[
\ee{a_2 \sim \tau_2^T}{\ee{a_1 \sim \tau^T(\cdot \mid a_2)}{u(a_1, a_2)}} \leq \ee{a_2 \sim \tau_2^T}{u(a^\ast, a_2)} + o(1).
\]
Therefore, the player is not (significantly) better off than she could have been without the additional information encoded in $\tau^T$.

\section{\texorpdfstring{Omitted Proofs in \Cref{sec:impos}}{Omitted Proofs in Section~\ref*{sec:impos}}}
\subsection{\texorpdfstring{Proof of \Cref{prop:smooth_example}}{Proof of Proposition~\ref*{prop:smooth_example}}}\label{subsec:pf_smooth_example}
\begin{proposition*}
The standard versions of the following algorithms are smooth learners:
    \begin{enumerate}
        \item Multiplicative Weights Update,
        \item Follow-the-Perturbed-Leader,
        \item Online Gradient Ascent (also known as Projected Gradient Ascent \citep{zinkevich2003online}).
    \end{enumerate}
    
    The corresponding Blum--Mansour reductions \citep{blum2007external} of these algorithms are also smooth learners.
\end{proposition*}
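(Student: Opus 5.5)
The plan is to verify \Cref{def:smooth} directly for each algorithm. We run it on the two-arm instance with constant rewards $(0,\eps)$ under full feedback and compute or bound $w_t$, the probability of the zero-reward arm. In every case we will show two things: $w_t$ is weakly decreasing in $t$, so the whole sequence serves as the subsequence and $h(T)=0$; and $w_{T/4}-w_{3T/4}=\Omega(1)$. This suffices because monotonicity gives
\[ \sum_{t=1}^{T/2}w_t-\sum_{t=T/2+1}^{T}w_t \;\ge\; \sum_{t=1}^{T/4}w_t-\sum_{t=3T/4+1}^{T}w_t \;\ge\; \tfrac T4\bigl(w_{T/4}-w_{3T/4}\bigr). \]
Here the pairs $t\in(T/4,T/2]$ and $t\in(T/2,3T/4]$ are matched so that their contribution is nonnegative. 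The free parameter is $\eps(T)$, which we tie to the (possibly $T$-dependent) learning rate $\eta=\eta(T)$ so that the total accumulated advantage $\eta\eps T$ is a fixed constant.

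\textbf{Multiplicative Weights Update.} We have $w_t=1/(1+e^{\eta\eps(t-1)})$, which is strictly decreasing with $w_1=1/2$. Choosing $\eps=1/(\eta T)$ gives $w_t=\sigma(-(t-1)/T)$, where $\sigma$ is the logistic function. Hence $w_{T/4}-w_{3T/4}=\sigma(-1/4)-\sigma(-3/4)+o(1)$, which is a positive constant.

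\textbf{Follow-the-Perturbed-Leader.} Write the perturbations as $Z_0,Z_1$, drawn i.i.d.\ at scale $1/\eta$ (uniform, exponential, or Gaussian in the standard versions). Then $w_t=\Pr[\eta(Z_0-Z_1)>\eta\eps(t-1)]=1-F(\eta\eps(t-1))$, where $F$ is the CDF of the rescaled difference. This is weakly decreasing in $t$. With $\eps=c/(\eta T)$, the gap $w_{T/4}-w_{3T/4}=F(3c/4)-F(c/4)$ is a positive constant, because the density of $\eta(Z_0-Z_1)$ is bounded below on a neighborhood of $0$ for all standard perturbation families. If the perturbation is resampled each round and only the realized play is counted, we still work with the expected probability $w_t$, so nothing changes.

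\textbf{Online Gradient Ascent.} With uniform initialization, projection onto the one-dimensional simplex gives $w_{t}=\max\{1/2-(t-1)\eta\eps/2,\,0\}$. Taking $\eps=1/(2\eta T)$ keeps $w_t\in[1/4,1/2]$, and the sequence is linear and decreasing, so $w_{T/4}-w_{3T/4}=1/8$.

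\textbf{Blum--Mansour reductions.} Here the argument is less direct. Copy $j$ of the base algorithm receives the reward vector $p^t_j\cdot(0,\eps)$, and the play $p^t$ is the stationary distribution of the $2\times 2$ matrix $Q^t$ whose rows are the copies' distributions. For two arms, the stationary distribution has the closed form
\[ w_t=\frac{q^t_{21}}{q^t_{21}+q^t_{12}}, \]
where $q^t_{21}$ is the probability that copy $2$ recommends arm $1$, and $q^t_{12}$ the probability that copy $1$ recommends arm $2$.

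Every copy always sees arm $2$ weakly better. For each base algorithm above, the probability a copy puts on arm $1$ is a monotone function of that copy's cumulative advantage $\eps\sum_{s<t}p^s_j$, and this advantage is nondecreasing in $t$. So $q_{21}^t$ is nonincreasing and $q_{12}^t$ is nondecreasing, which makes $w_t$ nonincreasing.

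For the $\Omega(1)$ gap, the key step is an a priori bound. Choosing $\eps=c/(\eta T)$ with $c$ small, each copy's cumulative advantage is at most $\eta\eps T=c$, so every $q^t_{j\cdot}$ stays within a constant of $1/2$. Consequently $p^t_j\in[1/4,3/4]$ for all $t$. This self-consistency step is the main obstacle, because the copies' rewards depend on $p^t$ itself. I would handle it by induction on $t$: the bound on the advantages gives $p^t_j\in[1/4,3/4]$, which in turn bounds the next round's advantages.

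Given that bound, each copy's cumulative advantage grows at rate $\Theta(\eta\eps)$ per round, giving $\Theta(c)$ by time $3T/4$ versus at most $c/4$ at time $T/4$. By the explicit formulas above (logistic, CDF with positive density, or linear), $q_{21}$ drops and $q_{12}$ rises by a constant between $T/4$ and $3T/4$. Plugging into the closed form yields $w_{T/4}-w_{3T/4}=\Omega(1)$, which completes the verification for the reductions.
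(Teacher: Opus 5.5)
Your proposal is correct and follows the same overall plan as the paper. You write $w_t$ explicitly on the $(0,\eps)$ instance, and the whole sequence is monotone, so $h(T)=0$. For Blum--Mansour you use the two-state stationary formula $w_t=q^t_{21}/(q^t_{21}+q^t_{12})$ (the paper's $y_t/(1-x_t+y_t)$) and its monotonicity in each coordinate.

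Where you differ is in certifying the $\Omega(T)$ gap. The paper either evaluates the half-sums directly or proves $w_t-w_{t+T/2}=\Omega(1)$ for every $t\le T/2$. Your reduction $\sum_{t\le T/2}w_t-\sum_{t>T/2}w_t\ge\frac T4\,(w_{T/4}-w_{3T/4})$ is valid by monotonicity, and it turns each case into a one-line check. Your FTPL argument is also more general: it covers any perturbation whose rescaled difference has density bounded below near $0$, not only exponential noise.

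For the reductions, the paper needs no smallness of $\eps$. Both copies start uniform and always see arm $2$ as better, so $x_t,y_t\le\frac12$, hence $w_t\le\frac12$, and copy $2$'s advantage grows by at least $\eps/2$ per round. Your route also works: choose $c$ small to force $p^t_j\in[\frac14,\frac34]$. But there is no self-consistency obstacle and no induction is needed, because the bound $\eta\eps\sum_{s<t}p^s_j\le c$ follows from $p^s_j\le 1$ alone.

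Three small repairs are needed.

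\textbf{The gap step in the reductions.} Comparing ``$\Theta(c)$ by time $3T/4$'' with ``at most $c/4$ at time $T/4$'' does not give a gap as written. Once $p^s_j\in[\frac14,\frac34]$, the upper bound at $T/4$ and the lower bound at $3T/4$ are both $3c/16$. What you need, and what your per-round rate supplies, is that each scaled advantage \emph{increases} by at least $\eta\eps\cdot\frac T2\cdot\frac14=c/8$ between rounds $T/4$ and $3T/4$.

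\textbf{OGA step size.} Your OGA analysis fixes a constant step size, whereas the paper treats the standard schedule $\eta_t=\Theta(1/\sqrt t)$. The same argument goes through with $\eps\propto 1/\sum_{s\le T}\eta_s$. Then $w_t$ is linear in $\sum_{s<t}\eta_s$, and $\sum_{T/4\le s<3T/4}\eta_s$ is a constant fraction of $\sum_{s\le T}\eta_s$. You should say this explicitly. Also note that in the reduction a copy's probability then depends on the $\eta_s$-weighted advantage, not the plain cumulative one.

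\textbf{MWU learning rate.} The choice $\eps=1/(\eta T)$ implicitly requires $\eta T\gtrsim 1$ to keep payoffs bounded. The paper assumes this explicitly.
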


\paragraph{Multiplicative Weights Update}
Assume without loss of generality that $w_1=\frac 12$. Given any MWU algorithm with a learning rate $\eta$ such that $\eta=o(1)$ and $\eta T \to \infty$, let $\eps = 2/\eta T$. By the definition of MWU, at round $t$, the probability with which the row player plays the zero-reward action is
\[ w_t = \frac{1}{1+(1 + \eta)^\frac{2(t-1)}{\eta T}}. \]
Since $\eta>0$, $w_t$ is strictly decreasing. Furthermore,
\[ \int_a^{b+1} \frac{1}{1+(1 + \eta)^\frac{2(t-1)}{\eta T}}\,\mathrm dt \leq \sum_{t=a}^bw_t \leq \int_{a-1}^b \frac{1}{1+(1 + \eta)^\frac{2(t-1)}{\eta T}}\,\mathrm dt. \]
The integrals can be easily evaluated in closed form. For $(a, b)=(1, T/2)$ and $(T/2+1, T)$, the bounds are
\begin{align*}
    &\frac{T}{2} \underbracket{\left(1 - \frac{\eta}{\log(1 + \eta)} \log \left(\frac{1+(1 + \eta)^{\frac{T-2}{\eta T}}}{1+(1 + \eta)^{-\frac{2}{\eta T}}} \right) \right)}_{Q_1} \leq \sum_{t=1}^{T/2}w_t \leq \frac{T}{2} \underbracket{\left(1 + \frac{\eta}{\log(1 + \eta)} \log \left(\frac{2}{1+(1 + \eta)^\frac{1}{\eta}} \right) \right)}_{Q_2}, \\
    &\frac{T}{2} \underbracket{\left(1 - \frac{\eta}{\log(1 + \eta)} \log \left(\frac{1+(1 + \eta)^{\frac{2}{\eta}}}{1+(1 + \eta)^{\frac{1}{\eta}}} \right) \right)}_{Q_3} \leq \sum_{t = \frac T2+1}^Tw_t \leq \frac{T}{2} \underbracket{\left(1 + \frac 2T - \frac{\eta}{\log(1 + \eta)} \log \left(\frac{1+(1 + \eta)^{\frac{2T-2}{T \eta}}}{1+(1 + \eta)^{\frac{T-4}{T \eta}}} \right) \right)}_{Q_4}.
\end{align*}
Taking the limits of $Q_1$ through $Q_4$, we obtain
\[ \lim_{T \to \infty, \eta \to 0} Q_1 = \lim_{T \to \infty, \eta \to 0} Q_2=\frac 12-\frac 12\log \frac{1 + \e}{2}, \quad \lim_{T \to \infty, \eta \to 0} Q_3 = \lim_{T \to \infty, \eta \to 0} Q_4=\frac 12-\frac 12\log \frac{1 + \e^2}{1 + \e}. \]
It follows that $\sum_{t=1}^{T/2}w_t = \left(\frac 12-\frac 12\log \frac{1 + \e}{2} \right)T+o(T)$ and $\sum_{t=T/2+1}^Tw_t = \left(\frac 12-\frac 12\log \frac{1 + \e^2}{1 + \e} \right)T+o(T)$. Therefore, $\sum_{t=1}^{T/2}w_t - \sum_{t=T/2+1}^Tw_t = \Omega(T)$.

\paragraph{Follow-the-Perturbed-Leader}
Assume without loss of generality that $w_1=\frac 12$. Given any FTPL algorithm with learning rate $\eta$, let $\eps=\frac{1}{\eta T}$. Let $X, Y$ be two i.i.d. random variables with exponential distribution $\mathrm{Exp}(\eta)$. By the definition of FTPL, at round $t$, the probability with which the row player plays the zero-reward action is
\[ w_t = \Pr[X \geq \eps(t-1)+Y] = \Pr[X-Y \geq \eps(t-1)] = \frac{1}{2} \exp(-\eta \eps(t-1)) = \frac{1}{2} \exp(-(t-1)/T). \]
The third equality follows from the fact that the difference between two i.i.d. exponentially distributed random variables follows the Laplace distribution.

The sums of the weights can now be evaluated in closed form.
\[ \sum_{t=1}^{T/2}w_t = \frac 12 \cdot \frac{1 - \e^{-\frac 12}}{1 - \e^{-\frac 1T}}, \quad \sum_{t=T/2+1}^Tw_t = \frac 12 \cdot \frac{\e^{-\frac 12}-\e^{-1}}{1 - \e^{-\frac 1T}} = \e^{-\frac 12} \sum_{t=1}^{T/2}w_t. \]
So it suffices to show $\sum_{t=1}^{T/2}w_t = \Omega(T)$. But $1 - \e^{-\frac 1T} \leq \frac{1}{T}$. Therefore, $\sum_{t=1}^{T/2}w_t \geq \frac{T}{2} \left(1 - \e^{-\frac 12} \right)=\Omega(T)$ as desired.

\paragraph{Online/Projected Gradient Ascent}

Consider online gradient ascent with weakly decreasing step size $\eta_t = \Theta(1/\sqrt t)$.

Let $x_t=(w_t, 1-w_t)$, where $w_t$ is the probability of the zero-reward arm in round $t$. As in \Cref{def:smooth}, choose the zero-reward arm so that $w_1 = \Omega(1)$. Let
\[ S_1 = \sum_{s=1}^{T/2} \eta_s, \quad \eps = \frac{w_1}{S_1}. \]
We have $S_1 = \Theta(\sqrt T) $ as $\eta_t = \Theta(1/\sqrt t)$, and hence $\eps = \Theta(1/\sqrt T)=o(1)$.

The reward vector is $(0, \eps)$, so projected gradient ascent updates by
\[ x_{t+1} = \Pi_{\Delta_2}(x_t + \eta_t(0, \eps)). \]
In the two-dimensional simplex, Euclidean projection subtracts the same amount from both coordinates, unless one coordinate is clipped to zero. Therefore, the first coordinate satisfies
\begin{equation}\label{eqn:pga_update}
    w_{t+1} = \left(w_t - \frac{\eps \eta_t}{2} \right)_+.
\end{equation}
From \Cref{eqn:pga_update}, $w_t$ is weakly decreasing.

It remains to prove the smooth-learner property. Iterating \Cref{eqn:pga_update} and using $(x_+-y)_+=(x-y)_+$ for any $y \geq 0$, we have
\begin{equation}\label{eqn:pga_iteration}
    w_t = \left(w_1 - \frac \eps 2 \sum_{s=1}^{t-1} \eta_s \right)_+.
\end{equation}
Note that for every $t \leq T/2$, the quantity subtracted in \Cref{eqn:pga_iteration} is at most $\eps S_1/2 = \frac{w_1}{2}$. Therefore $w_t \geq w_1/2 = \Omega(1)$.

Now fix any $t \leq T/2$. Applying \Cref{eqn:pga_update} from round $t$ to round $t+T/2$ gives
\begin{equation}\label{eqn:oga_diff}
    w_t-w_{t+T/2} = \min \left\{w_t, \frac \eps 2 \sum_{s=t}^{t+T/2-1} \eta_s \right\}.
\end{equation}
The first term inside the minimum is $\Omega(1)$ because, as shown above, $w_t \geq w_1/2$ for every $t \leq T/2$. For the second term, since $\eta_t$ is weakly decreasing,
\[ \frac \eps 2 \sum_{s=t}^{t+T/2-1} \eta_s \geq \frac \eps 2 \sum_{s=T/2}^{T-1} \eta_s = \frac{w_1}{2} \cdot \frac{\sum_{s=T/2}^{T-1} \eta_s}{\sum_{s=1}^{T/2} \eta_s} = \Omega(1). \]
Here the first equality follows from the definition of $\eps$, and the last equality again uses $\eta_t = \Theta(1/\sqrt t)$. Therefore $w_t-w_{t+T/2} = \Omega(1)$ for every $t \leq T/2$. Summing over $t=1, \dotsc, T/2$ we obtain $\sum_{t=1}^{T/2}w_t - \sum_{t=T/2+1}^Tw_t = \sum_{t=1}^{T/2}(w_t-w_{t+T/2}) = \Omega(T)$. This finishes the proof.

We also note that online gradient ascent is not a mean-based algorithm \citep[Appendix B]{ahunbay2025semicoarse}.

\paragraph{Blum--Mansour reductions}
Recall that in the two-arm case, the Blum--Mansour reduction \citep{blum2007external} maintains two copies of no-best-in-hindsight-regret algorithms. Let the first copy's output distribution be $(x_t, 1-x_t)$ and let the second copy's be $(y_t, 1-y_t)$ in round $t$. The Blum--Mansour reduction forms the transition matrix with these two vectors as rows and solves for the stationary distribution
\[ \begin{pmatrix}
    x_t & 1-x_t \\ y_t & 1-y_t
\end{pmatrix}^\top \begin{pmatrix}
    w_t \\ 1-w_t
\end{pmatrix} = \begin{pmatrix}
    w_t \\ 1-w_t
\end{pmatrix}, \]
which gives $w_t = \frac{y_t}{1-x_t+y_t}$. After the reward vector is observed, the first copy receives the feedback of $w_t(0, \eps)$ and the second copy receives $(1-w_t)(0, \eps)$.

We first make some common observations for MWU, FTPL, and OGA. Again, assume without loss of generality that they all start with the uniform distribution over the two arms. Since each copy is always given a reward vector where the second arm is better, $x_t, y_t$ are weakly decreasing in $t$ and both $x_t, y_t \leq \frac 12$. Since the map $(x, y) \mapsto y/(1-x+y)$ is increasing in both coordinates in $[0,1] \times [0, 1]$, the sequence $\{w_t\}_{t=1}^T$ is weakly decreasing. Moreover, by $x_t, y_t \leq \frac 12$ we also get $w_t \leq \frac 12$ for all $t$.

For the separation between $\sum_{t=1}^{T/2}w_t$ and $\sum_{t=T/2+1}^Tw_t$, we prove a common reduction step. Suppose that for some constant $\delta>0$, the second copy satisfies
\begin{equation}\label{eqn:bm_smooth_redux}
    y_t-y_{t+T/2} \geq \delta,
\end{equation}
for every $t \leq T/2$. Then, since $x_{t+T/2} \leq x_t$ and the stationary-distribution map is increasing in both coordinates, we have
\begin{align*}
    w_t-w_{t+T/2} &\geq \frac{y_t}{1-x_t+y_t} - \frac{y_t-\delta}{1-x_t+y_t-\delta} = \frac{\delta(1-x_t)}{(1-x_t+y_t)(1-x_t+y_t - \delta)}
    \\&\geq \frac{\delta(1-x_t)}{(1-x_t+y_t)^2} \geq \delta \frac{1 - \frac 12}{\left(\frac 32 \right)^2} = \frac{2\delta}{9} \geq \frac \delta 8.
\end{align*}
Here for the second inequality we use $\delta>0$ and for the third inequality we use $x_t, y_t \in \left[0, \frac 12 \right]$. Summing over $t=1, 2, \dotsc, T/2$ we obtain $\sum_{t=1}^{T/2}w_t - \sum_{t=T/2+1}^Tw_t=\Omega(T)$. Therefore, it suffices to verify \Cref{eqn:bm_smooth_redux} for all three subroutines.

For MWU, let $B_t = \sum_{s=1}^{t-1}(1-w_s)$. With the same choice as in the MWU part of the proof, $\eps=2/(\eta T)$, the probability that the second copy recommends the zero-reward arm is $y_t=1/(1+(1+\eta)^{\eps B_t})$. Since $w_s \leq \frac 12$, we have $B_{t+T/2}-B_t \geq T/4$ and for $t \leq T/2$ we have $B_t \leq T/2$. Now consider $a_t=(1+\eta)^{\eps B_t}$. Then $1 \leq a_t \leq (1+\eta)^{\frac 1 \eta} \leq \mathrm e$ where the second inequality follows from $B_t \leq T/2$ and $\eps=2/(\eta T)$. Furthermore,
\begin{align*}
    y_t-y_{t+T/2} &= \frac{1}{1+a_t} - \frac{1}{1+a_t(1 + \eta)^{\eps(B_{t+T/2}-B_t)}} \geq \frac{1}{1+a_t} - \frac{1}{1+a_t \rho}
    \\&= \frac{a_t(\rho-1)}{(1+a_t)(1+a_t \rho)},
\end{align*}
where $\rho:=(1+\eta)^{\frac{1}{2\eta}}$. Here the first inequality follows from $B_{t+T/2}-B_t \geq T/4$ and $\eps=2/(\eta T)$. Observe that $\rho \to \mathrm e^{\frac 12}$ from below as $T \to \infty$ under the same choice of learning rate as above. Therefore, for all sufficiently large $T$,
\[ \frac{a_t(\rho-1)}{(1+a_t)(1+a_t \rho)} \geq \frac{\mathrm e^{\frac 14}-1}{(1 + \mathrm e)(1+\mathrm e^{\frac 32})}>0. \]
Here for the first inequality we use $1 \leq a_t \leq \mathrm e$ and for sufficiently large $T$, $\mathrm e^{\frac 14} \leq \rho \leq \mathrm e^{\frac 12}$. Since the right-hand side is bounded below by a positive constant, \Cref{eqn:bm_smooth_redux} holds and the Blum--Mansour reduction of MWU is a smooth learner.

For FTPL, with the same definition of $B_t$ and the same choice as in the FTPL part of the proof, $\eps=1/(\eta T)$, the second copy satisfies $y_t = \frac 12 \exp(-\eta \eps B_t)$. The same bounds $B_t \leq T/2$ and $B_{t+T/2}-B_t \geq T/4$ now give
\[ y_t-y_{t+T/2} = \frac 12 \exp(-\eta \eps B_t) \left(1 - \exp(\eta \eps(B_t-B_{t+T/2})) \right) \geq \frac 12 \mathrm e^{-\frac 12} \left(1 - \mathrm e^{-\frac 14} \right)>0, \]
and \Cref{eqn:bm_smooth_redux} holds and the Blum--Mansour reduction of FTPL is a smooth learner.

For OGA, let the step sizes satisfy $\eta_t = \Theta(1/\sqrt t)$ as in the preceding proof. Let $S = \sum_{s=1}^{T/2}\eta_s$ and choose $\eps=1/(4S)$. For the second copy, $y_t$ evolves with
\[ y_{t+1}=\left(y_t - \eps \eta_t \frac{1-w_t}{2} \right)_+. \]
Therefore, by an argument similar to that used for \Cref{eqn:oga_diff}, we get
\[ y_t-y_{t+T/2} = \min \left\{y_t, \sum_{s=t}^{t+T/2-1} \frac{\eps \eta_s(1-w_s)}{2} \right\}. \]
Before $T/2$, the total possible decrease in $y_t$ is at most $\eps \frac S2 = \frac 18$, so $y_t \geq \frac 38$ for every $t \leq T/2$. On the other hand, from $t$ to $t+T/2$, the total decrease
\[ \sum_{s=t}^{t+T/2-1} \frac{\eps \eta_s(1-w_s)}{2} \geq \frac \eps 4 \sum_{s=t}^{t+T/2-1} \eta_s. \]

Finally, we use $\eta_s = \Theta(1/\sqrt s)$. There are constants $m, M>0$ such that for all sufficiently large $T$, $m/\sqrt s \leq \eta_s \leq M/\sqrt s$. So
\[ \sum_{s=t}^{t+T/2-1} \eta_s \geq \sum_{s=T/2}^{T-1}\eta_s \geq m \sum_{s=T/2}^{T-1}\frac{1}{\sqrt s} \geq c_0 \sum_{s=1}^{\frac T/2}\eta_s=c_0S, \]
for some constant $c_0>0$. Here the first inequality follows since $\eta_s$ is decreasing. The final inequality follows because $\sum_{s=T/2}^{T-1} \frac{1}{\sqrt s} \geq \int_{T/2}^Tx^{-\frac 12}\,\mathrm dx=2 \sqrt T-2 \sqrt{T/2}$, $\sum_{s=1}^{T/2}\frac{1}{\sqrt s} \leq 1+\int_1^{T/2}x^{-\frac 12}\,\mathrm dx \leq 2 \sqrt{T/2}$, and $\frac{m(2 \sqrt T-2 \sqrt{T/2})}{M(2 \sqrt{T/2})} = \frac{m}{M}(\sqrt 2-1)$. By the definition $\eps=1/(4S)$ we have $\frac \eps 4 \sum_{s=t}^{t+T/2-1}\eta_s \geq \frac \eps 4 \cdot c_0S = \frac{c_0}{16}$. Thus we conclude
\[ y_t-y_{t+T/2} \geq \min \left\{\frac 38, \frac{c_0}{16} \right\}, \]
and \Cref{eqn:bm_smooth_redux} holds and the Blum--Mansour reduction of OGA is a smooth learner. This concludes the proof of \Cref{prop:smooth_example}.

\subsection{\texorpdfstring{Proof of \Cref{thm:impos-ivfe}}{Proof of Theorem~\ref*{thm:impos-ivfe}}}\label{subsec:proof-impos-ivfe}
\begin{theorem*}
Fix any learning algorithm $\mathcal L$ with no swap regret against any smooth learner $\mathcal A$. There exists a constant $c>0$ (depending on the smooth-learning property of $\mathcal A$) and a sequence of games $\{G_T\}_{T=1}^\infty$ such that, when the players play $G_T$ for $T$ rounds, the empirical history of play $\tau_T$ remains an asymptotically constant distance from an IVFE:
    \[ \liminf_{T \to \infty} d_{G_T}(\tau_T) \geq c. \]
\end{theorem*}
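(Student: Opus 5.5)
The plan is to turn the Nature construction from the proof sketch in \Cref{sec:impos} into a genuine two-player game $G_T$, with the smooth learner $\mathcal A$ playing the role of Nature. Nature's two states become two actions $b_1,b_2$ of the column player. The row player (running $\mathcal L$) has actions $1,2,3$ with payoffs $u_1(\cdot,b_1)=(\tfrac12,-1,0)$ and $u_1(\cdot,b_2)=(-1,1,0)$. The column player's payoff is chosen so that, whatever the row player does, $b_2$ beats $b_1$ by exactly $\eps=\eps(T)$, the gap in \Cref{def:smooth}; for instance $u_2(a,b_1)=0$ and $u_2(a,b_2)=\eps$ for every $a$. The column player then sees the fixed reward vector $(0,\eps)$ every round, so his trajectory is exactly the $w_t$ of \Cref{def:smooth}. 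Here $w_t$ is the probability of $b_1$, and it does not depend on $\mathcal L$. Its properties are: a weakly decreasing subsequence of length $T(1-h(T))$, $w_1=\Omega(1)$, and $\sum_{t\le T/2}w_t-\sum_{t>T/2}w_t\ge\gamma T$ for a constant $\gamma>0$. The constant $c$ will be expressed through $\gamma$.

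\textbf{Reduction to a row-player payoff bound.} The column player's own IVF distance is zero, since $b_2$ is weakly dominant. So it suffices to show one of two things for the row player: either $d^{CE}_{G_T,1}(\tau_T)\ge c$, or $d^{IVF}_{G_T,1}(\tau_T)\ge c$. The row reward in round $t$ is $r^t=w_t(\tfrac12,-1,0)+(1-w_t)(-1,1,0)$, which depends only on the scalar $w_t$. Action $3$ always earns $0$. Let $W=\frac1T\sum_t w_t$. The fixed actions earn $\frac32W-1$, $1-2W$ and $0$, so the best fixed payoff is $\max\{\frac32W-1,1-2W,0\}$. Since $\mathcal L$ has no swap regret against this (oblivious) sequence, $d^{CE}_{G_T,1}(\tau_T)=o(1)$, using the argument of \Cref{prop:ivfe-convergence}. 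It therefore remains to show that the row player's average payoff exceeds the best fixed payoff by a constant.

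\textbf{Swap-regret accounting, replacing the three claims.} Write $p^t=(p^t_1,p^t_2,p^t_3)$. I would apply no swap regret to a family of swaps, all with the same conclusion: the player must earn essentially the pointwise optimum $\max\{r^t_1,r^t_2,0\}$ "on average within each action class". For action $i$, consider the weighted reward functional $\sum_t p^t_i r^t$. This is an affine function of $\sum_t p^t_i$ and $\sum_t p^t_i w_t$, and no swap regret says action $i$ is within $o(T)$ of the best reply to that aggregate.

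The key structural fact is that $w_t$ is (nearly) monotone decreasing, with mass $\gamma T$ separating the two halves. Action $1$ is the best response when $w>4/5$; action $2$ when $w<2/3$. Because $r^t$ is determined by the scalar $w_t$, the optimal swap-consistent behavior amounts to partitioning the $w$-values into classes, where each class's average $w$ makes its label a best response. The row player's payoff is then $\sum_{\text{classes}}(\text{mass})\cdot\max\{\frac32\bar w-1,1-2\bar w,0\}$. This is a convex function of the class averages, and so by Jensen it is at least the best fixed payoff. The task is to show that the inequality is strict by a constant margin.

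\textbf{Main obstacle.} The main difficulty is that the separation $\gamma T$ in \Cref{def:smooth} does not obviously force $w_t$ to straddle a kink of $\max\{\frac32w-1,1-2w,0\}$. Since $T$ is free and the game may depend on $T$, I would adapt the payoff thresholds to the learner. Choose $\theta_1<\theta_2$ among quantiles of the $w_t$ distribution so that constant mass lies above $\theta_2$ and below $\theta_1$ with a constant gap. This is possible because the halves differ by $\gamma$ on average and $w_t\in[0,1]$; compactness over $T$ then gives constants uniform in $T$. Rescale the row payoffs so that action $1$ is a best response exactly when $w>\theta_2$, action $2$ exactly when $w<\theta_1$, and action $3$ (payoff $0$) in between, with $0$ the best fixed payoff.

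Near-monotonicity, with the exceptional set of size $h(T)T=o(T)$, then forces $\mathcal L$'s classes to be, up to $o(T)$, a split of the time line into intervals. The sketch's three claims go through in this setting: early play is mostly action $1$; action $1$ can absorb only a bounded amount of late mass; and action $3$ is swapped to action $2$ whenever its class average lies below $\theta_1$. Together these give an average payoff of at least a constant $c(\gamma)>0$, hence $d^{IVF}_{G_T,1}\ge c$. Making this quantile choice uniform, and controlling the $o(T)$ non-monotone rounds, is where I expect most of the work.
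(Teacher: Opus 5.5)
\textbf{Gap 1: the strict gap cannot come from hindsight swap regret.} Your reduction to showing that $\mathcal L$'s average payoff beats the best fixed payoff by a constant is fine. The mechanism you propose for the strict gap, however, does not work. You invoke only the full-horizon no-swap-regret guarantee of $\mathcal L$ against the single realized sequence $(w_t)$, and that guarantee can never give strictness. Always playing the best-in-hindsight action has zero swap regret against that sequence; this is the paper's remark right after the definition of an information-value-free player.

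So your assertion that near-monotonicity of $w_t$ forces $\mathcal L$'s classes into time intervals, with early play mostly action $1$, is unsupported. In the sketch, claim 1 came from adversarial robustness (``Nature could stay in state 1''). The hypothesis does not give you that, because $\mathcal L$ is only assumed to have no swap regret against smooth learners.

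The missing idea is a counterfactual opponent. It coincides with $\mathcal A$ up to a time $\tau$, the last round of the monotone subsequence $H$ at which action $1$ is strictly best. After $\tau$ it freezes at the level where action $1$ becomes indifferent. You must then prove this modified sequence is still a smooth learner. That is not automatic: the paper needs the bound $\tau\le 2T-A+2Th(T)$ and a case split on $\tau\le T$ versus $\tau>T$.

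Given that, $\mathcal L$'s first $\tau$ strategies are identical in both runs. The constant swap to action $1$ loses nothing after $\tau$ in the frozen run. Hence the period-one payoff must be at least $\sum_{t\le\tau}r^t_1-o(T)=\Omega(T)$.

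\textbf{Gap 2: claims 2 and 3 fail for intermediate $w$.} Even granting the counterfactual step, your claims 2 and 3 do not survive intermediate values of $w_t$. Action $3$ can hide low-$w$ rounds among middle ones, while action $1$ cancels its early surplus.

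Here is a concrete case. Take the sketch's payoffs; their best-response regions are $w\ge 2/3$ for action $1$ and $w\le 1/2$ for action $2$, not $4/5$ and $2/3$. Take a trajectory decreasing linearly from $0.9$ to $0.3$, which meets the smoothness definition. Let the player play action $1$ while $w>2/3$, action $3$ while $w\in[a,2/3]$, and action $1$ again when $w<a$. Choose $a\approx 0.384$ so that the class-$1$ average of $w$ is exactly $2/3$. The class-$3$ average is then about $0.525$.

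With this play, swap regret is zero at every prefix, and it also satisfies every freeze-type counterfactual constraint. Early play is entirely action $1$. Yet the total payoff is $0$, which equals the best fixed payoff. Choosing $\theta_1,\theta_2$ as quantiles (here $1/2$ and $2/3$) does not exclude this.

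\textbf{What closes the gap.} The paper's tuning does. Place the lower threshold exactly at the mean $\overline w$, so action $2$ has fixed total exactly $0$ and ties for best fixed action. Then use the per-round identity $r^t_2=-\kappa r^t_1-\delta$ with $\kappa,\delta>0$; in the paper this is $G_M^{(t)}=-G_L^{(t)}-\beta$.

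The swaps $1\to3$ and $3\to2$, combined with $\sum_t r^t_2=0$, give total payoff at least $\delta N_1-o(T)$, where $N_1$ is the expected number of action-$1$ plays. The counterfactual step forces $N_1=\Omega(T)$. With upper threshold $\overline w+\beta$ and $\beta=A/(8T)$, this is exactly the paper's game, with the roles of row and column exchanged.
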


\paragraph{Construction}
Given any time horizon $2T$ and any smooth learner, we construct a game $G_{2T}$ such that the players cannot converge to an IVFE in the game.

In the construction of $G_{2T}$, the row player uses a smooth learner. The players play $2T$ rounds of the fixed underlying game shown in \Cref{table:impossibility-ivfe-game}. The parameters are to be set in a moment.
\begin{table}[htbp]
    \centering
    \begin{tabular}{|c|c|c|c|}
        \hline
        & $L$    &  $M$    & $R$  \\ \hline
         $U$ & $0, U_L$    & $0, U_M$    & $0, 0$    \\ \hline
         $D$ & $\eps, D_L$ & $\eps, D_M$ & $\eps, 0$ \\ \hline
    \end{tabular}
    \caption{Payoff matrix of the game $G_{2T}$}
    \label{table:impossibility-ivfe-game}
\end{table}
The row player's utility depends only on her own action, and hence the row player's behavior is exactly the behavior of the smooth learner in the two-arm experiment with rewards 0 and $\eps$. Let $w_t$ be the probability that the row player plays $U$ in round $t$. Define
\[ P = \sum_{t=1}^Tw_t, \quad Q = \sum_{t=T+1}^{2T}w_t. \]
By smoothness, $P-Q = \Omega(T)$.

Let
\[ \overline w = \frac{1}{2T} \sum_{t=1}^{2T}w_t, \quad A = \sum_{t:w_t>\overline w}(w_t - \overline w), \quad \beta = \frac{A}{8T}. \]
We set the payoffs to be
\[ U_L=1 - \overline w - \beta, \quad U_M = \overline w-1, \]
and
\[ D_L=-\overline w - \beta, \quad D_M = \overline w. \]
Note that $0 \leq w_t \leq 1$ for all $t$, therefore $\overline w \in [0, 1]$, and all four payoffs $U_L, D_L, U_M, D_M$ are $O(1)$. Moreover, observe that $A \leq 2T(1 - \overline w)$, so $\beta = \frac{A}{8T} \leq \frac{1 - \overline w}{4} \leq \frac 14$ and $\overline w+ \beta \leq \frac 34 \overline w+\frac 14 \leq 1$.

\paragraph{Analysis}

We show that the distance to an IVFE is at least $\eta=\beta^2$. Together with $A=\Omega(T)$ shown in \Cref{lemma:period1_LM} and $\beta = \frac{A}{8T}$, we conclude that there exists a constant $c>0$ such that the limit inferior is at least $c$.

Assume for the sake of contradiction that the distance is less than $\eta = \beta^2$. Then there exist infinitely many time horizons such that the magnitude of negative best-in-hindsight regret \emph{in the constructed game} is less than $2\eta T+O(TR(T))$ for some rate $R(T)=o(1)$. Take such a subsequence and re-index it from $1, 2, \dotsc$.

Let $R'(T)=o(1)$ be a rate such that over the $2T$-round horizon, the column player's swap regret is at most $O(TR'(T))$. It is without loss of generality to assume $R'(T)=O(R(T))$ as we can always take $\max\{R(T), R'(T)\}$ as the general rate and this does not change any assumption since both rates are $o(1)$. Therefore, we replace $R(T), R'(T)$ by their maximum and use the same notation $R(T)$ afterwards.

Throughout the remaining proof, $h(T)$ denotes the error term from the smoothness property for the $2T$-round run. (Equivalently one may read it as $h(2T)$, but we suppress this notational distinction.)

Let $a_t \in \{L, M, R\}$ be the action chosen by the column player in round $t$. Let $G_i^{(t)}$ denote the column player's reward from playing action $i \in \{L, M, R\}$ in round $t$. Since $G_R^{(t)}=0$ for every $t$, the column player's actual payoff over any set of rounds is the sum of only the contributions from the rounds in which she plays $L$ or $M$.

By smoothness, there exists a set $H \subseteq \{1, 2, \dotsc, 2T\}$ of size at least $2T(1-h(T))$ such that $\{w_t:t \in H\}$ is weakly decreasing in time. Consider
\[ H^+ = \{t \in H:w_t>\overline w+\beta\}, \quad \tau = \max H^+. \]
Later in \Cref{lemma:period1_LM} we show that $H^+$ is non-empty and $\tau$ is well-defined for all sufficiently large $T$. We call rounds $1, 2, \dotsc, \tau$ \emph{period one}, and rounds $\tau+1, \tau+2, \dotsc, 2T$ \emph{period two}.

Now, the intuition of our proof is similar to the simple case in the main text. To have no swap regret, the column player must get a significant amount of reward from action $M$ in period two. It follows that to have non-negative best-in-hindsight regret, the player must have low reward in period one (\Cref{lemma:low_period1}). However, this observation contradicts the fact that the player's algorithm has no swap regret against any smooth learner and hence the reward in period one should be high (\Cref{lemma:high_period1}). Formally, we show
\begin{lemma}\label{lemma:high_period1}
    We have
    \[ \ee{}{\sum_{t=1}^\tau \left(\ind{a_t=L}G_L^{(t)} + \ind{a_t=M}G_M^{(t)} \right)} \geq \frac 34A-o(T)= \Omega(T). \]
\end{lemma}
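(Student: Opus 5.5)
The plan is to compute the column player's per-round payoffs explicitly and then apply her no-swap-regret guarantee to the prefix $1,\dots,\tau$ with the swap $\sigma\equiv L$. This yields the $\frac34 A$ bound once the row player's probabilities are controlled using smoothness.

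\textbf{Per-round payoffs.} Substituting the payoffs into the expectation over the row player's mixed strategy gives
\[ G_L^{(t)} = w_tU_L+(1-w_t)D_L = w_t-\overline w-\beta, \qquad G_M^{(t)} = w_tU_M+(1-w_t)D_M = \overline w - w_t, \qquad G_R^{(t)}=0. \]
In period one the rounds of $H$ satisfy $w_t\ge w_\tau>\overline w+\beta$, because $\{w_t\}_{t\in H}$ is weakly decreasing and $\tau=\max H^+$. So $L$ is strictly the best action there.

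\textbf{Step 1 (swap to $L$ on the prefix).} I will use the swap $\sigma\equiv L$, which sends $M$ and $R$ to $L$. Applied to the first $\tau$ rounds, it gives
\[ \ee{}{\sum_{t=1}^\tau \left(\ind{a_t=L}G_L^{(t)}+\ind{a_t=M}G_M^{(t)}\right)} \ge \sum_{t=1}^\tau G_L^{(t)} - O(TR(T)). \]
The justification, in the spirit of ``Nature can continue to select the first state'' in the sketch, is as follows. The column player's play up to $\tau$ is determined by the history up to $\tau$ and so cannot depend on the continuation. The no-swap-regret guarantee against a smooth learner must therefore already hold on this prefix; this is where the hypothesis that $\mathcal L$ has no swap regret against \emph{any} smooth learner, at every horizon, is used. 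I also need $\tau=\Omega(T)$ so that the $o(\tau)$ slack is $o(T)$. This follows from Step 3, since the rounds of $H$ with $w_t>\overline w+\beta$ number $\Omega(T)$ (otherwise their contribution, of size $O(1)$ each, could not reach $\frac34A-o(T)=\Omega(T)$).

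\textbf{Step 2 (lower-bounding $\sum_{t\le\tau}G_L^{(t)}$).} I split the rounds $t\le\tau$ into three groups.
\begin{itemize}
\item Rounds in $H$ all have $w_t>\overline w+\beta$, so each contributes $w_t-\overline w-\beta>0$.
\item Rounds not in $H$ number at most $2Th(T)=o(T)$ and each contributes $O(1)$ in absolute value, for a total loss of $o(T)$.
\item Every round of $H$ with $w_t>\overline w+\beta$ lies at or before $\tau$, so no positive term is lost beyond the $o(T)$ non-$H$ rounds.
\end{itemize}
Together these give
\[ \sum_{t\le\tau}G_L^{(t)} \ge \sum_{t:\,w_t>\overline w+\beta}(w_t-\overline w-\beta) - o(T) \ge \sum_{t:\,w_t>\overline w}(w_t-\overline w) - 2T\beta - o(T). \]
The second inequality holds because each round with $w_t>\overline w$ loses at most $\beta$ when we pass to $(w_t-\overline w-\beta)_+$, and there are at most $2T$ such rounds. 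Since $2T\beta=A/4$, the bound is $A-A/4-o(T)=\frac34A-o(T)$.

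\textbf{Step 3 ($A=\Omega(T)$).} Since $\sum_t(w_t-\overline w)=0$, we have $A=\frac12\sum_t|w_t-\overline w|$. Then
\[ \sum_t|w_t-\overline w| \ge \left|\sum_{t\le T}(w_t-\overline w)\right| + \left|\sum_{t>T}(w_t-\overline w)\right| \ge P-Q, \]
and smoothness gives $P-Q=\Omega(T)$, so $A=\Omega(T)$. This also shows that $H^+$ is non-empty.

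\textbf{Main obstacle.} I expect the hardest step to be Step 1, namely making the prefix application of the swap-regret guarantee rigorous. If the guarantee is only assumed at the full horizon, I would instead build a modified smooth opponent that agrees with the row player up to $\tau$ and keeps $w_t$ high afterwards. Under that opponent $L$ stays best throughout, so swap regret forces her to play $L$ almost always. I would then transfer this conclusion back to the prefix by indistinguishability of histories.
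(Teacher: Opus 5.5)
Your Steps 2 and 3 are correct and match the paper's \Cref{lemma:period1_LM}, including the $(x-\beta)_+\ge x_+-\beta$ step that yields $\frac34A-o(T)$. The gap is Step 1. That the column player's first $\tau$ strategies do not depend on the continuation does not, by itself, bound her swap regret on $[1,\tau]$. The guarantee you have is a full-horizon one, and in the actual run it does not localize. After $\tau$ there are rounds with $w_t<\overline w$, where $M$ earns $\overline w-w_t>0$ and $L$ earns $w_t-\overline w-\beta<0$. So the constant swap $\sigma\equiv L$ can have linearly negative regret in period two, which can absorb a linear positive regret in period one.

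Appealing to the guarantee ``at every horizon'' does not fix this. A horizon-dependent $\mathcal L$ does not play the first $\tau$ rounds of a $2T$-round run the way it plays a $\tau$-round run. Even for an anytime $\mathcal L$, the prefix $w_1,\dots,w_\tau$ need not satisfy the smooth-learner conditions at horizon $\tau$, so the hypothesis on $\mathcal L$ says nothing about that run. For example, if $w_t$ equals a constant on $[1,T]$ and $0$ afterwards, then $\tau=T$ and the half-sum gap over $[1,\tau]$ is zero.

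Your fallback is the right idea and is exactly the paper's route, but it omits the part that actually requires proof. The paper continues the row player at $w_t=\overline w+\beta$ for $t>\tau$. This is the lowest level at which $L$ is weakly best ($G_L=G_R=0$, $G_M=-\beta$). It then proves in \Cref{claim:counterfactual_smooth} that the spliced sequence is still a smooth learner. The sequence is weakly decreasing on $(H\cap[1,\tau])\cup\{\tau+1,\dots,2T\}$. Its half-sum gap is $\Omega(T)$ by a case split on $\tau\le T$ versus $\tau>T$, where the latter case uses $\tau\le 2T-A+2Th(T)$.

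A vague ``keep $w_t$ high'' continuation need not be smooth. In the example above, continuing at $w_\tau$ gives a constant sequence with zero gap.

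Also, the conclusion to draw from the counterfactual is not that she plays $L$ almost always. In period one the margin $G_L-G_R=w_t-\overline w-\beta$ can be arbitrarily small, so that does not follow. The right conclusion is that the constant-$L$ swap has non-negative regret in every round after $\tau$. Hence its regret on $[1,\tau]$ is at most the full-horizon bound $O(TR(T))$. This transfers to the actual run because the first $\tau$ histories coincide.
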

\begin{lemma}\label{lemma:low_period1}
    We have
    \[ \ee{}{\sum_{t=1}^\tau \left(\ind{a_t=L}G_L^{(t)} + \ind{a_t=M}G_M^{(t)} \right)} \leq \frac{5\eta}{2 \beta}T+ o(T). \]
\end{lemma}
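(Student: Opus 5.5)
The plan is to sandwich the period-one payoff between the column player's total payoff, which is small by the contradiction hypothesis, and her period-two payoff, which should not be very negative by no swap regret.

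\textbf{Step 1: the best fixed action earns zero.} Write $G_L^{(t)} = w_t - \overline w - \beta$ and $G_M^{(t)} = \overline w - w_t$, and recall $G_R^{(t)}=0$. Summing over all $2T$ rounds:
\begin{itemize}
\item action $L$ earns $2T\overline w - 2T\overline w - 2T\beta = -2T\beta$;
\item action $M$ earns exactly $0$;
\item action $R$ earns $0$.
\end{itemize}
So the best-in-hindsight payoff is $0$. The contradiction hypothesis says the negative best-in-hindsight regret has magnitude at most $2\eta T + O(TR(T))$. Hence the column player's expected total payoff is at most $2\eta T + o(T)$.

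\textbf{Step 2: reduce to a period-two lower bound.} Split the total payoff into period one and period two. The period-one quantity in the lemma is at most $2\eta T + o(T)$ minus the expected period-two payoff. It therefore suffices to show
\[ \ee{}{\sum_{t=\tau+1}^{2T} \left(\ind{a_t=L}G_L^{(t)} + \ind{a_t=M}G_M^{(t)}\right)} \geq -\frac{\eta}{2\beta}T - o(T). \]

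\textbf{Step 3: signs of the rewards in period two.} Rounds outside $H$ number at most $2T h(T) = o(T)$, and payoffs are $O(1)$, so I discard them. For $t\in H$ with $t>\tau$, monotonicity of $\{w_t\}_{t\in H}$ and the definition of $\tau$ give $w_t \le \overline w + \beta$. Consequently $G_M^{(t)} \ge -\beta$ and $G_L^{(t)} \le 0$ in these rounds.

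\textbf{Step 4: use single swaps.} Because swap regret is global, only time-independent swaps are available. I will combine the following three.
\begin{itemize}
\item The swap $L\to R$ shows the total $L$-payoff is at least $-o(T)$.
\item The swap $M\to R$ shows the total $M$-payoff is at least $-o(T)$.
\item The swap $L\to M$ has per-round gain $G_M - G_L = -2G_L^{(t)} - \beta$. Summed over all $L$-rounds, this gain must be at most $o(T)$.
\end{itemize}
Combining the first two with Step 1, each action's total payoff lies in $[-o(T),\, 2\eta T + o(T)]$. The $L\to M$ swap then limits how much $L$-loss in period two can be offset by $L$-gain elsewhere.

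\textbf{Step 5: threshold (Markov-type) argument.} Take the period-two rounds where the played action's reward falls below $-\beta$. By Step 3, these can only be $L$-rounds; $M$-rounds there have reward at least $-\beta$. For such $L$-rounds, the $L\to M$ swap gains more than $\beta$ per round. Rounds whose loss is bounded by $\beta$ can in turn be charged against the $2\eta T$ slack in the total payoff. A Markov-type count of the rounds with loss at least $\beta$, weighted by the slack $2\eta T$, should give a period-two loss of at most $\frac{\eta}{2\beta}T$. Adding this to the $2\eta T$ total bound yields $\frac{5\eta}{2\beta}T$, since $2\eta = \frac{4\eta\beta}{2\beta}$ and $\beta \le \frac14$.

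\textbf{Main obstacle.} Step 5 is the hard part: converting global swap-regret constraints into a bound on losses confined to period two. The difficulty is that a large period-two $L$-loss could, a priori, be offset by period-one $L$-gain while keeping the $L$-total near zero. I would first try to rule this out directly using the $L\to M$ swap gain $-2G_L - \beta$. If that fails, I would refine the threshold, partitioning rounds by whether $w_t$ lies within $\beta/2$ of $\overline w$, which is where a factor $1/(2\beta)$ would naturally arise.
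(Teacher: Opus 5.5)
Your Steps 1--4 are correct as far as they go, but Step 5 is not an argument, and the constraints you have assembled cannot be turned into one. You have: total payoff $L_{\text{act}}+M_{\text{act}}\le 2\eta T+o(T)$; $L_{\text{act}}\ge -o(T)$ from the swap $L\to R$; $M_{\text{act}}\ge -o(T)$ from the swap $M\to R$; and the $L\to M$ swap. Write $N_L=\sum_t\Pr[a_t=L]$. Then the $L\to M$ swap reads $2L_{\text{act}}+\beta N_L\ge -o(T)$, which bounds $N_L$ only from \emph{below}. These four constraints are all satisfied by $L_{\text{act}}=M_{\text{act}}=0$ and $N_L=T$, so they do not prevent the column player from playing $L$ in $\Theta(T)$ rounds. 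Since the $L\to M$ swap is global, its gains above $\beta$ per round on bad period-two $L$-rounds can be cancelled by its gains below $-\beta$ per round on period-one $L$-rounds. This is exactly the offsetting you flag, and no threshold or Markov-type count removes it. Your Step 2 target of $-\frac{\eta}{2\beta}T$ is also unsupported; it is four times sharper than what the natural bound delivers, and it is not needed.

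The missing idea is an upper bound on $N_L$ that uses the best-in-hindsight slack through the $M$ column. It combines three facts:
\begin{enumerate}
\item The identity $G_M^{(t)}=-G_L^{(t)}-\beta$ gives $C_{L\to M}:=\mathbb{E}[\sum_t\mathbf{1}\{a_t=L\}G_M^{(t)}]=-L_{\text{act}}-\beta N_L$.
\item The fixed action $M$ earns exactly zero, and splitting this by the action actually played gives $C_{L\to M}+M_{\text{act}}+C_{R\to M}=0$.
\item The $R\to M$ swap gives $C_{R\to M}\le o(T)$.
\end{enumerate}
Together these yield $M_{\text{act}}\ge L_{\text{act}}+\beta N_L-o(T)$. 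Hence the total payoff is at least $2L_{\text{act}}+\beta N_L-o(T)\ge\beta N_L-o(T)$, and the slack gives $N_L\le\frac{2\eta}{\beta}T+o(T)$. In your counterexample, it is the $R\to M$ swap that fails.

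With this bound the paper works on period one directly. The $L$-part is at most $\max_t|G_L^{(t)}|\cdot N_L\le\frac{5}{4}N_L$. The $M$-part is at most $o(T)$, because $\mathbb{E}[G_M^{(t)}]=\overline w-w_t\le-\beta$ for $t\in H$ with $t\le\tau$, $a_t$ is independent of the current-round reward, and the rounds outside $H$ number $o(T)$.

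Your period-two route can also be completed once the $N_L$ bound is in hand. The swap $M\to R$, together with the period-one $M$-bound, gives period-two $M$-payoff at least $-o(T)$. For $t\in H$ with $t>\tau$ you have $|\mathbb{E}[G_L^{(t)}]|\le\overline w+\beta\le 1$, so the period-two $L$-payoff is at least $-N_L-o(T)$. Period one is then at most $2\eta T+\frac{2\eta}{\beta}T+o(T)\le\frac{5\eta}{2\beta}T+o(T)$, using $\beta\le\frac14$. This reaches the same bound less directly.
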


The contradiction immediately follows from \Cref{lemma:high_period1,lemma:low_period1} because, if $\eta \leq \frac{A}{2T}$, then
\[ \frac 34 \cdot (8 \beta T)-o(T) \leq \ee{}{\sum_{t=1}^\tau \left(\ind{a_t=L}G_L^{(t)} + \ind{a_t=M}G_M^{(t)} \right)} \leq \frac{5 \beta^2T}{2 \beta}+o(T), \]
which implies $6\beta T \leq \frac 52 \beta T$, which is a contradiction.

\subsection{\texorpdfstring{Proof of \Cref{lemma:high_period1}}{Proof of Lemma~\ref*{lemma:high_period1}}}\label{subsec:pf_lemma_high_period1}
\begin{lemma*}
    We have
    \[ \ee{}{\sum_{t=1}^\tau \left(\ind{a_t=L}G_L^{(t)} + \ind{a_t=M}G_M^{(t)} \right)} \geq \frac 34A-o(T)= \Omega(T). \]
\end{lemma*}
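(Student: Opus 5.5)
The plan is to compute the column player's per-round rewards explicitly. Then I will show that simply playing $L$ throughout period one already earns at least $\frac34 A - o(T)$. Finally, I will invoke the column player's no-regret guarantee, restricted to period one, to transfer this bound to her actual play.

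\textbf{Step 1: per-round rewards.} Because the row player plays $U$ with probability $w_t$, the entries of the payoff table give
\[
G_L^{(t)} = w_t U_L + (1-w_t) D_L = w_t - \overline w - \beta, \qquad G_M^{(t)} = w_t U_M + (1-w_t) D_M = \overline w - w_t, \qquad G_R^{(t)} = 0 .
\]
So in period one, $L$ is the profitable action whenever $w_t > \overline w + \beta$, and $M$ is then strictly unprofitable.

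\textbf{Step 2: the benchmark of always playing $L$ in period one.} Fix the smooth-learner set $H$. By weak monotonicity of $\{w_t : t \in H\}$ and the definition $\tau = \max H^+$, every $t \in H$ with $t \le \tau$ satisfies $w_t \ge w_\tau > \overline w + \beta$. Also, every $t \in H$ with $w_t > \overline w + \beta$ satisfies $t \le \tau$. Hence, up to the at most $2T h(T) = o(T)$ rounds outside $H$, each contributing $O(1)$,
\[
\sum_{t=1}^{\tau} G_L^{(t)} \;\ge\; \sum_{t \in H^+} (w_t - \overline w - \beta) - o(T) \;=\; \sum_{t\in H:\, w_t > \overline w} (w_t - \overline w - \beta)_+ - o(T).
\]
Using $(x-\beta)_+ \ge x-\beta$ over at most $2T$ terms, this is at least $A - 2T\beta - o(T) = A - \frac{A}{4} - o(T) = \frac34 A - o(T)$, since $\beta = A/(8T)$. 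That $A = \Omega(T)$ comes from the companion lemma, using $P - Q = \Omega(T)$: the first-half weights exceed the second-half weights by $\Omega(T)$, which forces $\Omega(T)$ total excess above the mean $\overline w$.

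\textbf{Step 3: transfer to actual play.} Best-in-hindsight regret is bounded by swap regret (take the constant swap $\sigma \equiv L$). So if the column player's swap regret over the first $\tau$ rounds is $o(T)$, her expected realized payoff satisfies
\[
\mathbb E\Big[\sum_{t\le\tau} \big(\mathbb 1\{a_t=L\}G_L^{(t)} + \mathbb 1\{a_t=M\}G_M^{(t)}\big)\Big] \ \ge\ \sum_{t\le \tau} G_L^{(t)} - o(T).
\]
Here I use $G_R \equiv 0$ to drop the $R$ rounds, and the expectation is over her randomization, with rewards taken as expectations over the row player's mixed strategy. Combining this with Step 2 gives the lemma.

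\textbf{Main obstacle.} The delicate point is Step 3: the no-swap-regret guarantee is stated for the full $2T$-round horizon, whereas I need it on the prefix $[1,\tau]$. My first attempt uses the fact that the row player's utility depends only on her own action. The row sequence $w_1, \dots, w_\tau$ is therefore determined independently of the column player's behavior, so the first $\tau$ rounds are a legitimate run of the column algorithm against this same smooth learner. I would then apply the anytime, prefix form of the regret rate $R(\cdot)$. For this to give an $o(T)$ error, I need $\tau = \Theta(T)$, or at least that the rate is uniform over prefixes. I would obtain $\tau = \Omega(T)$ from $|H^+| \ge \big(\frac34 A - o(T)\big)/O(1) = \Omega(T)$.

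If only a full-horizon guarantee is available, the fallback is to use the swap $\sigma(M) = \sigma(R) = L$ over the whole horizon. I would then bound the period-two side: there $G_L^{(t)} \le 0$ on $H$, so switching to $L$ in period two costs nothing relative to $R$. I would also control the $M$-plays in period two separately, and isolate the period-one inequality from the global one.
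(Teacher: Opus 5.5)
Your Steps 1 and 2 are fine and reproduce the paper's benchmark bound $\sum_{t\le\tau}G_L^{(t)}\ge \frac34A-o(T)$ (Lemma~\ref{lemma:period1_LM}). \textbf{Step 3 has a genuine gap.}

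The hypothesis bounds the column player's swap regret only at the end of the $2T$-round horizon. It gives no anytime or prefix guarantee at time $\tau$. The first $\tau$ rounds of a horizon-$2T$ run are also not a complete run to which the hypothesis applies, so your first attempt assumes something the theorem does not grant.

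Your fallback cannot repair this inside the actual process, and its sign is backwards. In period two, on $H$, we have $G_L^{(t)}\le 0=G_R^{(t)}$. So relabeling $R$-plays as $L$ there has non-positive gain. Since $\sum_{t>\tau}G_L^{(t)}=-\frac A4-\sum_{t\le\tau}G_L^{(t)}\le -A+o(T)$, these losses can absorb the entire period-one gain in the full-horizon sum.

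In fact, the lemma does not follow from full-horizon no swap regret against the actual opponent at all. The column strategy ``always play $R$'' has $2T$-round swap regret $\max\{0,\sum_tG_L^{(t)},\sum_tG_M^{(t)}\}=\max\{0,-A/4,0\}=0$. Yet its period-one payoff is $0$, not $\frac34A-o(T)$.

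\textbf{The missing idea is a counterfactual opponent.} Let a row player play $U$ with probability $w_t$ for $t\le\tau$ and with probability $\overline w+\beta$ for $t>\tau$. The column player observes only its own reward vectors, and these coincide for $t\le\tau$, so its first $\tau$ strategies are identical in the two processes. In the counterfactual period two, $G_L^{(t)}=0$, $G_M^{(t)}=-\beta$ and $G_R^{(t)}=0$. The constant swap to $L$ therefore has non-negative gain there. Hence the full-horizon swap-regret bound $O(TR(T))$ against this opponent caps the period-one gain of that swap, which combined with your Step 2 gives the lemma.

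For this you must also check that the counterfactual is itself a smooth learner. It is weakly decreasing on $(H\cap[1,\tau])\cup(\tau,2T]$. Its first-half/second-half gap is $\Omega(T)$, which uses $\tau\le 2T-A+2Th(T)$. This is exactly where the hypothesis ``no swap regret against \emph{any} smooth learner'' is needed, rather than merely against the actual opponent.
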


We first make the following observation on the payoff structure.
\begin{lemma}\label{lemma:er}
    Let $G_i^{(t)}\,(i \in \{L, M, R\})$ be the random variable that denotes the reward from playing $i$ in round $t$ for the column player. Then
    \begin{align*}
        \ee{}{G_L^{(t)}}=w_t-\overline w - \beta, \quad & \ee{}{\sum_{t=1}^{2T}G_L^{(t)}}=-\frac A4, \\
        \ee{}{G_M^{(t)}} = \overline w-w_t, \quad & \ee{}{\sum_{t=1}^{2T}G_M^{(t)}}=0.
    \end{align*}
\end{lemma}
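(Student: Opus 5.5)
The identity $\sum_{t:w_t>\overline w}(w_t-\overline w)=\sum_{t:w_t<\overline w}(\overline w-w_t)$ combined with $\sum_t|w_t-\overline w|=2A$ suggests I should just compute directly; there seems to be nothing deep here. The plan is to read off the column player's per-round rewards from \Cref{table:impossibility-ivfe-game}. I will use the convention, fixed in the setup, that reward entries are expectations over the opponent's mixed strategy in that round. In round $t$ the row player plays $U$ with probability $w_t$ and $D$ with probability $1-w_t$. So the column player's expected reward from $L$ is $w_tU_L+(1-w_t)D_L$, and from $M$ it is $w_tU_M+(1-w_t)D_M$.

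For $L$, substitute $U_L=1-\overline w-\beta$ and $D_L=-\overline w-\beta$. This gives
\[ w_t(1-\overline w-\beta)+(1-w_t)(-\overline w-\beta)=w_t-\overline w-\beta. \]
For $M$, substitute $U_M=\overline w-1$ and $D_M=\overline w$. This gives
\[ w_t(\overline w-1)+(1-w_t)\overline w=\overline w-w_t. \]
Both per-round formulas are immediate algebra.

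For the totals, I sum over $t=1,\dots,2T$. By the definition of $\overline w$ we have $\sum_t(\overline w-w_t)=0$, so $\mathbb E\bigl[\sum_t G_M^{(t)}\bigr]=0$. For $L$, the sum is $\sum_t(w_t-\overline w)-2T\beta=0-2T\cdot\frac{A}{8T}=-\frac A4$, using $\beta=A/(8T)$.

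The only point requiring care is what the expectation is taken over. The $w_t$ may themselves be random if the smooth learner randomizes internally. Here, however, the row player's rewards $(0,\eps)$ do not depend on the column player, so the $w_t$ are determined by the algorithm alone. In the deterministic case considered in \Cref{def:smooth}, $\overline w$, $A$ and $\beta$ are therefore constants, and the expectations pass through linearly. I expect no real obstacle beyond stating this convention explicitly.
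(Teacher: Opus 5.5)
Your proposal is correct and follows the paper's proof exactly: substitute the payoffs into $w_tU_i+(1-w_t)D_i$ for $i\in\{L,M\}$, then sum over the $2T$ rounds using $\sum_{t}(w_t-\overline w)=0$ and $\beta=A/(8T)$. Your remark that the $w_t$ are deterministic, because the row player's rewards do not depend on the column player, is a reasonable explicit statement of an assumption the paper uses implicitly when it builds the game from $\overline w$ and $A$.
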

The proof of \Cref{lemma:er} is algebraic, and the reader may skip it without affecting the understanding of the main argument.
\begin{proof}
    We verify that
    \begin{align*}
        \ee{}{G_L^{(t)}}&=w_t(1 - \overline w - \beta)+(1-w_t)(-\overline w - \beta)=w_t+1 \cdot (-\overline w-\beta)
        \\ \ee{}{G_M^{(t)}}&=w_t(\overline w-1)+(1-w_t) \overline w = \overline w-w_t,
    \end{align*}
    and that
    \begin{align*}
        \ee{}{\sum_{t=1}^{2T}G_L^{(t)}} &= \sum_{t=1}^{2T}(w_t-\overline w)-2T \beta=-2T \beta=-\frac A4,
        \\\ee{}{\sum_{t=1}^{2T}G_M^{(t)}} &= \sum_{t=1}^{2T}(w_t-\overline w)=0.
    \end{align*}
    Here $\sum_{t=1}^{2T}(w_t-\overline w)=0$ since $\overline w$ is the average of the values $w_t$.
\end{proof}

Before presenting the proof, we also need the following lemma.
\begin{lemma}\label{lemma:period1_LM}
    The set $H^+$ is non-empty and $\tau$ is well-defined for all sufficiently large $T$. Moreover, we have
    \[ \ee{}{\sum_{t=1}^\tau G_L^{(t)}} \geq \frac 34A-o(T)= \Omega(T), \quad \ee{}{G_M^{(t)}} \leq -\beta, \]
    for every $t \leq \tau$ and $t \in H$.
\end{lemma}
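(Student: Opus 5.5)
The second inequality follows directly from the monotonicity on $H$. The first, together with non-emptiness of $H^+$, comes from a single accounting identity for $\sum_t (w_t-\overline w-\beta)_+$. Throughout I use the per-round formulas of \Cref{lemma:er}: $\ee{}{G_L^{(t)}} = w_t-\overline w-\beta$ and $\ee{}{G_M^{(t)}} = \overline w - w_t$.

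\textbf{Step 1: $A=\Omega(T)$.} Since $\overline w = (P+Q)/(2T)$, we have $\sum_{t=1}^{T}(w_t-\overline w) = P - T\overline w = (P-Q)/2$. The right-hand side is $\Omega(T)$ by the smooth-learner property. Every partial sum of the deviations $w_t-\overline w$ is at most the sum of their positive parts, which is $A$. Hence $A \ge (P-Q)/2 = \Omega(T)$, and therefore $\beta = A/(8T)$ is bounded below by a positive constant.

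\textbf{Step 2: mass above the threshold $\overline w+\beta$.} Since $(x-\beta)_+ \ge x_+ - \beta$, summing over all $2T$ rounds gives
\[ \sum_{t=1}^{2T}(w_t-\overline w-\beta)_+ \;\ge\; A - 2T\beta \;=\; \tfrac34 A. \]
Each round outside $H$ contributes at most $1$ to this sum, and there are at most $2T h(T) = o(T)$ such rounds. Therefore
\[ \sum_{t\in H}(w_t-\overline w-\beta)_+ \;\ge\; \tfrac34 A - o(T). \]
By Step 1 this is positive for all large $T$. So some $t\in H$ has $w_t>\overline w+\beta$, which means $H^+\neq\emptyset$ and $\tau$ is well defined.

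\textbf{Step 3: the two bounds.} Because $\{w_t : t \in H\}$ is weakly decreasing and $\tau\in H^+$, every $t\in H$ with $t\le\tau$ satisfies $w_t \ge w_\tau > \overline w+\beta$. This gives $\ee{}{G_M^{(t)}} = \overline w - w_t < -\beta$, which is the second claim.

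For the first claim, note that $H^+ = \{t \in H : t \le \tau\}$. The inclusion $\supseteq$ holds by the monotonicity just noted. The inclusion $\subseteq$ holds because $\tau$ is the maximum of $H^+$, so every $t\in H$ with $t>\tau$ has $w_t\le \overline w+\beta$. Consequently
\[ \sum_{t\le\tau,\,t\in H}(w_t-\overline w-\beta) \;=\; \sum_{t\in H}(w_t-\overline w-\beta)_+ \;\ge\; \tfrac34 A - o(T). \]
The rounds $t\le \tau$ with $t\notin H$ number $o(T)$, and each term $w_t-\overline w-\beta$ lies in $[-2,1]$ since $\overline w+\beta\le 1$. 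So these rounds change the sum by only $o(T)$. This yields
\[ \ee{}{\sum_{t=1}^\tau G_L^{(t)}} \ge \tfrac34A-o(T) = \Omega(T). \]

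\textbf{Main obstacle.} The only delicate point is the handling of the non-monotone rounds outside $H$. These must be absorbed as $o(T)$ both when showing $H^+\neq\emptyset$ and when comparing the sum over $t\le\tau$ with the positive-part sum. This works because $h(T)=o(1)$ while $A$ is linear in $T$.
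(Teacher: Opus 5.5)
Your proposal is correct and follows essentially the same argument as the paper: you get $A=\Omega(T)$ from $\sum_{t=1}^T(w_t-\overline w)=(P-Q)/2$, use $(x-\beta)_+\ge x_+-\beta$ to obtain mass at least $\tfrac34A-2Th(T)$ above the threshold $\overline w+\beta$ within $H$, and then use monotonicity on $H$ to absorb the $o(T)$ rounds outside $H$. Your explicit identification $H^+=\{t\in H: t\le\tau\}$ is only a slightly cleaner statement of what the paper uses implicitly, so the route is the same.
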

The proof of \Cref{lemma:period1_LM} can be found in \Cref{subsec:proof_period1_LM}.

Consider a counterfactual row player that plays action $U$ with probability $w_t$ for $t \leq \tau$ and with probability $\overline w+ \beta$ for $t > \tau$. Recall that $\overline w+\beta \leq 1$ as we have shown at the end of the construction paragraph. We also claim:
\begin{claim}\label{claim:counterfactual_smooth}
    This counterfactual row player is a smooth learner.
\end{claim}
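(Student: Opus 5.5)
Let $w'_t$ denote the probability that the counterfactual row player plays $U$ in round $t$: $w'_t = w_t$ for $t\le\tau$ and $w'_t = \overline w+\beta$ for $t>\tau$. We verify the two requirements of \Cref{def:smooth} for $\{w'_t\}_{t=1}^{2T}$, using the same $\eps$ and the same error function $h$ as for the original smooth learner. (Strictly, \Cref{def:smooth} is phrased for the two-arm experiment. What the later argument uses is only that the column player's no-swap-regret guarantee against smooth learners applies to this sequence of mixed strategies. So it suffices to show that $\{w'_t\}$ has the structural properties of the definition, and we read the claim in that sense.)

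\textbf{Monotone subsequence.} We keep the original index set $H$, which has size at least $2T(1-h(T))$ and on which $w_t$ is weakly decreasing. On $H\cap\{1,\dots,\tau\}$ the sequence is unchanged, so it is still weakly decreasing there. Since $\tau=\max H^+$, we have $w_\tau>\overline w+\beta$. Every later index of $H$ carries the constant value $\overline w+\beta$, which is strictly below $w_\tau$ and hence below every earlier element of the subsequence. Thus $\{w'_t:t\in H\}$ is weakly decreasing with the same length, as required.

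\textbf{The $\Omega(T)$ gap.} We need
\[
\sum_{t\le T} w'_t-\sum_{t>T} w'_t=\Omega(T).
\]
The main step is to locate $\tau$ relative to $T$. If $\tau\ge T$, the second half is the constant $\overline w+\beta$. The first half consists of $w_t$ for $t\le T$; by monotonicity on $H$, these are at least $w_\tau>\overline w+\beta$ except for $O(h(T)T)=o(T)$ indices. The difference is then at least $\sum_{t\le T}(w_t-\overline w-\beta)-o(T)$, which is nonnegative but needs a quantitative bound.

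For that bound I would use the quantities from \Cref{lemma:period1_LM}. The proof of that lemma shows $\sum_{t\le\tau}(w_t-\overline w-\beta)\ge \tfrac34A-o(T)$, since $\ee{}{G_L^{(t)}}=w_t-\overline w-\beta$ by \Cref{lemma:er}. Together with $A=\Omega(T)$, this gives a mass of $\Omega(T)$ of excess above $\overline w+\beta$ inside period one. This excess is concentrated at early times, because the $w_t$ are decreasing on $H$. Hence a constant fraction of it lies within the first $T$ rounds. Indeed, if the excess on $t\le T$ is small, the decreasing property forces the excess on $(T,\tau]$ to be smaller still, at most comparable to it.

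If instead $\tau<T$, the second half contains only constant values. The first half has value above $\overline w+\beta$ up to $\tau$ and equal to $\overline w+\beta$ afterward, so the difference equals $\sum_{t\le\tau}(w_t-\overline w-\beta)\ge\tfrac34A-o(T)=\Omega(T)$ directly.

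\textbf{Main obstacle.} I expect the case $\tau\ge T$ to be the hard step. There I would argue that, since $w_t-\overline w-\beta$ is nonnegative and weakly decreasing on $H$ up to $\tau\le 2T$, the sum over the first $T$ indices is at least half the sum over the first $\tau$ indices, up to $o(T)$ coming from the excluded indices. This yields a gap of at least $\tfrac38A-o(T)=\Omega(T)$ and completes the proof.
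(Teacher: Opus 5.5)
Your monotone-subsequence argument and your case $\tau<T$ are correct and agree with the paper. The case $\tau\ge T$, however, contains a genuine error: there the second half is \emph{not} constant. The counterfactual player keeps the original $w_t$ for every $t\le\tau$, so when $\tau>T$ the rounds $T+1,\dots,\tau$ of the second half still carry $w_t$. Writing $f_t=w_t-\overline w-\beta$, the correct identity is
\[
\sum_{t\le T}w'_t-\sum_{t>T}w'_t=\sum_{t\le T}f_t-\sum_{T<t\le\tau}f_t=:S_1-S_2,
\]
not $S_1-o(T)$. Your ``at least half'' comparison does give $S_1\ge\frac{3}{8}A-o(T)$. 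But it only yields $S_1\ge S_2-o(T)$, i.e.\ a gap of $-o(T)$ rather than $\Omega(T)$. Nothing in your argument rules out $\tau$ close to $2T$ with $f_t$ nearly flat on $[1,\tau]$; in that case $S_1\approx S_2$.

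The missing ingredient is that $\tau$ stays a linear distance below $2T$. The paper derives this from $\sum_{t=1}^{2T}(w_t-\overline w)=0$. The deficit $\sum_{t:w_t\le\overline w}(\overline w-w_t)$ equals $A$ and each term is at most $1$, so at least $A$ rounds have $w_t\le\overline w$. At least $A-2Th(T)$ of these lie in $H$, and all of them come after $\tau$, because every $t\in H$ with $t\le\tau$ has $w_t>\overline w+\beta$. Hence $\tau\le 2T-A+2Th(T)$. The two index sets now have genuinely different sizes: $|H\cap[1,T]|\ge T-2Th(T)$, while $|H\cap(T,\tau]|\le T-A+2Th(T)$. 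Your monotonicity comparison then gives $S_2\le\frac{T-A+2Th(T)}{T-2Th(T)}\,S_1+o(T)$. Therefore $S_1-S_2\ge\frac{A-4Th(T)}{T}\,S_1-o(T)=\Omega(T)$, using $A=\Omega(T)$ and your bound on $S_1$. The paper phrases the same step as $\bigl(\frac{2|H^+\cap[1,T]|}{|H^+|}-1\bigr)\sum_{t\in H^+}f_t$ with $|H^+|\le 2T-A+2Th(T)$. With this step added, your proof is complete.
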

The proof of \Cref{claim:counterfactual_smooth} can be found in \Cref{subsec:proof_claim_smooth}.

The column player's algorithm has no swap regret against any smooth learner, so it has swap regret at most $O(TR(T))$ even against this counterfactual smooth learner. Moreover, the first $\tau$ histories in the counterfactual sequence and in the actual sequence are identical, so the column player's first $\tau$ strategies are the same in the two processes.

In the counterfactual sequence, consider the constant swap that maps every action to $L$.

For $t > \tau$, by a similar calculation as in \Cref{lemma:er}, we have $\ee{}{G_L^{(t)}}=(1-\overline w-\beta)(\overline w+\beta)-(\overline w+\beta)(1-\overline w-\beta)=0$ and $\ee{}{G_M^{(t)}}=(\overline w-1)(\overline w+\beta)+\overline w(1 - \overline w-\beta)=-\beta \leq 0$. Therefore, the contribution to swap regret in the second period is non-negative, and furthermore the contribution to swap regret in the first period is $O(TR(T))$. Finally, since the column player's first $\tau$ strategies are the same in the two processes, we have
\begin{align*}
    \ee{}{\sum_{t=1}^\tau \left(\ind{a_t=L}G_L^{(t)} + \ind{a_t=M}G_M^{(t)} \right)} &\geq \ee{}{\sum_{t=1}^\tau G_L^{(t)}}-O(TR(T))
    \\& \geq \frac 34A-o(T)-O(TR(T))=\Omega(T),
\end{align*}
where the last equality follows from \Cref{lemma:period1_LM}. This proves the lemma.

\subsection{\texorpdfstring{Proof of \Cref{lemma:low_period1}}{Proof of Lemma~\ref*{lemma:low_period1}}}
\begin{lemma*}
    We have
    \[ \ee{}{\sum_{t=1}^\tau \left(\ind{a_t=L}G_L^{(t)} + \ind{a_t=M}G_M^{(t)} \right)} \leq \frac{5\eta}{2 \beta}T+ o(T). \]
\end{lemma*}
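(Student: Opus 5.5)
The plan is to sandwich the column player's total realized payoff between an upper bound coming from the contradiction hypothesis and lower bounds coming from no swap regret. The period-one payoff is then what remains after subtracting a lower bound on the period-two payoff.

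\textbf{Step 1 (upper bound on the total payoff).} By \Cref{lemma:er}, the fixed actions $L$, $M$, $R$ have total expected payoffs $-A/4$, $0$ and $0$, so the best-in-hindsight payoff is $0$. Under the standing contradiction hypothesis, the column player's negative best-in-hindsight regret has magnitude less than $2\eta T+O(TR(T))$. Writing $\Pi_L,\Pi_M$ for the expected payoff collected on rounds where $a_t=L$ resp.\ $a_t=M$, this gives
\[ \Pi_L+\Pi_M \le 2\eta T+o(T). \]

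\textbf{Step 2 (per-action lower bounds from swaps).} Swapping $L\mapsto R$ and $M\mapsto R$, where $R$ always pays $0$, yields $\Pi_L\ge -o(T)$ and $\Pi_M\ge -o(T)$. Combined with Step 1, each of $\Pi_L,\Pi_M$ is at most $2\eta T+o(T)$.

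I will also use the identity $\ee{}{G_L^{(t)}}+\ee{}{G_M^{(t)}}=-\beta$ for every $t$, which follows from \Cref{lemma:er}. With it, the swaps $L\mapsto M$ and $M\mapsto L$ say that on the rounds where the player uses $L$ (resp.\ $M$), the payoff of that action is not below about $-\beta/2$ times the number of such rounds. Equivalently, the player cannot accumulate many rounds on which her chosen action is the worse of the two by a margin. The aim is to turn these inequalities into a bound of the form: the expected number of "costly'' rounds is at most $O(\eta T/\beta)+o(T)$. This is where the ratio $\eta/\beta$ in the statement should come from.

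\textbf{Step 3 (period-two payoff is not very negative).} For $t\in H$ with $t>\tau$ we have $w_t\le \overline w+\beta$. Hence $\ee{}{G_M^{(t)}}\ge-\beta$ and $\ee{}{G_L^{(t)}}\le 0$, and the rounds outside $H$ contribute only $O(h(T)T)=o(T)$. I will lower-bound the period-two payoff by
\[ -\beta\cdot\#\{\text{period-two } M\text{-rounds}\}\;-\;(\text{loss on period-two } L\text{-rounds}). \]
I will then bound the loss on period-two $L$-rounds using the Step 2 constraint $\Pi_L\ge -o(T)$. That constraint is split across the periods. Period-one $L$-rounds have $\ee{}{G_L^{(t)}}$ large, since $w_t>\overline w+\beta$ there, and this must be accounted for.

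Similarly, period-one $M$-rounds each cost at least $\beta$ by \Cref{lemma:period1_LM}. Since $\Pi_M\ge-o(T)$ and period-two $M$-rounds earn at most the bounded quantity $\overline w-w_t$, the number of period-one $M$-rounds is at most $(\text{period-two }M\text{ gain})/\beta+o(T)$. Since $\Pi_M\le 2\eta T+o(T)$, that gain is also capped. The target is a period-two payoff of at least $2\eta T-\frac{5\eta}{2\beta}T-o(T)$. Subtracting it from Step 1 then gives period-one payoff $\le \frac{5\eta}{2\beta}T+o(T)$.

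\textbf{Main obstacle.} Swap regret is a global constraint over all $2T$ rounds, not a per-period one. The hard step is therefore Step 3: extracting a per-period lower bound from the global swap inequalities. I would first try a bookkeeping argument. Write each of $\Pi_L$ and $\Pi_M$ as a period-one part plus a period-two part. Use the sign information on each period from \Cref{lemma:period1_LM} and from $w_t\le\overline w+\beta$ in period two, together with $G_L+G_M=-\beta$. Then solve the resulting small system of linear inequalities, whose variables are the four action-by-period payoffs and the corresponding round counts. The constant $5/2$ should drop out of this system. If it does not, I would fall back on a counterfactual-opponent argument in the style of \Cref{claim:counterfactual_smooth}, modifying only period two so that swap regret there can be isolated.
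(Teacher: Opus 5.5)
There is a genuine gap: the inequalities you plan to combine do not imply the bound. Every swap you use moves \emph{out of} $L$ or $M$. The two swaps $L\mapsto M$ and $M\mapsto L$ are in fact redundant: since $G_M^{(t)}=-G_L^{(t)}-\beta$, their regrets are $-2\Pi_L-\beta N_L$ and $-2\Pi_M-\beta N_M$, and these are already at most $o(T)$ once $\Pi_L,\Pi_M\ge -o(T)$. Nothing in your system constrains the rounds on which $R$ is played, and that is fatal.

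Here is a play pattern that satisfies everything you use. Let the column player play $L$ on every round $t\le\tau$. In period two, let her play $L$ (with suitable probabilities) on rounds with $\mathbb{E}[G_L^{(t)}]\le 0$, just until $\Pi_L$ returns to $0$, and play $R$ otherwise. This is possible because the period-two sum of $\mathbb{E}[G_L^{(t)}]$ equals $-A/4$ minus the period-one sum. Then $\Pi_L=\Pi_M=0$, and Step~1, Step~2, both $L\leftrightarrow M$ swaps and all per-period sign facts hold. Yet by \Cref{lemma:period1_LM} the period-one payoff is $\sum_{t\le\tau}\mathbb{E}[G_L^{(t)}]\ge\frac34A-o(T)=6\beta T-o(T)$. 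This exceeds $\frac{5\eta}{2\beta}T=\frac52\beta T$. So your Step~3 plan of bounding the period-two $L$-loss through $\Pi_L\ge -o(T)$ cannot work. Period-one $L$-gains can exactly offset that loss, which is the issue you flag (``this must be accounted for''), and your constraints give no way to account for it.

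The missing idea is the swap $R\mapsto M$, combined with splitting the zero total of the fixed action $M$ by the action actually played. Write $C_{X\to M}=\mathbb{E}[\sum_t \mathbf{1}[a_t=X]\,G_M^{(t)}]$. Then:
\begin{itemize}
\item $C_{L\to M}+\Pi_M+C_{R\to M}=\sum_t\mathbb{E}[G_M^{(t)}]=0$;
\item the $R\mapsto M$ swap gives $C_{R\to M}\le o(T)$;
\item $G_M=-G_L-\beta$ gives $C_{L\to M}=-\Pi_L-\beta N_L$.
\end{itemize}
Hence $\beta N_L\le \Pi_M-\Pi_L+o(T)\le 2\eta T+o(T)$, which is \Cref{lemma:nl_bound}. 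In the example above, $C_{R\to M}=\beta N_L=\Theta(T)$, which is exactly what this swap forbids.

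Once $N_L$ is controlled, the paper needs no period-two analysis. The period-one $M$-part is at most $o(T)$, because $\mathbb{E}[G_M^{(t)}]\le-\beta$ for $t\in H$, $t\le\tau$. The period-one $L$-part is at most $\frac54 N_L\le\frac{5\eta}{2\beta}T+o(T)$, which is where $\frac52=\frac54\cdot 2$ comes from.

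Your subtraction route could also be completed once $N_L$ is bounded. Period two is then at least $-N_L-o(T)$, and $2\eta+\frac{2\eta}{\beta}\le\frac{5\eta}{2\beta}$ since $\beta\le\frac14$. It is, however, a detour around the direct bound.
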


Before presenting the proof, we need the following lemma.
\begin{lemma}\label{lemma:nl_bound}
    We have
    \begin{equation}\label{eqn:lem_nl_bound}
        \sum_{t=1}^{2T} \Pr[a_t=L] \leq \frac{2\eta}{\beta}T+o(T).
    \end{equation}
\end{lemma}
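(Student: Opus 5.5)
The plan is to combine the standing contradiction hypothesis, that the column player's best-in-hindsight regret is at least $-2\eta T-O(TR(T))$, with her no-swap-regret guarantee, using \Cref{lemma:er}. I have not found an argument that closes, and the last step below is where I expect it to fail.

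\emph{Step 1: a total-payoff cap.} By \Cref{lemma:er}, the fixed actions have totals $\ee{}{\sum_t G_L^{(t)}}=-A/4<0$, $\ee{}{\sum_t G_M^{(t)}}=0$ and $G_R^{(t)}\equiv 0$, so the best fixed action earns $0$. The hypothesis therefore caps the column player's expected total payoff:
\[ X+Y \leq 2\eta T+o(T), \quad X:=\sum_t \Pr[a_t=L]\,\ee{}{G_L^{(t)}}, \quad Y:=\sum_t \Pr[a_t=M]\,\ee{}{G_M^{(t)}}. \]
The swaps $L\to R$ and $M\to R$ each cost at most $O(TR(T))$ in swap regret, so $X,Y\geq -o(T)$. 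Hence $X\leq 2\eta T+o(T)$.

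\emph{Step 2: the per-round identity.} From \Cref{lemma:er}, $\ee{}{G_L^{(t)}}+\ee{}{G_M^{(t)}}=-\beta$ for every $t$. With $N_L:=\sum_t\Pr[a_t=L]$, this gives $\sum_t\Pr[a_t=L]\,\ee{}{G_M^{(t)}}=-\beta N_L-X$. The $L\to M$ swap then only gives $X\geq -\beta N_L/2-o(T)$. This is a lower bound on $X$, so it cannot bound $N_L$.

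\emph{Step 3: why the swap inequalities alone do not suffice.} The target $N_L\leq (2\eta/\beta)T$ reads as ``each unit of $L$-play must earn about $\beta/2$.'' Rearranging,
\[ X=\sum_t\Pr[a_t=L](w_t-\overline w)-\beta N_L, \]
so the bound would follow from the inequality $\sum_t\Pr[a_t=L](w_t-\overline w)\geq \tfrac{3}{2}\beta N_L-o(T)$. That inequality does not follow from the swap constraints above. If $L$ is played only in rounds with $w_t\approx\overline w+\beta$, then $L$ earns about $0$ and $M$ would earn about $-\beta$ there. This violates no swap regret for the $L\to M$, $L\to R$ and $M\to R$ swaps and is consistent with Step 1.

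\emph{Step 4: the attempted fix, a counterfactual row player.} I would try the device behind \Cref{lemma:high_period1} and \Cref{claim:counterfactual_smooth}. Let the counterfactual smooth row player follow $w_t$ up to a chosen round and then freeze at a level slightly below $\overline w+\beta$. At that level $L$ is strictly unprofitable relative to $R$, so the column player's swap-regret guarantee against it penalizes $L$-play. The main obstacle is transferring this back to the real run: the two runs agree only before the freeze point, so the penalty controls $L$-play only in rounds up to that point, i.e.\ where $w_t\geq\overline w+\beta$. It says nothing about rounds with $w_t<\overline w+\beta$, where $L$ is already a loss relative to $R$. I have not checked that this yields the stated constant $2\eta/\beta$, and I do not yet see an argument for that step.
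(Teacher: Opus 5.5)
There is a genuine gap: the proposal never reaches the bound, and your Step 3 conclusion that the swap constraints cannot deliver it is wrong. Two ingredients are missing. First, you used the swaps $L\to R$, $M\to R$ and $L\to M$, but not the swap $R\to M$. Second, you used $\sum_{t=1}^{2T}\ee{}{G_M^{(t)}}=0$ from \Cref{lemma:er} only to identify the best fixed action, not as an identity in its own right.

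Together these close the argument without any counterfactual player. Decompose $M$'s fixed-action total according to the action actually played:
\[ C_{L\to M}+Y+C_{R\to M}=0, \qquad C_{a\to M}:=\sum_{t=1}^{2T}\Pr[a_t=a]\,\ee{}{G_M^{(t)}}. \]
Your per-round identity gives $C_{L\to M}=-X-\beta N_L$, hence $\beta N_L=Y+C_{R\to M}-X$. Now bound each term:
\begin{itemize}
\item the contradiction hypothesis gives $Y\le 2\eta T-X+O(TR(T))$;
\item the $R\to M$ swap, since $G_R^{(t)}\equiv 0$, gives $C_{R\to M}\le O(TR(T))$;
\item the $L\to R$ swap gives $X\ge -O(TR(T))$.
\end{itemize}
Therefore $\beta N_L\le 2\eta T-2X+o(T)\le 2\eta T+o(T)$. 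Dividing by $\beta=\Theta(1)$ (see \eqref{eqn:beta_theta1}) gives the lemma. This is precisely the paper's proof.

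In words, your Step 3 scenario is infeasible once $R\to M$ is included. If $L$ is played mostly where it earns about $0$, then $M$ loses about $\beta N_L$ on those rounds. Because $M$'s total is $0$, it must recover about $\beta N_L$ on the rounds where $M$ or $R$ was actually played. Recovering it on $M$-rounds raises the realized payoff $X+Y$, which the hypothesis caps at $2\eta T+o(T)$. Recovering it on $R$-rounds is excluded by the $R\to M$ swap.

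Nothing needs to be shown about how much $L$-play earns; only $X\ge -o(T)$ is used. Your proposed sufficient condition (each unit of $L$-play earning about $\beta/2$), combined with $X\le 2\eta T+o(T)$, would in any case only give $N_L\le (4\eta/\beta)T+o(T)$, not the stated constant. The Step 4 counterfactual row player is not needed here.
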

The proof of \Cref{lemma:nl_bound} is deferred to \Cref{subsec:nl_bound_pf}.

Now we bound the $L$-contribution and the $M$-contribution separately.

First consider the $L$-contribution. 
\begin{align}
    \ee{}{\sum_{t=1}^\tau \ind{a_t=L}G_L^{(t)}} &\leq \ee{}{\sum_{t=1}^\tau \ind{a_t=L}} \max_t\|G_L^{(t)}\|_\infty \nonumber
    \\&\leq \ee{}{\sum_{t=1}^{2T} \ind{a_t=L}} \max_t\|G_L^{(t)}\|_\infty
    \leq \frac{5\eta}{2 \beta}T+o(T). \label{eqn:tau_L}
\end{align}
Here the second inequality follows because the expected number of times the player plays action $L$ in rounds $1, 2, \dotsc, \tau$ is bounded by the expected number over the whole horizon. The final inequality follows because by \Cref{lemma:nl_bound}, the expected total number of times the learner plays action $L$ over the whole time-horizon is $\frac{2c}{\beta}T+o(T)$ and $\|G_L^{(t)}|_\infty \leq \frac 54$ because $\overline w \in [0, 1]$ and $\beta \leq \frac 14$.

Now consider the $M$-contribution. Recall that $\ee{}{G_M^{(t)}}=\overline w-w_t$ by \Cref{lemma:er}. Write
\begin{align}
    \ee{}{\sum_{t=1}^\tau \ind{a_t=M}G_M^{(t)}} &= \ee{}{\sum_{\substack{1 \leq t \leq \tau \\ t \in H^+}} \ind{a_t=M}G_M^{(t)}} + \ee{}{\sum_{\substack{1 \leq t \leq \tau \\ t \notin H^+}} \ind{a_t=M}G_M^{(t)}} \nonumber
    \\&\leq \ee{}{\sum_{\substack{1 \leq t \leq \tau \\ t \in H^+}} \ind{a_t=M}G_M^{(t)}}+|\{t \leq \tau:t \notin H\}| \cdot \max_t\|G_M^{(t)}\|_\infty \nonumber
    \\& = \sum_{\substack{1 \leq t \leq \tau \\ t \in H^+}} \Pr[a_t=M] \ee{}{G_M^{(t)}}+|\{t \leq \tau:t \notin H\}| \cdot \max_t\|G_M^{(t)}\|_\infty \nonumber
    \\&\leq 0+|\{t \leq \tau:t \notin H\}| \cdot \max_t\|G_M^{(t)}\|_\infty \nonumber
    \\&\leq (2T-|H|) \cdot \max_t\|G_M^{(t)}\|_\infty \nonumber
    \\&\leq 2Th(T) \cdot O(1)=o(T). \label{eqn:tau_M}
\end{align}
Here the first inequality follows from the proof of \Cref{lemma:period1_LM}, in which we show every $t \in H$ and $t \leq \tau$ belongs to $H^+$. The second equality follows from linearity of expectation, and the fact that $a_t$ and $G_M^{(t)}$ are independent conditional on the past since $a_t$ is chosen using only the history before round $t$, while $G_M^{(t)}$ is determined by the row player's current-round randomization. The second inequality follows by the definition of $H^+$ and $\ee{}{G_M^{(t)}}=\overline w-w_t$. The final inequality follows from the assumption of smoothness $|H| \geq 2T(1-h(T))$ and the absolute values of all payoffs in the constructed game are $O(1)$, as explained above. The lemma now follows by adding \Cref{eqn:tau_L,eqn:tau_M}.

\subsection{\texorpdfstring{Proof of \Cref{lemma:period1_LM}}{Proof of Lemma~\ref*{lemma:period1_LM}}}\label{subsec:proof_period1_LM}
\begin{lemma*}
    The set $H^+$ is non-empty and $\tau$ is well-defined for all sufficiently large $T$. Moreover, we have
    \[ \ee{}{\sum_{t=1}^\tau G_L^{(t)}} \geq \frac 34A-o(T)= \Omega(T), \quad \ee{}{G_M^{(t)}} \leq -\beta, \]
    for every $t \leq \tau$ and $t \in H$.
\end{lemma*}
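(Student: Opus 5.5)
The plan is to derive all three assertions from the closed form $\ee{}{G_L^{(t)}} = w_t-\overline w-\beta$ and $\ee{}{G_M^{(t)}}=\overline w - w_t$ of \Cref{lemma:er}, the monotonicity of $\{w_t\}$ on $H$, and the bound $|H|\ge 2T(1-h(T))$. All payoffs are $O(1)$, so rounds outside $H$ contribute only $O(Th(T))=o(T)$ to any sum.

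\emph{Step 1: $A=\Omega(T)$.} Since $\sum_{t=1}^{2T}(w_t-\overline w)=0$, the positive parts and the negative parts of $w_t-\overline w$ have equal total mass $A$. Moreover, $\overline w=(P+Q)/(2T)$, so
\[ \sum_{t=1}^{T}(w_t-\overline w)=P-\tfrac{P+Q}{2}=\tfrac{P-Q}{2}. \]
Hence $A\ge \sum_{t\le T}(w_t-\overline w)_+\ge (P-Q)/2=\Omega(T)$ by smoothness. In particular $\beta=A/(8T)$ is bounded below by a positive constant.

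\emph{Step 2: the key identity and nonemptiness of $H^+$.} Using $(x-\beta)_+\ge x_+-\beta$, I will show
\[ \sum_{t=1}^{2T}(w_t-\overline w-\beta)_+\ \ge\ \sum_{t=1}^{2T}(w_t-\overline w)_+ - 2T\beta \;=\; A-\tfrac A4=\tfrac34 A. \]
Dropping the at most $2Th(T)$ indices outside $H$, each contributing at most $O(1)$, gives
\[ \sum_{t\in H}(w_t-\overline w-\beta)_+\ge \tfrac34A-o(T). \]
This is $\Omega(T)>0$ for large $T$, so some $t\in H$ has $w_t>\overline w+\beta$. Thus $H^+\neq\emptyset$ and $\tau=\max H^+$ is well-defined.

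\emph{Step 3: structure of period one on $H$.} Take $t\in H$ with $t\le\tau$. Since $\{w_s:s\in H\}$ is weakly decreasing and $\tau\in H^+$, we get $w_t\ge w_\tau>\overline w+\beta$. So $t\in H^+$, and $\ee{}{G_M^{(t)}}=\overline w-w_t<-\beta$, which is the second assertion. Conversely, every $t\in H$ with $t>\tau$ has $w_t\le \overline w+\beta$ by maximality of $\tau$. Hence on $H$, the terms $w_t-\overline w-\beta$ are positive exactly on $\{t\le\tau\}$ and nonpositive afterwards. This gives
\[ \sum_{t\in H,\,t\le\tau}(w_t-\overline w-\beta)=\sum_{t\in H}(w_t-\overline w-\beta)_+\ge \tfrac34A-o(T). \]
Adding back the indices $t\le\tau$ with $t\notin H$ costs another $o(T)$. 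Therefore
\[ \ee{}{\sum_{t=1}^\tau G_L^{(t)}}=\sum_{t=1}^\tau(w_t-\overline w-\beta)\ge\tfrac34A-o(T), \]
which is $\Omega(T)$ by Step 1.

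I expect the only delicate point to be Step 3. Its job is to convert the "positive part" sum into a clean prefix sum up to $\tau$, and this is exactly where the weakly decreasing subsequence from \Cref{def:smooth} is used. The non-monotone $o(T)$ exceptional rounds must be handled by crude $O(1)$ bounds, both in the nonemptiness argument of Step 2 and in the final prefix sum of Step 3.
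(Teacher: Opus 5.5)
Your proposal is correct and follows essentially the same route as the paper: you lower-bound $A$ via $\sum_{t\le T}(w_t-\overline w)=(P-Q)/2$, use $(x-\beta)_+\ge x_+-\beta$ to get mass $\tfrac34 A-O(Th(T))$ on $H^+$, and then use monotonicity on $H$ to identify $H\cap\{1,\dots,\tau\}$ with $H^+$, absorbing the $O(Th(T))$ non-monotone rounds with crude bounds. Your identity equating the positive-part sum over $H$ with the prefix sum over $H\cap\{t\le\tau\}$ is only a slightly cleaner phrasing of the paper's step that sums over $H^+$.
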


We first make a few observations on the properties of the values $w_t$. Consider $\sum_{t=1}^T(w_t - \overline w)$. Write
\begin{equation}\label{eqn:first_half_sum}
    \sum_{t=1}^T(w_t - \overline w)=P-T \overline w=P - T \cdot \frac{P+Q}{2T} = \frac{P-Q}{2} = \Omega(T).
\end{equation}
Here the first equality follows since $P = \sum_{t=1}^Tw_t$. The second equality follows since $\overline w = \frac{P+Q}{2T}$ where $Q = \sum_{t=T+1}^{2T}w_t$. The final equality follows from the assumption of smoothness.

Recall that $A = \sum_{t:w_t - \overline w>0}(w_t - \overline w)$. Write
\[ \sum_{t=1}^T(w_t - \overline w) \leq \sum_{\substack{1 \leq t \leq T \\ w_t - \overline w>0}}(w_t - \overline w) \leq \sum_{\substack{1 \leq t \leq 2T \\ w_t - \overline w>0}}(w_t - \overline w)=A. \]
The first inequality follows because the first-half sum $\sum_{t=1}^T(w_t - \overline w)$ may contain both positive and negative terms, and removing the negative terms can only increase the sum. The second inequality follows by adding the positive contributions in $T+1, \dotsc, 2T$. Together with \Cref{eqn:first_half_sum} we conclude $A = \Omega(T)$. As a fact that will be used in other proofs, we also have $\beta = \frac{A}{8T} = \Theta(1)$. Indeed, from the fact that $A = \Omega(T)$, which we have just shown, there exists a constant $C>0$ such that for all sufficiently large $T$, $A \geq CT$. Therefore, for a constant $C' = \frac C8>0$ and all sufficiently large $T$
\begin{equation}\label{eqn:beta_theta1}
    C' \leq \beta \leq \frac 14
\end{equation}
as we also have $A \leq 2T$.

Now observe that
\begin{align}
    \sum_{t \in H^+}(w_t - \overline w - \beta) &\geq \sum_{t:w_t>\overline w + \beta}(w_t-\overline w-\beta)-2Th(T) \nonumber \\& \geq A-2T \beta-2Th(T) = \frac 34A-2Th(T). \label{eqn:H_plus_bound}
\end{align}
The first inequality follows because there are at most $2Th(T)$ rounds outside $H$ by the smoothness assumption, and each $w_t - \overline w$ contributes at most one. The second inequality follows from the fact that $\max\{x-\beta, 0\} \geq \max\{x, 0\}-\beta$ and the definition of $A$. The first equality follows from the definition of $
\beta$. In particular, $\sum_{t \in H^+}(w_t - \overline w - \beta) \geq \Omega(T)$ as $A = \Omega(T)$ and $h(T)=o(1)$. Therefore, $H^+$ is non-empty and $\tau = \max H^+$ is well-defined.

Now we show the property regarding $G_M^{(t)}$. By the smoothness assumption, if $t \leq \tau$ and $t \in H$ then $w_t \geq w_\tau$. Since $\tau \in H^+$, we have $w_\tau>\overline w+\beta$. Therefore $w_t \geq w_\tau>\overline w+\beta$ and $\ee{}{G_M^{(t)}}=\overline w-w_t \leq -\beta$. (In particular $w_t>\overline w+\beta$ and $t \in H^+$, so every $t \in H$ and $t \leq \tau$ belongs to $H^+$.) This proves the second claim.

It remains to lower bound the $L$-payoff before $\tau$.

Recall that $\ee{}{G_L^{(t)}}=w_t - \overline w - \beta$ by \Cref{lemma:er}. We have
\begin{align*}
    \ee{}{\sum_{t=1}^\tau G_L^{(t)}} &= \sum_{t=1}^\tau(w_t - \overline w - \beta) \\&\geq \sum_{t \in H^+}(w_t - \overline w - \beta)-2Th(T)
    \\& \geq \frac 34A-4Th(T)=\Omega(T).
\end{align*}
For the first inequality, note we have just shown that each $t \in H$ and $t \leq \tau$ belongs to $H^+$. So the only potential negative terms in the sum come from rounds not in $H$, of which there are at most $2Th(T)$ rounds in the whole horizon. The second inequality follows from \Cref{eqn:H_plus_bound}. The lemma follows by noting that $A = \Omega(T)$ and $h(T)=o(1)$.

\subsection{\texorpdfstring{Proof of \Cref{claim:counterfactual_smooth}}{Proof of Claim~\ref*{claim:counterfactual_smooth}}}\label{subsec:proof_claim_smooth}
\begin{claim*}
    The counterfactual row player constructed in \Cref{subsec:pf_lemma_high_period1} is a smooth learner.
\end{claim*}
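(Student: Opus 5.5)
The plan is to verify \Cref{def:smooth} directly for the counterfactual weight sequence $\tilde w_t$, where $\tilde w_t = w_t$ for $t\le\tau$ and $\tilde w_t=\overline w+\beta$ for $t>\tau$. The horizon is $2T$ and the reward gap is the same $\eps$ as for the original smooth learner. Read operationally, the counterfactual player is simply the row player's distribution sequence, clipped after $\tau$. Since her payoffs in $G_{2T}$ depend only on her own action, the question of whether she is a smooth learner reduces to the two properties of $\{\tilde w_t\}$ that \Cref{def:smooth} requires. These are a long weakly decreasing subsequence and an $\Omega(T)$ gap between the two halves.

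\textbf{Monotone subsequence.} I will take $\tilde H = (H\cap\{1,\dots,\tau\})\cup\{\tau+1,\dots,2T\}$, whose size is at least $|H| \ge 2T(1-h(T))$.

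On $H\cap[1,\tau]$ the values $\tilde w_t=w_t$ are weakly decreasing because $H$ is. By the proof of \Cref{lemma:period1_LM}, every such $t$ lies in $H^+$, so $w_t\ge w_\tau>\overline w+\beta$.

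After $\tau$ the sequence is the constant $\overline w+\beta$, which lies below every earlier value. Hence $\{\tilde w_t: t\in\tilde H\}$ is weakly decreasing with the same error function $h$. The initial condition $\tilde w_1=w_1=\Omega(1)$ is inherited whenever $\tau\ge1$, and $\tau\ge1$ holds for large $T$ by \Cref{lemma:period1_LM}.

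\textbf{Half-sum gap.} This is the main obstacle: I must show $\sum_{t\le T}\tilde w_t-\sum_{t>T}\tilde w_t=\Omega(T)$. I split into two cases on $\tau$.

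If $\tau\le T$, then every term of the second half equals $\overline w+\beta$. In the first half, $\tilde w_t\ge\overline w+\beta$ except possibly off $\tilde H$. The gap is therefore at least
\[
\sum_{t\le\tau}(w_t-\overline w-\beta)-O(Th(T)).
\]
This is at least $\tfrac34A-O(Th(T))=\Omega(T)$ by the estimate in the proof of \Cref{lemma:period1_LM}, using \Cref{eqn:H_plus_bound}.

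If $\tau>T$, then for $t\in(T,\tau]$ we have $\tilde w_t=w_t$, which differs from the original sequence. I compare with the original gap instead. In the first half $\tilde w=w$ exactly. In the second half, $\tilde w_t\le w_t$ on $\tilde H$ for $t>\tau$: if such $t$ lies in $H$ but not in $H^+$, then $w_t\le \overline w+\beta$, which goes the wrong direction.

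So in this case I will instead use monotonicity directly. On $\tilde H$ the sequence is decreasing, so $\sum_{t\le T}\tilde w_t \ge T\,\tilde w_{T}$ up to $O(Th)$, and $\sum_{t>T}\tilde w_t\le T\,\tilde w_{T}$ up to $O(Th)$. This leaves a deficit that I need to show is $\Omega(T)$.

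A cleaner route is to lower bound the gap by
\[
\sum_{t\le T}(\tilde w_t-\overline w-\beta)-\sum_{T<t\le\tau}(w_t-\overline w-\beta).
\]
Since everything in $(T,\tau]\cap H$ exceeds $\overline w+\beta$ and the sequence decreases, I would pair $t$ with $t+T$. Each paired difference $\tilde w_t-\tilde w_{t+T}\ge0$ on $\tilde H$. For $t$ with $t+T>\tau$, the difference equals $w_t-\overline w-\beta$. Summing these differences captures $\sum_{t\le 2T-\tau}(w_t-\overline w-\beta)$ plus nonnegative terms.

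If this pairing argument proves too weak, I would fall back on an averaging bound. I would choose the $\Omega(1)$ fraction of $t\le T$ where the original $w_t-w_{t+T}$ is large; such a fraction exists because $P-Q=\Omega(T)$ and the differences are bounded. For those $t$, I would show that the clipping either preserves the difference or leaves at least $w_t-\overline w-\beta$, which is itself $\Omega(1)$ on a linear number of rounds by \Cref{eqn:H_plus_bound}. This is the step I expect to need the most care, since constants must be tracked through $h(T)=o(1)$.
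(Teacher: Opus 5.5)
Your monotone-subsequence argument and your treatment of the case $\tau\le T$ are correct and match the paper; there the gap is in fact exactly $\sum_{t\le\tau}(w_t-\overline w-\beta)$. The case $\tau>T$, however, is not proved, and the pairing you propose isolates the wrong block.

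Write $x_t=w_t-\overline w-\beta$. This quantity is non-negative and weakly decreasing on $H\cap\{1,\dots,\tau\}$. Pairing $t$ with $t+T$ gives $\tilde w_t-\tilde w_{t+T}=x_t$ for $t\in(\tau-T,T]$. These are the \emph{last} $2T-\tau$ rounds of the first half, not the rounds $t\le 2T-\tau$, and these $x_t$ can all be arbitrarily close to $0$.

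For an example, take $w_t=1$ for $t\le T/5$. For $T/5<t\le 3T/2$, take $w_t=c$, a constant slightly above the resulting $\overline w+\beta$ (a consistent choice near $c\approx 0.34$ exists). Take $w_t=0$ afterwards. Then $\tau=3T/2$, and the entire $\Omega(T)$ gap comes from the differences $w_t-w_{t+T}$ with $t\le T/5$. Your argument discards exactly these as ``nonnegative terms''.

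The same example defeats your fallback. For $t\in(T/2,T]$ the original difference $w_t-w_{t+T}=c$ is $\Omega(1)$, but the clipped difference $x_t$ is tiny. Meanwhile, the rounds where \eqref{eqn:H_plus_bound} guarantees $x_t=\Omega(1)$ are the early ones, and you never relate these two sets of rounds.

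The lower bound you wrote, $\sum_{t\le T}\tilde w_t-\sum_{t>T}\tilde w_t\ge\sum_{t\le 2T-\tau}x_t-O(Th(T))$, is true, but it follows from a different matching. Pair $T+j$ with $2T-\tau+j$ for $j=1,\dots,\tau-T$; this leaves exactly $t\le 2T-\tau$ unpaired. Even with this fix, two ingredients are missing.

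\emph{First}, you need $2T-\tau=\Omega(T)$; otherwise the unpaired block is too short. The paper shows $\tau\le 2T-A+2Th(T)$ as follows. We have $\sum_{t:w_t\le\overline w}(\overline w-w_t)=A$ with every summand at most $1$. So at least $A-2Th(T)$ rounds of $H$ have $w_t\le\overline w$, and each of these must come after $\tau$.

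\emph{Second}, you need the averaging property of non-negative decreasing sequences: the first $m$ terms have average at least that of the first $n\ge m$ terms. This is combined with $\sum_{t\in H^+}x_t\ge\frac34A-2Th(T)$ from \eqref{eqn:H_plus_bound}.

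The paper uses precisely these two facts, in the form $2\sum_{t\le T,\,t\in H^+}x_t-\sum_{t\in H^+}x_t\ge\bigl(2|H^+\cap\{1,\dots,T\}|/|H^+|-1\bigr)\sum_{t\in H^+}x_t$. It then uses $|H^+\cap\{1,\dots,T\}|\ge T-2Th(T)$ (valid because $\tau>T$) and $|H^+|\le 2T-A+2Th(T)$. This yields the factor $\frac{A-6Th(T)}{2T-A+2Th(T)}=\Omega(1)$.
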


Let
\[ \tilde w_t = \begin{cases} w_t & (t \leq \tau), \\ \overline w+\beta & (t>\tau). \end{cases} \]

First, the sequence is weakly decreasing over $(H \cap \{1, 2, \dotsc, \tau\}) \cup \{\tau+1, \tau+2, \dotsc, 2T\}$ because
\begin{itemize}
    \item it is decreasing over $H \cap \{1, 2, \dotsc, \tau\}$ by definition, and
    \item for every $t \leq \tau$ and $t \in H$ we have $w_t \geq w_\tau>\overline w+\beta$ (proven in the proof of \Cref{lemma:period1_LM}).
\end{itemize}
This set has size at least $2T-2Th(T)$ by the definition of $H$.

It remains to prove the difference between the first-half and second-half sums. We first observe that $\tau$ cannot be too large. In fact, since $\sum_{t=1}^{2T}(w_t-\overline w)=0$ we have
\[ \sum_{t:w_t\leq \overline w}(\overline w-w_t)=A. \]
Because each summand is at most one, there are at least $A$ rounds with $w_t \leq \overline w$ and at least $A-2Th(T)$ of them belong to $H$. Furthermore, every such round occurs after $\tau$, since every $t \in H$ with $t \leq \tau$ satisfies $w_t>\overline w+\beta$. Thus
\begin{equation}\label{eqn:small_tau}
    2T-\tau \geq A-2Th(T), \quad \text{or equivalently} \quad \tau \leq 2T-A+2Th(T).
\end{equation}
In particular, $|H^+| \leq 2T-A+2Th(T)$ because $\tau = \max H^+$.

We then proceed in two cases.

\paragraph{$\tau \leq T$}
Then we have
\begin{align*}
    \sum_{t=1}^T\tilde w_t-\sum_{t=T+1}^{2T}\tilde w_t &= \sum_{t=1}^\tau w_t+\sum_{t=\tau+1}^T(\overline w+\beta)-\sum_{t=T+1}^{2T}(\overline w+\beta)
    \\&=\sum_{t=1}^\tau(w_t-\overline w-\beta)=\Omega(T).
\end{align*}
Here the final equality follows by a derivation similar to that of \Cref{eqn:H_plus_bound}.

\paragraph{$\tau>T$}
We have
\begin{align*}
    \sum_{t=1}^T\tilde w_t - \sum_{t=T+1}^{2T}\tilde w_t&=\sum_{t=1}^Tw_t-\sum_{t=T+1}^\tau w_t - \sum_{t=\tau+1}^{2T}(\overline w+\beta)
    \\& = \sum_{t=1}^T(w_t - \overline w-\beta) - \sum_{t=T+1}^\tau(w_t - \overline w - \beta)
    \\& = \sum_{\substack{t \leq T \\ t \in H^+}}(w_t - \overline w - \beta) - \sum_{\substack{t>T \\ t \in H^+}}(w_t - \overline w - \beta)
    \\&+\sum_{\substack{t \leq T \\ t \notin H}}(w_t - \overline w - \beta) - \sum_{\substack{\tau>t>T \\ t \notin H}}(w_t - \overline w - \beta)
    \\& \geq \sum_{\substack{t \leq T \\ t \in H^+}}(w_t - \overline w - \beta) - \sum_{\substack{t>T \\ t \in H^+}}(w_t - \overline w - \beta)-2Th(T).
\end{align*}
The second equality follows because we can distribute the $2T-\tau$ subtractions of $\overline w+\beta$ to the first two summations. In the third equality we use again the fact that every $t \leq \tau$ and $t \in H$ belongs to $H^+$. The final inequality follows from the definition of $H$.

It remains to bound $\sum_{\substack{t \leq T \\ t \in H^+}}(w_t - \overline w - \beta) - \sum_{\substack{t>T \\ t \in H^+}}(w_t - \overline w - \beta)$. Write
\begin{align*}
    \sum_{\substack{t \leq T \\ t \in H^+}}(w_t - \overline w - \beta) - \sum_{\substack{t>T \\ t \in H^+}}(w_t - \overline w - \beta)&=2 \sum_{\substack{t \leq T \\ t \in H^+}}(w_t - \overline w - \beta) - \sum_{t \in H^+}(w_t - \overline w - \beta)
    \\&\geq \left(\frac{2|H^+ \cap \{1, 2, \dotsc, T\}|}{|H^+|}-1 \right) \sum_{t \in H^+}(w_t-\overline w -\beta)
    \\&\geq \left(\frac{2(T-2Th(T))}{2T-A+2Th(T)}-1 \right) \sum_{t \in H^+}(w_t - \overline w - \beta)
    \\&= \frac{A-6Th(T)}{2T-A+2Th(T)} \sum_{t \in H^+}(w_t - \overline w - \beta)
    =\Omega(T).
\end{align*}
Here the first inequality follows because $w_t-\overline w-\beta$ is decreasing and non-negative for $t \in H^+$. The second inequality uses $2T-A+2Th(T) \geq |H^+| \geq T-2Th(T)$ by \Cref{eqn:small_tau} and the definition of $H^+$. To see the final equality, observe that $A= \Omega(T)$ and hence $\frac{A-6Th(T)}{2T-A+2Th(T)}=\Omega(1)$, and by \Cref{eqn:H_plus_bound}, $\sum_{t \in H^+}(w_t - \overline w - \beta)=\Omega(T)$. This concludes the proof of the claim.

\subsection{\texorpdfstring{Proof of \Cref{lemma:nl_bound}}{Proof of Lemma~\ref*{lemma:nl_bound}}}\label{subsec:nl_bound_pf}
\begin{lemma*}
    We have
    \[ \sum_{t=1}^{2T} \Pr[a_t=L] \leq \frac{2\eta}{\beta}T+o(T). \]
\end{lemma*}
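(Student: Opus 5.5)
The plan is to play the column player's swap-regret guarantees against the near-zero best-in-hindsight regret assumed in the contradiction hypothesis. The key structural fact is that $G_L^{(t)}+G_M^{(t)}=-\beta$ holds pointwise, not just in expectation. Indeed $U_L+U_M=-\beta$ and $D_L+D_M=-\beta$.

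Notation: $N_X=\sum_{t}\Pr[a_t=X]$, $S_X=\ee{}{\sum_t \ind{a_t=X}G_X^{(t)}}$, and $S_{X\to Y}=\ee{}{\sum_t \ind{a_t=X}G_Y^{(t)}}$. By conditional independence of $a_t$ and the row player's current randomization (as used in the proof of \Cref{lemma:low_period1}), $S_{X\to Y}=\sum_t \Pr[a_t=X]\,\ee{}{G_Y^{(t)}}$. So each pairwise swap $X\to Y$ has expected gain $S_{X\to Y}-S_X\le O(TR(T))$ by no swap regret.

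\emph{Step 1 (upper bounds on contributions).} By \Cref{lemma:er}, the fixed actions $L,M,R$ have total expected payoff $-A/4<0$, $0$ and $0$, so the best-in-hindsight payoff is $0$. The contradiction hypothesis then gives
\[ S_L+S_M\le 2\eta T+O(TR(T)). \]
The swaps $L\to R$ and $M\to R$ give $S_L,S_M\ge -O(TR(T))$. Hence each of $S_L,S_M$ is at most $2\eta T+o(T)$.

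\emph{Step 2 (lower bound on $S_{L\to M}$ via the $R$-rounds).} The swap $R\to M$ gives $S_{R\to M}\le o(T)$. Since $\sum_t\ee{}{G_M^{(t)}}=0$ by \Cref{lemma:er}, summing over the three recommended actions yields $S_{L\to M}+S_M+S_{R\to M}=0$. Therefore
\[ S_{L\to M}=-S_M-S_{R\to M}\ge -S_M-o(T)\ge -2\eta T-o(T), \]
where the last inequality uses the upper bound on $S_M$ from Step 1.

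\emph{Step 3 (conclude).} Pointwise $G_M^{(t)}=-\beta-G_L^{(t)}$, so $S_{L\to M}=-\beta N_L-S_L$. Using $S_L\ge -o(T)$ from Step 1, this gives $S_{L\to M}\le -\beta N_L+o(T)$. Combining with Step 2:
\[ -\beta N_L+o(T)\ge -2\eta T-o(T). \]
Since $\beta=\Theta(1)$ by \Cref{eqn:beta_theta1}, dividing by $\beta$ gives $N_L\le \frac{2\eta}{\beta}T+o(T)$, which is \eqref{eqn:lem_nl_bound}.

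The main obstacle is choosing the right swaps. The natural swaps $L\to M$ and $M\to L$ only give lower bounds on $S_L$, which point the wrong way. The trick is to obtain the lower bound on $S_{L\to M}$ indirectly, through the zero total of $G_M$ and the $R\to M$ swap. Care is also needed to track that all the $O(TR(T))$ error terms are $o(T)$.
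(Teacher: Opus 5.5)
Your proof is correct and follows essentially the same route as the paper. Both arguments combine the best-in-hindsight bound $S_L+S_M\le 2\eta T+O(TR(T))$, the $L\to R$ and $R\to M$ swaps, the identity $S_{L\to M}+S_M+S_{R\to M}=0$, and the pointwise relation $G_M^{(t)}=-G_L^{(t)}-\beta$. The only difference is bookkeeping: the paper keeps $L_\text{act}$ symbolic, derives $\beta N_L\le -2L_\text{act}+2\eta T+O(TR(T))$, and then applies $L_\text{act}\ge -O(TR(T))$ once, whereas you apply it twice; your $M\to R$ swap is harmless but not actually needed.
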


Write
\begin{align*}
    L_\text{act} = \ee{}{\sum_{t=1}^{2T} \ind{a_t=L}G_L^{(t)}}, \quad & M_\text{act} = \ee{}{\sum_{t=1}^{2T} \ind{a_t=M}G_M^{(t)}}, \\
    C_{L \to M} = \ee{}{\sum_{t=1}^{2T} \ind{a_t=L}G_M^{(t)}}, \quad & C_{R \to M} = \ee{}{\sum_{t=1}^{2T} \ind{a_t=R}G_M^{(t)}},
\end{align*}
and
\[ N_L = \sum_{t=1}^{2T} \Pr[a_t=L]. \]

By \Cref{lemma:er}, the full-horizon fixed-action payoffs are
\[ \sum_{t=1}^{2T} \ee{}{G_L^{(t)}}=-\frac A4, \quad \sum_{t=1}^{2T} \ee{}{G_M^{(t)}}=0, \quad \sum_{t=1}^{2T}G_R^{(t)}=0. \]
Thus the best fixed action is $M$ (or $R$), with payoff zero. By the assumption for contradiction, the column player has non-negative best-in-hindsight regret on the actual $2T$-round play up to $2 \eta T+O(TR(T))$
\begin{equation}\label{eqn:lact_mact_ner}
    L_\text{act}+M_\text{act} \leq 2\eta T+O(TR(T)).
\end{equation}
Now, swap regret for the full horizon swap $L \to R$ gives
\begin{equation}\label{eqn:lact_bound}
    0-L_\text{act} \leq O(TR(T)), \quad \text{or equivalently} \quad L_\text{act} \geq -O(TR(T)).
\end{equation}
Similarly, swap regret for the full-horizon swap $R \to M$ gives
\begin{equation}\label{eqn:crm_bound}
    C_{R \to M}-0 \leq O(TR(T)).
\end{equation}

Observe that, since $\ind{a_t=L} + \ind{a_t=M} + \ind{a_t=R}=1$, the fixed payoff from action $M$ decomposes according to the action actually played by the column player:
\begin{equation}\label{m_total_equality}
    C_{L \to M}+M_\text{act}+C_{R \to M} = \ee{}{\sum_{t=1}^{2T}G_M^{(t)}}=0.
\end{equation}
Combining \Cref{eqn:lact_mact_ner,eqn:crm_bound,m_total_equality}, we obtain
\begin{align}
    C_{L \to M}&=0-M_\text{act}-C_{R \to M} \nonumber
    \\& \geq 0-(2\eta T+O(TR(T))-L_\text{act})-O(TR(T)) \nonumber
    \\&=L_\text{act}-2\eta T-O(TR(T)). \label{eqn:clm_bound}
\end{align}
Here the first equality follows from \Cref{m_total_equality}. The first inequality follows by the upper bounds in \Cref{eqn:lact_mact_ner,eqn:crm_bound}.

We now note the special algebraic relation between $G_L^{(t)}$ and $G_M^{(t)}$. Recall that $U_L=1-\overline w - \beta, U_M = \overline w-1$ and $D_L=-\overline w - \beta, D_M=\overline w$, so $U_M=-U_L-\beta$ and $D_M=-D_L - \beta$. Therefore, for every round $t$
\begin{equation}\label{eqn:gmgl_relationship}
    G_M^{(t)}=-G_L^{(t)} - \beta.
\end{equation}
In particular, in the rounds in which the column player actually plays $L$, this gives
\begin{align}
    C_{L \to M} &= \ee{}{\sum_{t=1}^{2T} \ind{a_t=L}G_M^{(t)}} \nonumber
    \\&=-\ee{}{\sum_{t=1}^{2T} \ind{a_t=L}G_L^{(t)}} - \beta \sum_{t=1}^{2T} \Pr[a_t=L] \nonumber
    \\&=-L_\text{act} - \beta N_L. \label{eqn:clm_bound_nl}
\end{align}
Here, for the second equality, we use \Cref{eqn:gmgl_relationship} and $\ee{}{\ind A} = \Pr[A]$ for an event $A$.

Combining \Cref{eqn:clm_bound,eqn:clm_bound_nl}, we obtain
\[ C_{L \to M}=-L_\text{act} - \beta N_L \geq L_\text{act}-2\eta T-O(TR(T)), \]
which rearranges into
\[ \beta N_L \leq -2L_\text{act}+2\eta T+O(TR(T)). \]
By \Cref{eqn:lact_bound} $L_\text{act} \geq -O(TR(T))$. Multiplying both sides by $-2$ gives $-2L_\text{act} \leq 2 \cdot O(TR(T))=O(TR(T))$. Hence $\beta N_L \leq 2\eta T+O(TR(T))$. Now, recall that we have shown $C' \leq \beta \leq \frac 14$ for all sufficiently large $T$ (in \Cref{eqn:beta_theta1}), so we can divide both sides by $\beta = \Theta(1)$ to obtain
\[ N_L \leq \frac{2\eta}{\beta}T+ O(TR(T)). \]
The lemma follows by noting that $R(T)=o(1)$ and $N_L \geq 0$.

\end{document}